\newcommand{\Var}{\mathrm{Var}}
\newcommand{\bra}[1]{\langle{#1}|}
\newcommand{\ket}[1]{|{#1}\rangle}
\newcommand{\braket}[2]{\langle{#1}|{#2}\rangle}
\newcommand{\Tr}{\mathrm{Tr}}
\definecolor{blue}{rgb}{0,0.2,1}
\definecolor{red}{rgb}{0.9,0,0}
\newcommand{\Ord}[1]{\mathcal{O}\left( #1 \right)}
\newcommand{\tOrd}[1]{\tilde {\mathcal{O}}\left( #1 \right)}
\newcommand{\Tht}[1]{\Theta \left( #1 \right)}
\theoremstyle{plain}
\newtheorem{prop}{Proposition}
\newtheorem{lemma}{Lemma}
\newtheorem{defn}{Definition}
\newtheorem{assume}{Assumption}
\def\be{\begin{eqnarray}}
\def\ee{\end{eqnarray}}
\definecolor{Pr}{rgb}{0.4,0.3,0.9}
\DeclarePairedDelimiter{\ceil}{\lceil}{\rceil}
\newcommand{\norm}[1]{\left\lVert#1\right\rVert}
\newcommand{\realpar}[1]{{\rm Re}\left\{ #1\right\} }
\newcommand{\impar}[1]{{\rm Im}\left\{ #1\right\} }
\begin{document}

\title{Near-term quantum algorithms for linear systems of equations}

\author{Hsin-Yuan Huang }
\email{hsinyuan@caltech.edu}
\affiliation{Institute for Quantum Information and Matter, California Institute of Technology, USA}
\affiliation{Department of Computing and Mathematical Sciences, California Institute of Technology, USA}
\author{Kishor Bharti }
\email{kishor.bharti1@gmail.com}
\affiliation{Centre for Quantum Technologies, National University of Singapore, Singapore}
\author{Patrick Rebentrost }
\email{cqtfpr@nus.edu.sg}
\affiliation{Centre for Quantum Technologies, National University of Singapore, Singapore}
\date{\today}

\begin{abstract}
Solving linear systems of equations is essential for many problems in science and technology, including problems in machine learning.
Existing quantum algorithms have demonstrated the potential for large speedups, but the required quantum resources are not immediately available on near-term quantum devices.
In this work, we study near-term quantum algorithms for linear systems of equations of the form $Ax = b$.
We investigate the use of variational algorithms and analyze their optimization landscapes.
%We found that
There exist types of linear systems for which variational algorithms designed to avoid barren plateaus, such as properly-initialized imaginary time evolution and adiabatic-inspired optimization,
%still
suffer from a different plateau problem.
%In the aim
To circumvent this issue, we design near-term
algorithms based on a core idea: the classical combination of variational quantum states (CQS).
We exhibit several provable guarantees for these algorithms, supported by the representation of the linear system on a so-called Ansatz tree. The CQS approach and the Ansatz tree also admit the systematic application of heuristic approaches, including a gradient-based search. We have conducted numerical experiments solving linear systems as large as $2^{300} \times 2^{300}$ by considering cases where we can simulate the quantum algorithm efficiently on a classical computer.
These experiments demonstrate the algorithms' ability to scale to system sizes within reach in near-term quantum devices of about $100$-$300$ qubits.
\end{abstract}
\maketitle

\section{Introduction}

Quantum computing promises speedups for a set of problems including integer factoring and search. Speedups have also been discussed for finding approximate solutions to linear systems of equations and  convex optimization \cite{Harrow2009,Apeldoorn2018,Chakrabarti2018,vAG18,LCW19,Brandao2019}. Many of these algorithms require a large amount of low-decoherence and fully connected quantum bits, beyond the reach of near-term available quantum computing hardware.
As near-term quantum devices approach sizes of more than 50 qubits, a large amount of research has been devoted to finding tasks where such devices can outperform classical computers.
One area of research concerns so-called ``quantum supremacy'' \cite{preskill2012quantum, aaronson2016complexity, harrow2017quantum, neill2018blueprint}, which is about exhibiting a task for which the classical simulation is conjectured to be hard but which is performed efficiently on a quantum device. While the theoretical guarantees are sound, usually such tasks do not have straightforward practical applications, such as in the case of Boson sampling \cite{aaronson2011computational} and IQP circuits \cite{bremner2010classical}. On the other hand,  many investigations focus on finding applications for near-term quantum computers. Such applications are believed to be in quantum chemistry, optimization and machine learning, and possible algorithmic candidates are the variational quantum eigensolver (VQE) \cite{peruzzo2014variational, mcclean2016theory, kandala2017hardware} and quantum approximate optimization (QAOA) \cite{farhi2014quantum, farhi2016quantum}.
A good definition of near-term quantum computing for applications is to find quantum algorithms that minimize the number of qubits, the number of gates, the complexity of the quantum gates (for example, in terms of controlled operations), and are tailored to the available hardware, while at the same time the problem should be of significant practical relevance.

Linear systems are important in a large variety of applications in engineering and sciences. Generically, a linear system is specified by a non-square matrix $A \in \mathbbm R^{M\times N}$ and  a right-hand side vector $ b \in \mathbbm R^{M}$.  The task is to find a solution vector $ x \in \mathbbm R^N$ for which $A x=  b$.
Depending on the dimensions $M$ and $N$ and the rank of the matrix, the task of solving the linear system takes on various forms.
First, if the matrix is square and invertible, we can use matrix inversion to solve the linear system to find a unique solution.
If the matrix is square and non-invertible, the pseudoinverse inverts only non-zero eigenvalues.
If the matrix is non-square, we have the case of overdetermined and underdetermined equation systems.
An overdetermined equation system appears for example in regression, where a few parameters, given by the vector $x$, are used to explain a larger amount of data points, specified by the vector $b$.
In this case, no exact solution is possible and one often minimizes the $\ell_2$ norm $\norm{A  x -  b} _2^2$ to find the best solution.
On the other hand, if the equation system is underdetermined, an infinite set of solutions exists.  Further constraints can be imposed to find specific solutions, such as minimizing the $\ell_2$-norm of the solution via the pseudoinverse, or achieving sparsity of the solution via additional $\ell_0$ or $\ell_1$ norm constraints, such as in the LASSO estimator \cite{tibshirani1996regression} or compressed sensing \cite{candes2004robust}.

In this work, we study near-term quantum algorithms for solving linear systems.
We start by analyzing the use of basic variational algorithms for this task.
In variational algorithms, the quantum computer is used to prepare candidates for the solution vector, using a shallow sequence of parameterized quantum gates. Then, measurements are performed on the solution candidate to evaluate the quality of the candidate defined in terms of a loss function.
Finally, an optimization loop updates the variational parameters to improve the quality of the solution candidate.
We propose two types of Ans\"atze for these basic variational algorithms.
The first one uses Ans\"atze that are hardware-efficient, and without explicit usage of the matrix $A$ and ${b}$, hence is called the ``Agnostic" Ansatz.
Such Agnostic Ans\"atze can be used in various forms of optimization methods, such as Nelder-Mead method \cite{lagarias1998convergence}, imaginary-time evolution \cite{li2017efficient, mcardle2019variational}, or adiabatic-inspired optimization \cite{farhi2000quantum}.
The second Ansatz, an Alternating Operator Ansatz, is strongly dependent on the linear system and fully uses $A$ and $b$ via Hamiltonian simulation \cite{Lloyd1996, berry2015simulating, berry2015hamiltonian, low2017optimal, low2019hamiltonian}. This approach is inspired by the adiabatic approach and the QAOA method. The drawback is the more far-term nature of the approach.
Very recently, there have been other works that investigate the use of variational algorithms to solve linear systems \cite{Xiaosi2019variational, Carlos2019variational, Dong2019variational}, which share similarities with the above ideas.

%During the investigation, % I feel this is superfluous
We encounter simple types of linear systems which exhibit a potential problem for variational approaches.
For these linear systems, any pre-specified Ansatz
with a polynomial number of variational parameters will have a mostly flat optimization landscape with large loss function, which can be considered as a type of plateau effect.
A flat optimization landscape will thwart any optimization techniques to find the optimum.
We analyze various efforts that circumvent the known barren plateau issue \cite{mcclean2018barren}, such as properly-initialized imaginary time evolution or the adiabatic-inspired optimization.
We show that these attempts may still show the plateau effect,
which motivates further efforts and alternative ideas to solve large-scale linear systems with near-term quantum devices and achieve quantum advantage.

To provide a potential solution to the aforementioned problems, we pursue a different route and propose a different class of algorithms for solving linear systems on near-term quantum devices.
These algorithms are based on a concept we call Classical Combination of Variational Quantum States (abbreviated as CQS). As the name suggests, we show that different quantum states, variational or not, can be combined via classical pre- and post-processing to increase the power of near-term quantum computers.
This combination avoids difficulties in optimizing the variational parameters and provides some provable guarantees for solving linear systems.
We introduce the notion of an Ansatz tree and show how to find a good set of quantum states and the optimal combination coefficients. We show that at enough depth the Ansatz tree is guaranteed to include the solution, and also study the case of Tikhonov regularized regression. We introduce a heuristic approach for judiciously pruning the tree in the less important branches and expanding the tree in the more important branches. Our proposed heuristics is called the gradient expansion heuristics, and can be considered as just one example of many potential heuristics for CQS and the Ansatz tree.
To demonstrate the potential of this class of algorithms, we have conducted numerical simulations for solving linear systems with sizes up to $2^{300} \times 2^{300}$.
Finally, we show that a variation of the CQS method can achieve a similar provable guarantee as existing quantum algorithms for linear systems and can improve upon a recent work \cite{subacsi2019quantum}, reducing the quantum gate count by $(1 / \epsilon)$-fold, where $\epsilon$ is the desired error to the optimal solution (e.g., $\epsilon = 0.01$), while maintaining the use of only one ancilla qubit.

\section{Classical and Quantum Setting}

We are given a Hermitian matrix $A\in \mathbbm C^{N\times N}$ with spectral radius $\rho(A) \leq 1$. Assume without loss of generality that $N=2^n$. For non-Hermitian matrices, the standard Hermitian embedding can be used.
The right-hand side vector $ b \in \mathbbm C^{N}$,
which we assume to be normalized and write in quantum notation as $\ket b$. The normalization is not a major restriction as we could rescale the length of $x$.
%\robert{Comment: We had the assumption $\rho(A) \leq 1$, so we can not always rescale the spectral radius of $A$}
The main task is to find a vector $x$  that solves the system of equations
\begin{equation} \label{linear}
A  x =  \ket b.
\end{equation}
In the quantum setting, we have to make assumptions about the access to the linear system.
First, we require quantum access to the right-hand side vector, i.e., a quantum circuit that prepares the vector $\ket b$ as a quantum state.
\begin{assume} \label{assumeB}
Assume availability of an efficient $n$-qubit quantum circuit described by the unitary $U_b$ such that
$U_b \ket{0^n} = \ket b$.
\end{assume}
Next, we require access to the matrix defining the linear system. Here, our main assumption is that the matrix is given by a small linear combination of known unitaries. This assumption is weaker than the assumption of an efficient Pauli decomposition.
\begin{assume}\label{assumeAUnitary}
Assume an efficient unitary decomposition of the matrix $A\in \mathbbm C^{N\times N}$, i.e.
$A = \sum_{k=1}^{K_A} \beta_k U_k$,
with $K_A = \Ord{{\rm poly}(\log N)}$ and unitaries $U_k \in \mathbbm C^{N\times N}$ with known efficient quantum circuits. We can always absorb the phase of $\beta_k$ into $U_k$, so we can assume $\beta_k > 0$.
\end{assume}

Next, we discuss the main loss functions used in this work. The first loss function is the well-known $\ell_2$-norm loss used in regression methods. This loss function is convex in $x$.
\begin{defn}
\label{defLossReg}
Let the linear system be given by $A\in \mathbbm C^{N \times M}$ and $ \ket b\in \mathbbm C^{N}$.
Define the loss function
$L_{R}( x) := \norm{  A  x -   \ket b }_2^2 =  x^{\dagger} A^\dagger A  x - 2 \realpar{ \bra b  A  x }  + 1.
$
\end{defn}
Instead of Definition \ref{defLossReg}, one may want to use the regularized version.
This version is common in statistics and machine learning, and is known as Tikhonov regularization \cite{ng2004feature}, or ridge regression \cite{hoerl1970ridge}.
\begin{defn}
\label{defLossTik}
Let the linear system be given by $A\in \mathbbm C^{N \times M}$ and $ \ket b\in \mathbbm C^{N}$.
Define the loss function
$L_T(x) := \frac{1}{2}\norm{x}_2^2 + \norm{Ax- \ket b}_2^2$.
\end{defn}
The regularization turns the loss function into a strongly convex function. This loss function will be used to prove a faster convergence of the Ansatz tree approach in Proposition \ref{prop:Ansatzproof}. The third loss function is obtained by defining a Hamiltonian which has a unique ground state that is the solution to the linear system.
This definition borrows techniques presented in \cite{subacsi2019quantum} for solving the linear system via a method inspired by adiabatic quantum computation.
To keep the adiabatic Hamiltonian positive across the adiabatic sweep, an ancilla has been introduced in Definition~\ref{defHamiltonianA}.
\begin{defn}\label{defHamiltonianA}
Let $A\in \mathbbm R^{N\times N}$ be symmetric and invertible with $\rho(A) \leq 1$.
Define an extended matrix
$A(s) := (1-s) Z \otimes \mathbbm 1 + s X \otimes A$.
In addition, define the parameterized Hamiltonian
$H(s) := A(s) P_{+,b}^\perp A(s)$,
with the projector $P_{+,b}^\perp := \mathbbm 1 - \ket{+,b} \bra{+,b}$.
\end{defn}
Among other properties, in \cite{subacsi2019quantum} it was shown that $H(1)$ has a unique ground state with zero eigenvalue given by
$\ket{+}\ket {x^{\ast}} =\ket{+} \frac{ A^{-1} \ket{ b}}{\norm{ A^{-1} \ket {b} }_2}$, which is proportional to the solution $A^{-1} \ket {b}$ after removing the ancilla.
This Hamiltonian implies the following loss function.
\begin{defn}\label{defLoss}
Define the loss function
$L_{H}(\ket x) := \bra {+,x} H(1) \ket {+,x}$.
\end{defn}
The loss function can also be written as
$L_{H}(\ket x) = \bra x  A^2 \ket x - \bra x A \ket{ b} \bra{ b}  A \ket x$ without the ancilla.

\section{Variational algorithms and ans\"atze}

We first discuss basic variational algorithms for solving linear systems.
A typical variational algorithm works as follows: one prepares multiple copies of a parameterized quantum state Ansatz and measures observables on it; the measurement results provide an estimate of the loss function. An optimization loop changes the parameters of the Ansatz with the goal of minimizing the loss function.
In this section, we consider two types of variational Ans\"atze. Different Ans\"atze require different assumptions on the available hardware and can lead to different sets of solutions.
\begin{itemize}
\item Agnostic Ansatz: We take Ans\"atze which perform single qubit rotations and entangling operations. We do not take into account information of the linear system itself except by measuring the loss function.
\item Alternating Operator Ansatz: We alternate the use of operators constructed from $A$ and the vector $\ket b$ for generating the Ansatz.
This requires Hamiltonian simulation of operators derived from $A$ and $\ket{b}\bra{b}$.
\end{itemize}
In particular, we focus on minimizing the Hamiltonian loss function $L_H(\ket{x})$, which is equivalent to finding the ground state of the Hamiltonian $H(1)$.
This allows the use of tools such as variational quantum eigensolver in quantum chemistry to solve linear systems of equations.
The detailed procedure to measure the Hamiltonian loss function is discussed in Appendix~\ref{secMeasure}.

\subsection{Details on variational algorithms for optimizing the Ansatz}
\label{secVariationalDetail}

We first discuss the details of variational algorithms.
We consider an Ansatz generated by a quantum circuit parametrized by $\theta$, i.e., $\ket{\psi(\theta)} = U_{\rm Ansatz}(\theta) \ket {0^n}$.
First, we show the basic variational quantum eigensolver (VQE) for finding the ground state of a Hamiltonian $H$.
Initialize the variational parameters $\theta$ to be $\theta_{\rm init}$.
While $\theta$ has not converged, do the following steps:
\begin{enumerate}
\item Prepare quantum state $\ket{\psi(\theta)}$ on the quantum computer.
\item Obtain an estimate for the loss function defined by $\bra{\psi(\theta)} H \ket{\psi(\theta)}$.
\item Update $\theta$ according to the obtained estimate of the loss function (e.g., using Nelder-Mead).
\end{enumerate}

In addition to using Nelder-Mead, another strategy for optimizing the variational parameters is through the use of imaginary time propagation.
Ideally, imaginary time propagation will move a given initial state to the ground state of the Hamiltonian as all excited states will be quickly suppressed.
As Ref.~\cite{McArdle2018} shows, instead of propagating the quantum system, one can directly propagate the parameters $\theta$. The detailed algorithm works as follows.
%Define the function ${\bf ITVQE}(T, \delta t,U_b, U_{\rm Ansatz}, \theta_{\rm init})$, which returns an optimized variational parameter $\theta$.
Set $\theta(0) = \theta_{\rm init}$. For $t=0, \delta t, 2\delta t, \cdots,T$, do the steps:
\begin{enumerate}
\item Obtain an estimate for all terms $C_i(t)$ and $M_{ij}(t)$ using copies of $\ket{\psi(\theta(t))}$, where $M_{ij}(t) = \realpar{ \left(\frac{\partial}{\partial \theta_i} {\psi(\theta(t))} \right )^\dagger \frac{\partial}{\partial \theta_j} \ket {\psi(\theta(t))} }$ and
$C_{i}(t) = \realpar{ \left(\frac{\partial}{\partial \theta_i} \ket {\psi(\theta(t))} \right )^\dagger H(s) \ket{\psi(\theta(t))} }$.
\item Perform variational imaginary time propagation: $\theta (t + \delta t) \leftarrow \theta (t) - M^{-1}(t) C(t) \delta t$.
\end{enumerate}

An approach to improve convergence to the solution is based on adiabatic evolution. Adiabatic evolution gradually changes the Hamiltonian $H(s)$ from the initial Hamiltonian (at $s=0$) to the target Hamiltonian (at $s=1$).
In the VQE setting, an adiabatic-assisted optimization was used in \cite{wecker2015progress, garcia2018addressing}. We follow \cite{garcia2018addressing} and refer to this approach as the adiabatic-assisted VQE (AAVQE).
To implement AAVQE, we discretize $s$ into $T$ adiabatic steps, $s_0 = 0, s_1, \ldots, s_{T-1}, s_T = 1$.
At each adiabatic step $t$, we use the optimized variational parameter $\theta^*_{t-1}$ for $H(s_{t-1})$ as the initial guess for $H(s_{t})$.
We first initialize $\theta$ to be $\theta^*_0$, where $\ket{\psi(\theta_0^*)}$ is the ground state for the initial Hamiltonian $H(0)$.
Then, for $t=1, \cdots, T$, we perform Nelder-Mead or imaginary time propagation on the parameter $\theta$ to find an optimized variational parameter $\theta^*_{t}$ for $H(s_t)$ by starting from $\theta^*_{t-1}$.

\subsection{Agnostic Ansatz}
We consider a pre-specified Ansatz with several layers, where each layer consists of single-qubit rotation for every qubit and a set of controlled NOT (CNOT) gates for entanging different qubits.
The variational parameters are the rotation angles in the single-qubit rotations.
This Ansatz does not take explicit account of the linear systems $A, b$ and hence we use the name Agnostic Ansatz.
We have performed numerical experiments on the Rigetti quantum virtual machine \cite{Rigetti_pyquil} with system sizes up to $N=16$ and explored various patterns of how the CNOT gates are applied.
We observed that most CNOT gate patterns are able to find the solution as one increases the number of layers, and hence also increases the number of variational parameters.
We have also tested the adiabatic-assisted VQE algorithm \cite{garcia2018addressing,wecker2015progress}. The average accuracy (average fidelity of the output vector with the actual solution over randomly generated linear systems) approaches unity as one increases the number of adiabatic steps. In particular, we observed an improvement using adiabatic-assisted VQE over standard VQE. For plots and detailed findings, please refer to Appendix \ref{appAgnostic}.

\subsection{Alternating Operator Ansatz}
\label{sec:directAnsatz}

We now discuss a different Ansatz that contains information about the linear system, i.e., the matrix $A$ and the vector $b$. This Ansatz comes at the cost of requiring Hamiltonian simulation of operators involving $A$ and $b$. The Ansatz is inspired by the method presented in \cite{subacsi2019quantum} for solving the linear system via adiabatic techniques.
We can write out the Hamiltonian in Definition \ref{defHamiltonianA} as
\be
H(s) &=& (1-s)^2 \mathbbm 1 + s^2 \mathbbm 1 \otimes A^2 - (1-s)^2 \ket{-,b}\bra{-,b} - s^2 (\mathbbm 1 \otimes A)\ket{+,b} \bra{+,b}(\mathbbm 1 \otimes A) \nonumber\\&& - s(1-s)\left(\ket{+,b}\bra{-,b} (\mathbbm 1 \otimes A) + (\mathbbm 1 \otimes A)\ket{-,b}\bra{+,b} \right).\nonumber
\ee
%Note that $H(s)$ does not perform a linear sweep between an initial Hamiltonian and a final Hamiltonian, as is often the case in adiabatic algorithms.
Examining the Hamiltonian leads to four Hermitian operators that make up $H(s)$, which are scaled by combinations of $s$ and $\pm(1-s)$. The Hamiltonians are $H_1=A^2$, $H_2=\ket{-,b}\bra{-,b}$, $H_3=(\mathbbm 1 \otimes A)\ket{+,b} \bra{+,b}(\mathbbm 1 \otimes A)$, and $H_4=\ket{+,b}\bra{-,b} (\mathbbm 1 \otimes A) + (\mathbbm 1 \otimes A)\ket{-,b}\bra{+,b}$, aside from the identity matrix which only shifts the spectrum and induces a global phase in the dynamics.

Based on the four Hamiltonians $H_1, H_2, H_3,$ and  $H_4$, we can construct an Alternating Operator Ansatz, which is a direct translation of the approach in \cite{subacsi2019quantum} into the QAOA framework \cite{farhi2014quantum}.
Let us define the Ansatz as follows.
Let $p$ be the number of layers of alternating unitaries.
For a set of variational parameters $\theta_{k,j}$, $k\in [p]$ and $j\in [4]$,
we define the parameterized unitaries corresponding to the four Hamiltonians,
$U_j(\theta_{k,j}) := e^{-i \theta_{k,j} H_j}$.
Then our variational Ansatz is
\be
U_4(\theta_{p,4}) U_3(\theta_{p,3}) U_2(\theta_{p,2}) U_1(\theta_{p,1}) \dots U_4(\theta_{1,4}) U_3(\theta_{1,3}) U_2(\theta_{1,2}) U_1(\theta_{1,1}) \ket{b}.\nonumber
\ee
This Ansatz contains explicit information of $A$ and $b$, which avoids potential problems in variational algorithms discussed in Section~\ref{sec:potprob}.
However, the suitability of this method for the use in near-term quantum computers depends on the difficulty of simulating the unitaries $U_j(\theta_{k,j})$.
Given Assumptions \ref{assumeB} and \ref{assumeAUnitary}, we can express the Hamiltonians as
$H_1=\sum_{k,k'=1}^{K_A} \alpha_k \alpha_{k'} U_{k'} U_k$, $H_2=(\mathbbm 1\otimes U_b) \ket{-, 0^n}\bra{-, 0^n}(\mathbbm 1\otimes U_b^\dagger)$,
$H_3=\sum_{k,k'=1}^{K_A} \alpha_k \alpha_{k'} U_k (\mathbbm 1\otimes U_b) \ket{+, 0^n}\bra{+, 0^n}(\mathbbm 1 \otimes U_b^\dagger) U_{k'}$, and
$H_4=\sum_{k=1}^{K_A} \alpha_k (\mathbbm 1\otimes U_b) \ket{+, 0^n}\bra{-, 0^n}(\mathbbm 1\otimes U_b^\dagger) (\mathbbm 1 \otimes U_k) + (\mathbbm 1 \otimes U_k) (\mathbbm 1\otimes U_b) \ket{-,0^n}\bra{+,0^n} (\mathbbm 1\otimes U_b^\dagger) $. As these operators are combinations of unitaries and projectors,  Hamiltonian simulation for simple cases may be within the realm of near-term hardware.
However, at this point, there are no guarantees on performance due to the potentially difficult optimization of the variational parameters $\theta_{k, j}$.

\subsection{Potential problems in variational algorithms for solving linear systems}
\label{sec:potprob}

Typical optimization for linear systems minimizing $\norm{Ax - \ket b}^2_2$ from Definition \ref{defLossReg} is convex in $x$, and hence is easy to solve in principle.
This is because the gradient is larger when we are further away from the optimal solution and the negative gradient always points in the descent direction.
On the other hand, when we restrict to the variational quantum state space, the optimization landscape is no longer convex and is poorly understood.
We consider toy classes of linear systems that show difficulties when using the variational methods.
The difficulties arise essentially from the fact that a random linear system in an exponentially large Hilbert space will have solutions that only have exponentially small overlap with the Ansatz $\ket{\psi(\theta)}$ under most parameters $\theta$. While this may extend to a broader class of linear systems than discussed here, we note that this argument does not preclude the existence of certain classes of linear systems and Ans\"atze in combination with properly chosen loss functions for which variational optimization can provide quantum advantages.  Linear systems with additional structure may be one example and a better characterization of such linear systems is left for future work. In section \ref{sec:CQS}, we provide a method to overcome the issues discussed here.

Consider the following toy problem. Let $k \in \{0,1\}^n$ be an arbitrary $n$-bit string, and $U_b$ be the quantum circuit for generating the state $\ket{b}$. Let the problem be given by
\be \label{eq:varcounterex}
A &=& (\sigma_x^{(1)})^{k_1}\otimes \cdots \otimes (\sigma_x^{(n)})^{k_n},\quad \ket b = U_b \ket {0^n}.
\ee
When $U_b$ is $\mathbbm 1$, the solution to the equation $A \ket x = \ket b$ is simply $\ket {k}$. Note that $A$ is sparse and the condition number of $A$ is $1$, hence existing quantum algorithms for linear systems \cite{Harrow2009, ambainis2010variable, childs2017quantum} are able to solve this linear system efficiently.
Now assume a variational Ansatz $\ket{x(\theta)}$ that contains the solution (up to global phase) for every $k \in \{0,1\}^n$.
For example, one possible, but not the only, choice would be
\be
 \ket{x(\theta)} = e^{-i\theta_1 \sigma_x^{(1)}} \otimes \cdots \otimes  e^{-i\theta_n \sigma_x^{(n)}} \ket {0^n},
 \label{eq:varanscounterex}
\ee
with $m = n$ variational parameters $\theta$.
We now show that for both loss functions Definition \ref{defLossReg} (the regression loss function) and Definition \ref{defLoss} (the Hamiltonian loss function), no matter what the initial $\theta$ is, the loss function will be flat at that point with an exponentially small slope with high probability.

\begin{figure}[t]
    \centering
    \includegraphics[width=0.98\textwidth]{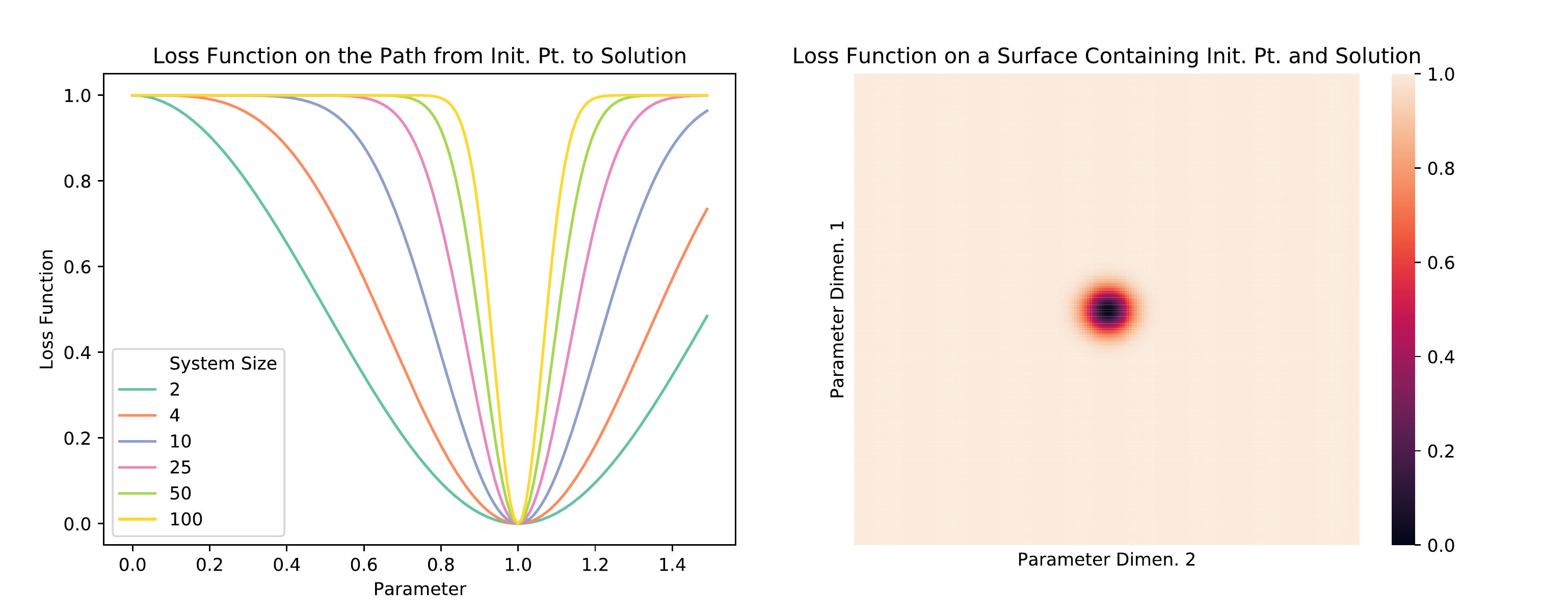}
    \caption{The optimization landscape of variational linear system solver for $k \in \{0, 1\}^n$ with $\norm{k}_0 = \lceil n / 2 \rceil$ and
    $A = (\sigma_x^{(1)})^{k_1}\otimes \cdots \otimes (\sigma_x^{(n)})^{k_n}$, $\ket{b} = \ket{0^n}$ using the variational Ansatz Eq.~\eqref{eq:varanscounterex}. Left: Here, we plot a one-dimensional cut through the high-dimensional landscape, tracing along a line connecting the initial point (parameter $= 0$) and the optimal solution (parameter $= 1$) for varying system size. We can clearly see the appearance of a plateau with the solution being a sharp valley at some point in the high-dimensional surface. Right: We plot the landscape on a surface that contains the initial point (bottom left corner with loss function $= 1.0$) and the optimal solution (the middle point with loss function $= 0.0$) for a system with $n=100$. The loss function is near flat everywhere except a sharp hole in the middle that contains the solution.}
    \label{fig:losslandscape}
\end{figure}

For $ \norm{A\ket{x(\theta)} - \ket{b}}_2^2$ from Definition \ref{defLossReg}, consider an initial point $\theta^0$, the local expansion is
$$2 - 2 \realpar{\bra{b} A \ket{x(\theta^0)}} - \sum_{i=1}^m 2 \realpar{\bra{b} A \frac{\partial}{\partial \theta_i} \ket{x(\theta^0)} } \delta \theta_i + \Ord{m^2 \delta \theta_i^2}.$$
The loss function at the initial point $\theta^0$ is $2 - 2 \realpar{\bra{b} A \ket{x(\theta^0)}}$ and the gradient in the $i$-th parameter $\theta_i$ is $-2 \realpar{\bra{b} A \frac{\partial}{\partial \theta_i} \ket{x(\theta^0)} }.$
We usually consider an Ansatz $\ket{x(\theta)}$ that changes slightly when $\theta$ differs by a small amount, e.g., when $\theta$ are the rotation angles of single-qubit rotations. Formally, this means $\norm{\frac{\partial}{\partial \theta_i} \ket{x(\theta^0)}} \leq G$, where $G$ is some constant.
Now, we will show that the gradient in each direction is exponentially small.
First, because the norm of $\frac{\partial}{\partial \theta_i} \ket{x(\theta^0)}$ is bounded by $G$, we can see that there are at most $2^{n/2}$ entries in $\frac{\partial}{\partial \theta_i} \ket{x(\theta^0)}$ with squared absolute value $ \geq G^2 / 2^{n / 2}$.
So the probability of $|\bra{b} A \frac{\partial}{\partial \theta_i} \ket{x(\theta^0)}|^2 \geq G^2 / 2^{n / 2}$ is at most $1 / 2^{n/2}$.
By union bound, with probability at least $1 - m / 2^{n/2} \approx 1$, we will have
$$\left|\bra{b} A \frac{\partial}{\partial \theta_i} \ket{x(\theta^0)}\right|^2 < G^2 / 2^{n / 2}, \,\mbox{ for all }\, i \in \{1, \ldots, m\}.$$
Hence the gradient in the $i$-th variational parameter $\bra{b} A \frac{\partial}{\partial \theta_i} \ket{x(\theta^0)}$ will be exponentially small for all $i = 1, \ldots, m$ with high probability as long as $m \ll 2^{n/2}$.
Hence, around $\theta^0$, the loss function $\norm{A\ket{x(\theta)} - \ket{b}}_2^2$ is flat and has an almost constant value $2 - 2 \realpar{\bra{b} A \ket{x(\theta^0)}}$.

The analysis is analogous for $\bra{x(\theta)} (A^2 - A\ket{b}\bra{b}A) \ket{x(\theta)}$ from Definition \ref{defLoss}.
Pick an initial point $\theta^0$, the local expansion of the loss function at $\theta^0$ is
$$1 - \bra{x(\theta^0)} A\ket{b}\bra{b}A \ket{x(\theta^0)} - \sum_{i=1}^m 2 \realpar{\bra{x(\theta^0)} A\ket{b}\bra{b}A \frac{\partial}{\partial \theta_i} \ket{x(\theta^0)}} \delta \theta_i + \Ord{m^2 \delta \theta_i^2}.$$
We have that $\bra{b}A \frac{\partial}{\partial \theta_i} \ket{x(\theta^0)}$ is exponentially small for all $i$ with high probability.
Thus the loss function is flat with a function value of $1 - \bra{x(\theta^0)} A\ket{b}\bra{b}A \ket{x(\theta^0)}$ around $\theta^0$.
The flat region is not a local minimum, but a plateau with large loss function.

For the example linear system defined in Eq.~(\ref{eq:varcounterex}), this plateau problem holds no matter how the variational circuit for generating $\ket{x(\theta)}$ is structured.
The behavior appears even for shallow circuits, as in Eq.~(\ref{eq:varanscounterex}).
The behavior only becomes evident when the system size is large compared to the number of variational parameters, i.e., $m \ll 2^{n/2}$, but still within reach in the NISQ era \cite{preskill2018quantum} with few tens of qubits.
A numerical experiment that demonstrates this behavior can be seen in Figure~\ref{fig:losslandscape}. From the figure, we can clearly see the appearance of the plateau as the system size grows larger.

An intuitive view on the failure in this toy example is that a rank-$1$ projection, such as $\bra{x(\theta)} A \ket{b} \bra{b} A \ket{x(\theta)}$ in the Hamiltonian loss function, would be exponentially small, so it gives little signal on how to find the solution.
In \cite{larose2019variational, khatri2019quantum, bravo2019variational}, another type of loss function, called the local loss function, have been defined to ameliorate this problem.
In the context of linear systems, it is defined as $\bra{x(\theta)} A U_b (\mathbbm 1 - \frac{1}{n} \sum_i \ket{0_i}\bra{0_i}) U_b^\dagger A \ket{x(\theta)}$, where $U_b$ is the circuit to generate the state $\ket{b}$ and $\ket{0_i}\bra{0_i}$ is the projection onto the $i$-th qubit.
Intuitively, this prevents the projection from being exponentially small because it is now a sum of single qubit projections.
When $U_b = \mathbbm 1$, and $A = (\sigma_x^{(1)})^{k_1}\otimes \cdots \otimes (\sigma_x^{(n)})^{k_n}$, this local loss function could indeed provide sufficient signals to solve the problem.
However, a similar plateau problem may appear in cases where $A$ and $U_b$ contains many entangling circuits.
In Appendix~\ref{app:moreprob}, we will show that when $A$ and $U_b$ are random polynomial-sized quantum circuits, the local loss function plateaus at the value $1/2$.
The basic idea is that the single-qubit reduced density matrix of an entangled quantum state will be very close to a completely mixed state. Even though single qubit projection $\ket{0_i} \bra{0_i}$ will not be exponentially small, it will be sharply concentrated at the value $1/2$ with an exponentially small variance.
While the local loss function would have a flat landscape when $A$ and $\ket b$ are based on polynomial-sized quantum circuits sampled randomly, if $A$ and $\ket b$ have more structure the local loss function can provide benefits for the variation optimization \cite{larose2019variational, khatri2019quantum, bravo2019variational}.

In all cases when the loss function landscape is essentially flat,  the presence of small errors due to statistical fluctuations in the quantum measurements would make it very hard for existing optimization approaches to find the optimal solution efficiently even if there exists a solution in the Ansatz.
For example, if we use variational imaginary time evolution as described in Section \ref{secVariationalDetail} to optimize the variational parameters, the same analysis shows that $C(t)$ used in the propagation of $\theta$ would be an exponentially small vector.
This means even if $C(t)$ is measured to exponential-precision (which already requires exponential time due to the statistical error in quantum measurements), the imaginary time propagation would still take exponential time to find the ground state.

\begin{figure}[t]
    \centering
    \includegraphics[width=0.98\textwidth]{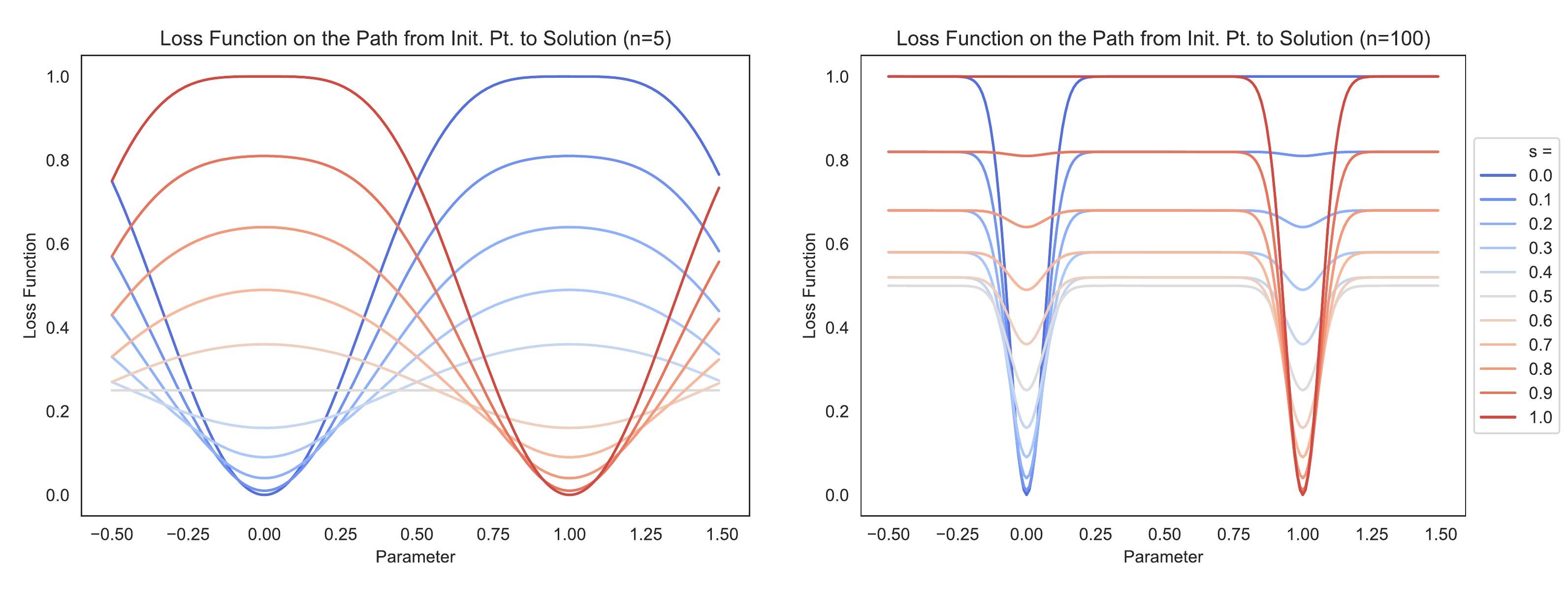}
    \caption{The optimization landscape under adiabatic evolution ($s$ from $0$ to $1$) to solve linear systems for $k \in \{0, 1\}^n$ with $\norm{k}_0 = \lceil n / 2 \rceil$ and
    $A = (\sigma_x^{(1)})^{k_1}\otimes \cdots \otimes (\sigma_x^{(n)})^{k_n}$, $\ket{b} = \ket{0^n}$ using the variational Ansatz Eq.~\eqref{eq:varanscounterex}. Here, we plot a one-dimensional slice through the high-dimensional landscape on a line passing through the initial point (parameter $= 0$) to the optimal solution (parameter $= 1$). Left: A system size of $5$ qubits. Right: A system size of $100$ qubits. For small system size, after $s > 0.5$, the variational parameter will be able to move toward the solution. For large system size, the adiabatic evolution will continue to be stuck at the initial point and end up at a plateau when $s = 1.0$. }
    \label{fig:losslandscapeadia}
\end{figure}

One important note is that this plateau effect originates from a different cause compared to the barren plateau problem \cite{mcclean2018barren}, which appears in Ansatz initialized as a random quantum circuit with enough depth.
%\pr{[This sentence is too strong. And "fixed" is not well-defined. What if you take the Ansatz tree and the CQS parameters as variational parameters? Then it works, so that's a counterexample to that statement here.  ]}
Hence previous attempts to evade barren plateaus may still face this problem.
For example, one may perform adiabatic evolution to avoid barren plateaus, as suggested from Definition \ref{defHamiltonianA}. Start with $\ket{x(\theta^0)} = \ket{-, b}$, which is the ground state of $H(s=0) = A(0)(\mathbbm 1 - \ket{+, b}\bra{+, b})A(0)$, with $A(s) = (1-s)Z \otimes \mathbbm 1 + s X \otimes A$; then gradually change $s$ from $0$ to $1$ and use variational imaginary time evolution to maintain the quantum state in the ground state.
Intuitively, when $s$ is changed slightly, the ground state of $H(s)$ will only shift by a small amount, so performing imaginary time evolution allows to follow closely the adiabatic path.
This intuition is true in the original exponential-sized Hilbert space, but does not hold in the polynomial-sized variational parameter space.
The reason is that the topology in the polynomial-sized variational parameter space is very different from the original exponential-sized Hilbert space.
For example, let $a, b, c \in \{0, 1\}^n,$ and $\epsilon \ll 1$, then $\ket{a} + \epsilon \ket{b}$ is close to $\ket{a} + \epsilon \ket{c}$ in the original Hilbert space, but they may be very far away in the polynomial-sized variational parameter space.

Let us consider the same toy example as in Eq.~(\ref{eq:varcounterex}) and the loss function analogous to Definition \ref{defLoss} is
$$\bra{x(\theta)} H(s) \ket{x(\theta)} = c(s) - (1-s)^2 f(\theta) - s^2 g(\theta) - 2 s(1-s) h(\theta), $$
where $c(s) = 1-2s+2s^2, f(\theta) = |\braket{-, 0^n}{x(\theta)}|^2, g(\theta) = |\bra{+, b}(\mathbbm 1 \otimes A)\ket{x(\theta)}|^2, h(\theta) = \frac{1}{2} \braket{{x(\theta)}}{-, 0^n} \bra{+, b}(\mathbbm 1 \otimes A)\ket{x(\theta)} + \frac{1}{2}\bra{{x(\theta)}}(\mathbbm 1 \otimes A)\ket{+, k} \braket{-, 0^n}{x(\theta)}$.
At $s=0$, the optimum is $\ket{-, 0^n}$. At $s=1$, the optimum is $\ket{+, k}$. The adiabatic evolution starts at $s=0$ with $\ket{x(\theta^0)} = \ket{-, 0^n}$.
In the original Hilbert space, this loss function is quadratic in $\ket{x}$ and we can always move in a direction that decreases the loss function.
However, in the polynomial-sized variational parameter space ($m \ll 2^{n/2}$), the landscape of $g(\theta)$ and $h(\theta)$ will both be flat around $\theta^0$ due to the inner product $\bra{+, b}(\mathbbm 1 \otimes A)\ket{x(\theta)}$ and the previous analysis.
This means the loss function looks like
$c(s) - (1 - s)^2 f(\theta)$ locally around $\theta^0$
and is always minimized at $\ket{-, 0^n} = \ket{x(\theta^0)}$ for all $s \in [0, 1]$.
So even if $s$ has been changed to $\Delta > 0$, the variational parameter will still stay at $\theta^0$.
The adiabatic evolution in the variational parameter space would always stay at $\ket{x(\theta^0)} = \ket{-, 0^n}$ and fail to find the solution. An illustration of this analysis can be found in Figure~\ref{fig:losslandscapeadia}.

The conclusion is that for variational algorithms with a pre-specified Ansatz, it could be challenging to solve many linear systems due to the flat landscape.
This is not to say that variational quantum algorithms are hopeless in achieving quantum advantages for linear systems.
If we have a way of finding an Ansatz with an initial parameter that is close to the solution by making use of structure in $A, \ket{b}$, then this problem could be circumvented.
For example, an Ansatz that contains explicit application of $A$, such as $e^{-i A t}$, in its variational circuit does not fall into this problem.
The Alternating Operator Ansatz proposed in Section~\ref{sec:directAnsatz} is such a choice that may be able to avoid this problem, but require further study of convergence guarantees and the optimization of variational parameters may also be hard.
In the next section, we aim to propose an alternative approach that could circumvent the problem discussed here and offer some provable guarantees.

\section{Classical combination of quantum states  for linear systems}
\label{sec:CQS}
%= \{\ket{\psi({\theta})} \, | \, {\theta} \in \mathbb{R}^k\} \subset \mathcal{H}$
\subsection{Main idea}
Most hybrid quantum-classical algorithms parameterize the quantum state with classical parameters, and have the quantum state created on the quantum processor.
To extend the reach of near-term quantum devices, we consider an approach that broadens the set of manipulable states. Let the $N = 2^n$-dimensional Hilbert space be $\mathcal{H}$.
%First, consider the set of \textit{all} variational quantum states $\mathcal{V}_0 \subset \mathcal{H} $ that can be created on near-term quantum devices, e.g., they have limited gate count, or have certain topology.
Let $ U$ denote an $n$-qubit circuit with $k$ real parameters, i.e., it may denote a circuit with alternating parameterized single-qubit rotations ($k$ in total) and all-to-all fixed CNOTs. Also let $\mathcal V_{U}= \{\ket{\psi_U({\theta})} \, | \, {\theta} \in \mathbb{R}^k\} \subset \mathcal{H}$ denote the corresponding quantum states.
%and let the union of them cover the full subspace $\bigcup_{U \in \mathcal U_k, k\in[k_{\max}]} \mathcal V_{U} = \mathcal V_0$.
A typical variational algorithm is based on a such a parameterized Ansatz $\mathcal V_{U}$ and tries to find the best $\ket{\psi_U({\theta})}$ in $\mathcal{V}_{U}$ by tuning ${\theta}$.
However, there are two drawbacks when using typical variational algorithms.
\begin{itemize}
    \item $\mathcal{V}_{U}$ may not be large enough to contain the solution. Such a drawback may often be the case in hardware-efficient Ans\"atze and even the Alternating Operator Ansatz discussed above.
    \item Even when $\mathcal{V}_{U}$ contains the solution, the variational parameter ${\theta}$ could be difficult to optimize. This problem can already be seen in the toy examples presented in previous sections.
\end{itemize}
Here, we improve on both drawbacks with a method called classical combination of variational quantum states (CQS).
In classical combination of variational quantum states, we consider a hybrid quantum-classical state.
Let $U_i$ for $i=1, \dots, m$ be quantum circuits with $k_i$ parameters each.
We construct a state vector ${x} \in \mathcal{H}$ as a quantum-classical hybrid,
$${x} = \sum_{i=1}^m \alpha_i \ket{\psi_{U_i}({\theta}_i)}, \mbox{ where }
\, \alpha_1, \ldots, \alpha_m \in \mathbb{C}, {\theta}_1  \in \mathbb{R}^{k_1}, \ldots, {\theta}_m \in \mathbb{R}^{k_m},$$
where $\alpha_i$ are the combination parameters and ${\theta}_i$ are the usual variational parameters.
Both parameters are stored on the classical processor.
However, the state vector ${x} \in \mathcal{H}$ is never created on the quantum processor.
Furthermore ${x}$ may not be normalized, so it is not a quantum state in general.

To manipulate ${x}$ with near-term quantum algorithms, the most important component is the ability to measure its expectation value for an observable $O$.
We can obtain the expectation value ${x}^\dagger O {x}$ by performing quantum measurements and classical post-processing, via the following steps.
\begin{enumerate}
    \item Estimate $\bra{\psi_{U_i}({\theta}_i)} O \ket{\psi_{U_j}({\theta}_j)}$ using a modified Hadamard test (see Proposition~\ref{propmodiHadamard} in Appendix~\ref{secMeasure}) on the quantum processor.
    The modified Hadamard test comes at the cost of preparing $\ket{\psi_{U_i}({\theta}_i)}$ and $\ket{\psi_{U_j}({\theta}_j)}$ in superposition using one additional ancilla.
    \item Compute $\sum_{i=1}^m \sum_{j=1}^m \alpha_i^* \alpha_j \bra{\psi_{U_i}({\theta}_i)} O \ket{\psi_{U_j}({\theta}_j)}$ on the classical processor.
\end{enumerate}
For comparison, suppose for the moment that ${x}$ is a (normalized) quantum state. To create ${x}$ on the quantum processor, we need at least $O(\log(m))$ ancilla qubits and $m$ controlled unitaries that prepare all $\ket{\psi_{U_i}({\theta}_i)}, \forall i$ in superposition.
Hence the improvement in terms of quantum resources using the hybrid quantum-classical state ${x}$ instead of the corresponding quantum state is
\begin{center}
    gate count: $m$ times $\rightarrow 2$ times, \\
    ancilla count: $\mathcal{O}(\log(m)) \rightarrow 1$.
\end{center}
The CQS method comes at the cost of many repetitions in quantum measurements.
However, we do not need to maintain quantum  coherence between measurements, hence it would be especially beneficial on noisy intermediate-scale quantum devices as the gate count is reduced  $m / 2$-fold.
For example, when we consider a classical combination of $300$ variational quantum states, then we can reduce the gate count by $150$ times.
On a near term quantum device, the gate count is often limited due to the error present in the device.
Hence the space of possible variational quantum states $\mathcal{V}_U$ that can be prepared without error on the quantum processor will be limited by the gate count.
The classical combination of quantum states thus provides a considerable improvement upon the space of manipulable states on near-term quantum processors.
An illustration of this idea is shown in Figure~\ref{fig:ccvqs}.

\begin{figure}[t]
    \centering
    \includegraphics[width=1.0\textwidth]{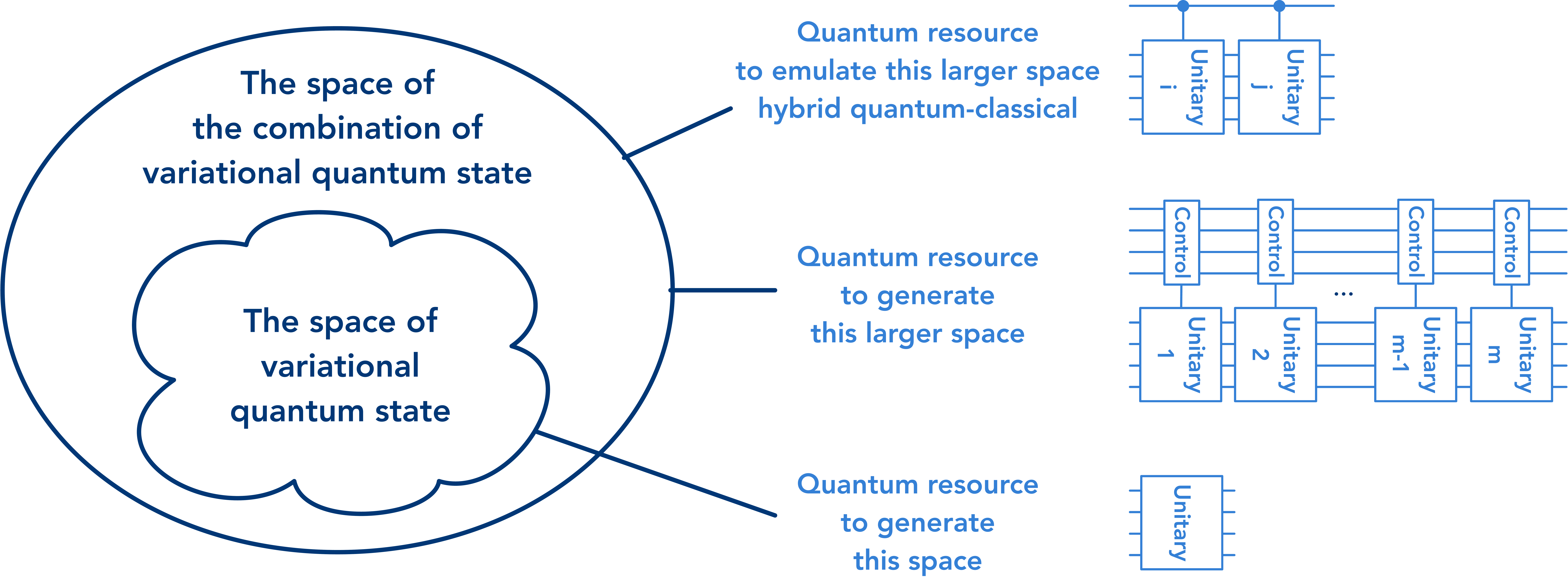}
    \caption{Illustration of our \textit{classical combination of variational quantum states} (CQS) approach. By considering subspaces spanned by $m$ variational quantum states, we are able to represent a larger class of states in the Hilbert space $\mathcal{H}$. The basic concept is illustrated on the left-hand side. In order to generate states in the $m$-dimensional subspace, we have to increase the quantum resources $m$-fold ($m$ times more gates and $O(\log(m))$ ancilla qubits that jointly control the unitaries). This case is illustrated by the middle picture on the right-hand side. Using a hybrid quantum-classical emulation, we can operate in this larger space using only a single additional ancilla qubit and twice as many gates, as illustrated in the top picture on the right-hand side.
    %We call this hybrid quantum-classical emulation \textit{classical combination of variational quantum states}.
    }
    \label{fig:ccvqs}
\end{figure}

We now present the meta strategy for finding an ${x} \in \mathcal{H}$ that solves the linear system of equations.
The approach consists of an optimization and an expansion step. The approach may avoid the optimization of ${\theta}_i$ which can involve a complicated optimization landscape.
We start with $m=1$ and a quantum state $\ket{\psi_{U_1}({\theta}_1)}$. Each iteration proceeds as follows.
\begin{enumerate}
    \item Optimization: Solve for the optimal $\alpha^*_1, \ldots, \alpha^*_m \in \mathbb{C}$ with ${x}(\alpha) = \sum_{i=1}^m \alpha_i \ket{\psi_{U_i}({\theta}_i)}$.
    \item Expansion: Using the current optimum ${x}(\alpha^*) = \sum_{i=1}^m \alpha^*_i \ket{\psi({\theta}_i)}$, find a next circuit $U_{m+1}$ with $k_{m+1}$ parameters and a setting $\theta_{m+1} \in \mathbbm R^{k_{m+1}}$ for those parameters. This circuit generates the state $\ket{\psi_{U_{m+1}}({\theta}_{m+1})}$.
    \item Set $m \leftarrow m+1$.
\end{enumerate}
A few comments are in order. The optimization Step 1 is convex and is described in the next Section \ref{sec:optcombparam}. The expansion Step 2 assumes that we can find a new circuit $U_{m+1}$. This circuit may or may not be parameterized by a parameter vector $\theta_{m+1} \in \mathbbm R^{k_{m+1}}$, the meta strategy can operate in both cases. In fact, the remainder of this section including the Ansatz tree approach in Section \ref{sec:ansatztree} does not explicitly discuss these parameters, but all states can also be thought of as parameterized.
In the case that the circuits are indeed parameterized, setting the parameters to useful values may involve optimization, which again may run into the plateau issues discussed above. However, the strategy can also be used without optimizing $\theta_{m+1}$ as long as the new state $\ket{\psi_{U_{m+1}}(\theta_{m+1})}$ is sufficiently different from the previous states.
Also note that the number of parameters $k_{m+1}$ may not have any particular relationship with the number of parameters of the previous steps $k_{1},\dots,k_{m}$.

\subsection{Optimization of combination parameters}
\label{sec:optcombparam}
We first focus on the case when we have selected a good set of $\ket{\psi_{U_1} }, \ldots, \ket{\psi_{U_m} }$, e.g., $A^{-1} \ket{b} \in \mbox{span}\{\ket{\psi_{U_1}}, \ldots, \ket{\psi_{U_m} }\}$, and we want to optimize over $\alpha_1, \ldots, \alpha_m$.
We will show that the optimization of $\alpha_1, \ldots, \alpha_m$ will always find the optimal solution.
This optimization is in stark contrast to the optimization over ${\theta}_i$, which can result in plateaus and local minima.
To simplify notation, we let $\ket{u_i} = \ket{\psi_{U_i}}$ further on.
In order to solve linear systems of equations, the standard regression loss function is
$$L_R(x):=\norm{A{x} - \ket{b}}_2^2 = {x}^\dagger A^\dagger A {x} - 2 \realpar{\bra{b} A{x}} + 1,$$
as in Definition \ref{defLossReg}.
Given ${x} = \sum_{i=1}^m \alpha_i \ket{u_i}$, we can reduce the optimization in an exponentially large space ${x} \in \mathcal{H}$ to an optimization over $m$ variables.
Let
$V = \left(v_1, \cdots, v_m\right)$
with the column vectors
$ v_i = A \ket {u_i}$.
We can now simply express the left-hand side of the linear system as
$
A x  = \sum_{i=1}^m \alpha_j A \ket {u_i} = V \alpha$.
Thus, we would like to minimize
\be
\norm{ V \alpha - \ket{b} }^2_2 = \alpha^\dagger V^\dagger V \alpha - 2 \realpar{ q^\dagger \alpha  }+ 1, \nonumber
\ee
where we introduced $q_i = \bra i  V^\dagger \ket b = \bra {u_i} A^\dagger \ket b$.
We obtain a simple regression problem for the combination parameters $\alpha$ with the kernel matrix
$(V^\dagger V)_{ij} = \bra {u_i} A^\dagger A \ket {u_j}$.
We can cast this quadratic optimization problem with complex variable $\alpha \in \mathbb{C}^m$ to a real optimization problem
$\min_{z} z^T Q z - 2 r^T z + 1$,
by letting $z = [\realpar{\alpha}, \impar{\alpha}] \in \mathbb{R}^{2m}$ and let
$$Q = \begin{pmatrix}
\realpar{V^\dagger V} & \impar{V^\dagger V} \\
\impar{V^\dagger V} & \realpar{V^\dagger V}
\end{pmatrix}, \,\, r = [\realpar{q}, \impar{q}].$$
Once all the input quantities $Q$ and $r$ are determined, such a regression problem can be solved with standard methods for convex quadratic programming. The inputs $Q$ and $r$ can be measured on a quantum computer using the strategies in Appendix \ref{secMeasure} and \ref{app:optcomb}. However, such measurements result in erroneous estimates of the quantities. The error will translate into an error in the loss function and the proposed solution for the combination parameters $\alpha$.
Using standard results in random matrix theory, we are able to achieve a rigorous bound on the error of the obtained solution, see Proposition~\ref{prop:guaranteesubspaceinformal}.
See also Appendix~\ref{app:optcomb} for a detailed analysis and Proposition~\ref{prop:guaranteesubspace} for the complete statement.

\begin{prop}[informal]
\label{prop:guaranteesubspaceinformal}
We can find an $\hat{\alpha} \in \mathbb{C}^m$ such that it is $\epsilon$-close to optimal, $$ L_R\left (\sum_i \hat{\alpha}_i \ket {u_i}\right) \leq \min_{\alpha_1, \ldots, \alpha_m \in \mathbb{R}} L_R\left (\sum_i \alpha_i u_i \right) + \epsilon,$$
using $\Ord{K_A^2 m^3 / \epsilon}$ measurements on the quantum device, where $A = \sum_{k=1}^{K_A} \beta_k U_k$ given in Assumption~\ref{assumeAUnitary}.
\end{prop}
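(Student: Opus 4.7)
The starting point is the real-variable convex quadratic program
$\min_{z \in \mathbb{R}^{2m}} f(z) := z^T Q z - 2 r^T z + 1$
given in the excerpt, together with the observation that under Assumption~\ref{assumeAUnitary} each entry of $Q$ and $r$ is a fixed linear combination of at most $O(K_A^2)$ (resp.\ $O(K_A)$) inner products of the form $\bra{u_i} U_k^\dagger U_{k'} \ket{u_j}$ or $\bra{u_i} U_k^\dagger \ket{b}$. Every such inner product is bounded in magnitude by $1$ and can be estimated by the modified Hadamard test of Proposition~\ref{propmodiHadamard}, so I would estimate each of the $O(K_A^2 m^2)$ distinct quantities with the same number of shots $N$ and take $\hat{Q}, \hat{r}$ to be the resulting plug-in estimates, letting $\hat{z}$ denote the minimizer (obtained by a classical convex solver) of the perturbed quadratic $\hat{f}(z) = z^T \hat{Q} z - 2\hat{r}^T z + 1$.

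The key analytic step is a \emph{quadratic} perturbation bound on the excess loss. Because $\nabla f(z^*) = 0$ at any true minimizer, the identity $f(\hat z) - f(z^*) = (\hat z - z^*)^T Q (\hat z - z^*) \leq \|Q\|_{\mathrm{op}} \|\hat z - z^*\|_2^2$ holds exactly. Combining the optimality conditions $Q z^* = r$ and $\hat Q \hat z = \hat r$ gives $\hat z - z^* = \hat Q^{-1}(E_r - E_Q z^*)$ and therefore $\|\hat z - z^*\|_2 \leq \|\hat Q^{-1}\|_{\mathrm{op}}\bigl(\|E_r\|_2 + \|E_Q\|_{\mathrm{op}}\|z^*\|_2\bigr)$, where $E_Q = \hat Q - Q$ and $E_r = \hat r - r$. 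Chaining the two estimates yields $f(\hat z)-f(z^*) = O\bigl((\|E_Q\|_{\mathrm{op}} + \|E_r\|_2)^2\bigr)$; this quadratic dependence on the noise is precisely what turns the usual $1/\epsilon^2$ sample complexity into the advertised $1/\epsilon$ scaling.

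To translate the operator-norm error requirement into a measurement count I would invoke a matrix/vector Bernstein inequality. The entries of $E_Q$ are independent, zero-mean, and bounded by $O(K_A^2/\sqrt{N})$ with high probability after $N$ Hadamard shots per sub-quantity, so matrix Bernstein for a $(2m) \times (2m)$ matrix with such entries gives $\|E_Q\|_{\mathrm{op}} = \tilde O\bigl(K_A^2 \sqrt{m/N}\bigr)$, and analogously $\|E_r\|_2 = \tilde O\bigl(K_A \sqrt{m/N}\bigr)$. The key saving here is the $\sqrt{m}$ rather than $m$, relative to a naive entrywise bound. Squaring and setting the right-hand side $\lesssim \epsilon$ gives $N = \tilde O(m/\epsilon)$ shots per Hadamard test; multiplying by the $O(K_A^2 m^2)$ distinct tests produces the claimed $O(K_A^2 m^3/\epsilon)$ total measurement complexity.

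The main obstacle I expect is controlling $\|z^*\|_2$ and $\|\hat Q^{-1}\|_{\mathrm{op}}$, since $Q = V^\dagger V$ can be rank-deficient whenever the Ansatz states $\ket{u_i}$ are (approximately) linearly dependent, in which case $z^*$ is not unique and a naive pseudoinverse blows up. My plan to handle this is to either add a small Tikhonov regularizer of the type in Definition~\ref{defLossTik} to turn $Q$ into $Q + \mu I$ (which introduces an $O(\mu)$ bias that can be folded into $\epsilon$ by choosing $\mu = \Theta(\epsilon)$), or, equivalently, to restrict the optimization to the ball $\|\alpha\|_2 \le R$ for an a priori bound $R$ derived from $\rho(A) \leq 1$ and $\|\ket{b}\|_2 = 1$. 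Either route removes the spurious kernel directions, makes the condition number effectively $O(1)$ on the relevant subspace, and completes the bound; the detailed trade-off between $\mu$, the Bernstein constants, and the hidden logarithmic factors is then a routine calculation that I would relegate to Appendix~\ref{app:optcomb} and cite as the precise statement Proposition~\ref{prop:guaranteesubspace}.
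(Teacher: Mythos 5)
Your proposal follows essentially the same route as the paper's proof of Proposition~\ref{prop:guaranteesubspace}: entrywise Hadamard-test estimation of $Q$ and $r$, an operator-norm concentration bound that buys the $\sqrt{m}$ saving over the naive entrywise estimate, a perturbation bound on the minimizer via $\hat z - z^* = Q^{-1}(\hat r - r - (\hat Q - Q)\hat z)$, and the zero-gradient quadratic identity that converts parameter error into excess loss and yields the $1/\epsilon$ rather than $1/\epsilon^2$ scaling. The paper simply assumes $Q$ positive definite and folds $\|Q\|\,\|Q^{-1}\|^2(1+\|z^*\|)^2$ together with the coefficient bound $B$ into the constants of the informal statement, which is exactly where your concerns about rank-deficiency of $Q$ and the precise $K_A$ bookkeeping are absorbed.
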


A natural question that arises is whether the problem of solving linear system will become much easier when we consider optimization over a small subspace ${\rm span}(\ket{u_1}, \ket{u_2}, \ldots, \ket{u_m})$.
We can show that finding a near-optimal combination parameters in a subspace is BQP-complete.
\begin{prop}[informal]
Finding the combination parameters of $\ket{u_1}, \ket{u_2}, \ldots, \ket{u_m}$ to minimize $L_R\left (\sum_{i=1}^m \alpha_i\ket{u_i} \right)$ is BQP-complete.
\end{prop}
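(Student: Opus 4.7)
The plan is to prove containment in BQP and BQP-hardness separately, and the key observation is that the hardness already shows up at the trivial case $m=1$, so the reduction is essentially a one-dimensional calculation.

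For BQP-containment, I would simply appeal to Proposition~\ref{prop:guaranteesubspaceinformal}. The decision version of the problem takes the form: given an instance $(A,\ket b,\{U_i\}_{i=1}^m,c_1,c_2)$ with $c_1<c_2$ a polynomial gap, decide whether $\min_\alpha L_R(\sum_i\alpha_i\ket{u_i})\leq c_1$ or $\geq c_2$. By Proposition~\ref{prop:guaranteesubspaceinformal}, a quantum algorithm using $\Ord{K_A^2 m^3/\epsilon}$ modified Hadamard tests (Proposition~\ref{propmodiHadamard}) followed by a classical convex quadratic program returns $\hat\alpha$ with $L_R(\sum_i\hat\alpha_i\ket{u_i})\leq \min L_R+\epsilon$. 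Choosing $\epsilon=(c_2-c_1)/3$ and thresholding at $(c_1+c_2)/2$ decides the promise in BQP.

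For BQP-hardness, I would reduce from the canonical BQP-complete promise problem: given a polynomial-size quantum circuit $U$ on $n$ qubits, decide whether $|\bra{0^n}U\ket{0^n}|^2\geq 2/3$ or $\leq 1/3$. Given such a $U$, construct a CQS instance with $m=1$, $A = U^\dagger$ (a single unitary, so Assumption~\ref{assumeAUnitary} holds with $K_A=1$ and $\rho(A)=1$), $\ket b=\ket{0^n}$ (so $U_b=\mathbbm 1$), and $\ket{u_1}=\ket{0^n}$. Then the one-dimensional regression objective from Definition~\ref{defLossReg} becomes
\begin{equation*}
L_R(\alpha_1\ket{0^n}) = |\alpha_1|^2\bra{0^n}UU^\dagger\ket{0^n} - 2\realpar{\alpha_1\bra{0^n}U^\dagger\ket{0^n}} + 1
= |\alpha_1|^2 - 2\realpar{\alpha_1\bra{0^n}U^\dagger\ket{0^n}} + 1,
\end{equation*}
whose unique complex minimizer is $\alpha_1^\ast = \bra{0^n}U\ket{0^n}$ with optimal value $L_R^\ast = 1-|\bra{0^n}U\ket{0^n}|^2$. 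The two sides of the BQP promise translate exactly into $L_R^\ast\leq 1/3$ versus $L_R^\ast\geq 2/3$ in the CQS decision problem, so any algorithm for CQS solves the original BQP-complete problem under a polynomial-time classical reduction.

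The main potential obstacle is bookkeeping, not mathematics: one must confirm that the approximation parameter in Proposition~\ref{prop:guaranteesubspaceinformal} can be taken small enough relative to the $(1/3,2/3)$ gap that BQP-membership of the decision problem genuinely follows, and that padding to arbitrary $m\geq 2$ (with, e.g., $\ket{u_i}=\ket{0^n}$ for $i\geq 2$, which leaves the span unchanged) preserves both directions of the reduction. Both are immediate: the cost in Proposition~\ref{prop:guaranteesubspaceinformal} scales polynomially in $1/\epsilon$ and $m$, and standard BQP amplification of $U$ can widen the promise gap arbitrarily, so no delicate tuning is required.
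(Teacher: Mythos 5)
Your argument is essentially correct as a proof of the informal statement, but it takes a genuinely different route from the paper. The paper's reduction (Proposition~\ref{prop:bqpcompsubspace}) keeps the matrix trivial --- $A$ is a single CNOT, Hermitian, unitary, with classically computable entries --- and hides all the hardness in the state-preparation circuits: $\ket{b}=\ket{u_1}=\ket{0}\otimes U_T\cdots U_1\ket{0^n}$ and $\ket{u_2}=\ket{1}\otimes U_T\cdots U_1\ket{0^n}$, so that the optimal $\alpha_1,\alpha_2$ are exactly the output probabilities $P_0,P_1$ of a BQP computation, with $m=2$. You instead put the hard circuit into $A$ itself ($A=U^\dagger$, $K_A=1$) and keep $\ket b=\ket{u_1}=\ket{0^n}$, reducing everything to a one-dimensional quadratic whose optimum is $1-|\bra{0^n}U\ket{0^n}|^2$; your algebra there is right, and your containment sketch via Proposition~\ref{prop:guaranteesubspaceinformal} (which the paper leaves implicit) is fine for your instance, where the kernel matrix is just the scalar $1$. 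What the paper's version buys is a strictly stronger hardness statement: the problem remains BQP-hard even when $A$ is a fixed, classically explicit two-qubit gate, so the hardness provably resides in the combination step over quantum states rather than in evaluating anything about $A$. Your version shows hardness only for instances where $A$ encodes an arbitrary polynomial-size circuit, which does not satisfy (and cannot satisfy) the hypothesis of the formal Proposition~\ref{prop:bqpcompsubspace} that each entry of $A$ be efficiently classically computable.

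Two concrete points need attention. First, your $A=U^\dagger$ is not Hermitian in general, while the paper's setting (Section II and the formal proposition) assumes a Hermitian $A$; this is fixable by the standard dilation $\tilde A=\ket{0}\bra{1}\otimes U^\dagger+\ket{1}\bra{0}\otimes U$, which is Hermitian, unitary, a single efficiently implementable term, and with $\ket b=\ket{0}\otimes\ket{0^n}$, $\ket{u_1}=\ket{1}\otimes\ket{0^n}$ reproduces your objective $|\alpha_1|^2-2\realpar{\alpha_1\overline{\bra{0^n}U\ket{0^n}}}+1$ exactly. Second, you should justify (rather than call canonical) that deciding $|\bra{0^n}U\ket{0^n}|^2\geq 2/3$ versus $\leq 1/3$ is BQP-hard: the standard BQP promise concerns a measurement probability of one output qubit, and the naive conversion via $U=C^\dagger Z_1 C$ fails for the squared overlap; the usual fix is to copy the answer qubit to a fresh ancilla and uncompute, giving an overlap equal to $1-p$ so that amplification separates the two cases. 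With these repairs your reduction is sound, but it proves the informal proposition in a weaker form than the paper's appendix statement.
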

See Proposition~\ref{prop:bqpcompsubspace} in Appendix~\ref{app:optcomb} for the complete statement and proof.

\subsection{Ansatz tree approach for finding the subspace}
\label{sec:ansatztree}
We have shown good theoretical properties in the case where the subspace is fixed, such as a guarantee for finding a near-optimal solution in the subspace and BQP-completeness.
However, the results so far rely on already knowing a subspace that approximately contains the solution $x$.
Here, we propose an approach that constructs the subspace by exploring the space of solutions on a tree structure we call the \textit{Ansatz tree}. We use the structure of $A$ in Assumption~\ref{assumeAUnitary} to construct such a tree based on the unitaries that make up $A$.
The core idea is to associate the nodes of this tree with the subspace states $\ket {u_i}$ from the previous section.
We show that a near-optimal solution is guaranteed to be found after enough nodes are included to the Ansatz tree. While the size of the tree may be very large in the worst case, the tree allows the systematic use of heuristic approaches to explore, prune, and expand it.

The construction of the full Ansatz tree is given in Definition~\ref{def:ansatztree}.
We start with the quantum state $\ket{b}$ and recursively construct the child nodes generated by the matrix $A$.
An illustration is shown in Figure~\ref{fig:Ansatztree}.
\begin{defn}
\label{def:ansatztree}
Given Assumption~\ref{assumeAUnitary}, $A = \sum_{k=1}^{K_A} \beta_k U_k$, we define the Ansatz tree recursively.
$$\left\{
	\begin{array}{ll}
		\mbox{The root of the tree is } \ket{b}. \\
		\mbox{Each node } \ket{\psi} \mbox{ on the tree has } $K$ \mbox{ child nodes: } U_1 \ket{\psi}, \ldots, U_K \ket{\psi}.
	\end{array}
\right.
$$
\end{defn}
Most straightforwardly one can take all the nodes of the Ansatz tree up to some depth as the solution subspace. We now derive guarantees for this approach which also show that the number of required nodes may be very large. We then discuss a heuristic approach to prune the tree in the less important directions and expand the tree in the more important directions, a method we call \textit{gradient expansion heuristics}. This heuristic can  reduce the number of nodes included in the subspace for a good solution.

\begin{figure}[t]
    \centering
    \includegraphics[width=0.9\textwidth]{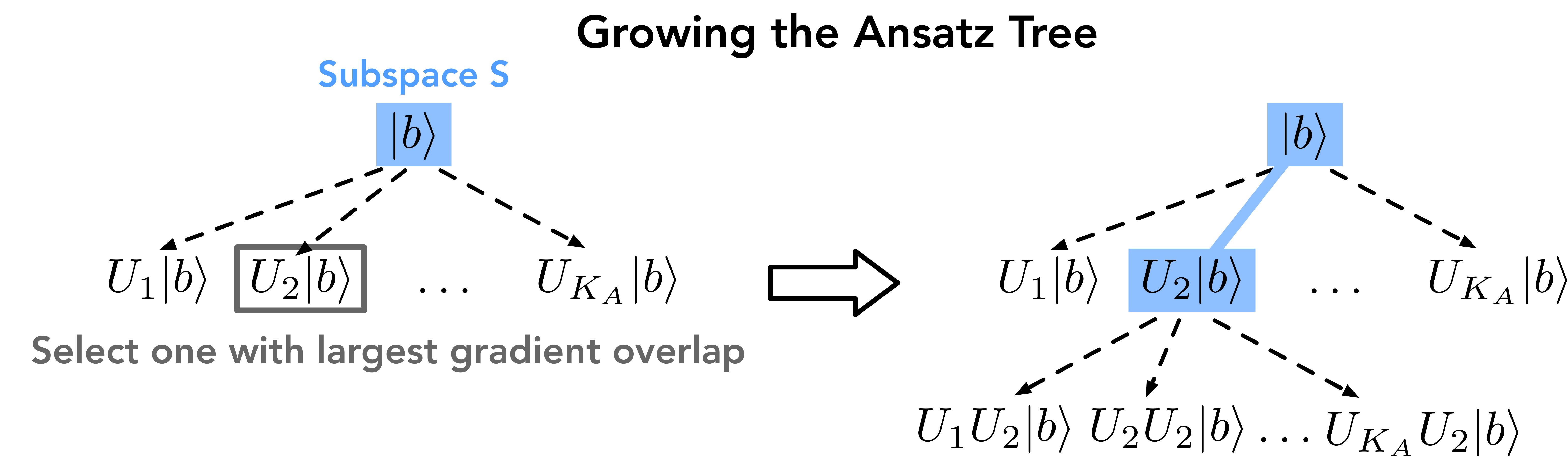}
    \caption{Illustration of the Ansatz tree and the gradient expansion strategy. The green region is the current subspace. On the left-hand side, we select a node that is a child of the nodes in the green region and that has the largest overlap with the gradient, here $U_2\ket{b}$. The right-hand side shows the tree after adding the new element to the subspace.}
    \label{fig:Ansatztree}
\end{figure}

Taking the full Ansatz tree and the regression loss from Definition \ref{defLossReg}, we are guaranteed to find a near-optimal solution after enough depth, see the next proposition.
\begin{prop}
\label{prop:Ansatzfullproof}
Fix $\epsilon > 0$ and let $A = \sum_{k=1}^{K_A} \beta_k U_k$ with $\rho(A) \leq 1, \rho(A^{-1}) \leq \kappa$. By selecting all nodes $\{\ket{u_1}, \ldots, \ket{u_m}\}$ on the Ansatz tree with depth at most $O(\kappa \log(\kappa / \epsilon))$, we have
$$\min_{\alpha_1, \ldots, \alpha_m \in \mathbb{R}}  L_R \left ( \sum_{i=1}^m \alpha_i \ket{u_i} \right) \leq \min_{x \in \mathbb{C}^{2^n}} L_R(x) + \epsilon.$$
\end{prop}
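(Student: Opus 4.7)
\textbf{Proof plan for Proposition~\ref{prop:Ansatzfullproof}.} The plan is to reduce the statement to the well-studied problem of approximating the scalar function $1/\lambda$ by a polynomial on the spectrum of $A$, and then to observe that applying such a polynomial to $\ket{b}$ produces a vector that automatically lives in the span of the depth-$d$ nodes of the Ansatz tree. Concretely, since $A$ is Hermitian with $\rho(A)\le 1$ and $\rho(A^{-1})\le \kappa$, its spectrum is contained in $D_\kappa := [-1,-1/\kappa]\cup [1/\kappa,1]$. I will invoke the standard Chebyshev-based approximation (Childs--Kothari--Somma, essentially the same polynomial used in the HHL-variant analyses) to get a polynomial $p$ of degree $d = O\!\left(\kappa\log(\kappa/\epsilon)\right)$ with $|p(\lambda) - 1/\lambda|\le \sqrt{\epsilon}$ for every $\lambda\in D_\kappa$.

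Next I would connect $p(A)\ket{b}$ to the Ansatz tree. Writing $p(A) = \sum_{j=0}^d c_j A^j$ and expanding each power using Assumption~\ref{assumeAUnitary},
\begin{equation*}
A^j\ket{b} \;=\; \sum_{k_1,\ldots,k_j=1}^{K_A} \beta_{k_1}\!\cdots\beta_{k_j}\, U_{k_1}\cdots U_{k_j}\ket{b},
\end{equation*}
so every term $U_{k_1}\cdots U_{k_j}\ket{b}$ is a node of the Ansatz tree at depth $j$. Hence $\tilde{x} := p(A)\ket{b}$ lies in $\mathrm{span}\{\ket{u_1},\ldots,\ket{u_m}\}$ once we include every node up to depth $d$, and there exist coefficients $\alpha_i\in\mathbb{C}$ realising $\tilde x = \sum_i \alpha_i \ket{u_i}$.

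It then suffices to bound $L_R(\tilde{x})$. Since $A$ is invertible, the global minimum of $L_R$ is $0$, so the claim reduces to $L_R(\tilde{x})\le\epsilon$. Diagonalising $A = \sum_j \lambda_j \ket{\lambda_j}\!\bra{\lambda_j}$ and writing $\ket{b}=\sum_j b_j \ket{\lambda_j}$, I compute
\begin{equation*}
L_R(\tilde{x}) \;=\; \bigl\lVert A\,p(A)\ket{b}-\ket{b}\bigr\rVert_2^2 \;=\; \sum_j |b_j|^2\,\bigl(\lambda_j p(\lambda_j) - 1\bigr)^2 \;\le\; \max_{\lambda\in D_\kappa}\bigl(\lambda\,p(\lambda)-1\bigr)^2,
\end{equation*}
using $\sum_j |b_j|^2 = 1$. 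Since $|\lambda p(\lambda) - 1|\le |\lambda|\cdot|p(\lambda)-1/\lambda|\le 1\cdot\sqrt{\epsilon}$ on $D_\kappa$, we get $L_R(\tilde{x})\le \epsilon$. Because the optimisation over $\alpha$ in the span of the selected nodes can only do better than the particular choice $\tilde{x}$, this yields $\min_{\alpha} L_R(\sum_i \alpha_i \ket{u_i}) \le \epsilon$.

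The main obstacle is essentially the existence of a polynomial of degree $O(\kappa\log(\kappa/\epsilon))$ that uniformly approximates $1/\lambda$ on $D_\kappa$; this is non-trivial but standard, and one simply cites the Chebyshev-series construction. The remaining bookkeeping---substituting the unitary decomposition to place $p(A)\ket{b}$ inside the tree span, and converting a uniform approximation of $\lambda p(\lambda) - 1$ into the $\ell_2$-norm bound on $L_R$---is routine once the polynomial is in hand.
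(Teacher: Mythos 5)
Your proposal is correct and follows essentially the same route as the paper's proof: both invoke the Childs--Kothari--Somma polynomial approximation of $1/\lambda$ on $D_\kappa$ with degree $O(\kappa\log(\kappa/\epsilon))$, note that $p(A)\ket{b}$ lies in the span of the depth-$d$ tree nodes via the unitary decomposition, and bound the resulting loss (the paper uses the operator-norm bound $\norm{A\hat x - \ket b}\le\rho(A)\norm{\hat x - A^{-1}\ket b}\le\epsilon$, while you work in the eigenbasis with a $\sqrt{\epsilon}$-accurate polynomial, an immaterial difference). The only cosmetic point is that the realising coefficients are in fact real (the Chebyshev coefficients are real and the $\beta_k$ are positive by Assumption~\ref{assumeAUnitary}), matching the real minimisation in the statement.
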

The result is an extension and variation of known results on using polynomial approximation of $1/x$ to solve linear systems of equations \cite{childs2017quantum}. See Appendix~\ref{sec:adapproof} for a detailed proof.
The required depth can be large in the worst case, especially when $\kappa$ is large. Moreover, this guarantee requires a large number of nodes on the Ansatz tree for a good approximation to the solution. In particular, the number of nodes scales exponentially in the condition number, i.e., $m = K_A^{\Ord{\kappa \log{(\kappa/\epsilon)}}}$.
 It is hence important to find ways to reduce the number of nodes.

One possible way of reducing the number of nodes is by solving a \textit{regularized} linear system of equations, using the loss function from Definition \ref{defLossTik}. Here, a polynomial number of nodes is enough to guarantee the performance of the solution even in the worst case, see the next proposition.
\begin{prop}
\label{prop:Ansatzproof}
Fix $\epsilon > 0$ and let $A = \sum_{k=1}^{K_A} \beta_k U_k$ with $\rho(A) \leq 1$. By selecting all nodes $\{\ket {u_1}, \ldots, \ket {u_m}\}$ on the Ansatz tree with depth at most $\ceil{C \log(1 / 2\epsilon)}$, where $C = 1 / \log(1 / (2 - \sqrt{3})) \approx 0.76$, we have
$$\min_{\alpha_1, \ldots, \alpha_m \in \mathbb{R}} L_T\left( \sum_i \alpha_i \ket {u_i} \right ) \leq \min_{x \in \mathbb{C}^{2^n}}  L_T(x)  + \epsilon.$$
The depth only depends on how good the approximation is, which is characterized by $\epsilon$. For example, when $\epsilon = 0.02$, we only need depth at most $4$ and number of nodes $m \leq {K_A}^4$.
\end{prop}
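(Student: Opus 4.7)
The plan is to reduce the statement to a classical polynomial approximation problem, then invoke a Chebyshev expansion that produces the precise rate $2-\sqrt 3$. Since $A$ is Hermitian with $\rho(A)\le 1$, the first-order condition for $L_T$ identifies the unique minimizer as $x^* = f(A)\ket b$ with $f(\lambda):= 2\lambda/(1 + 2\lambda^2)$, and the Hessian $M := I + 2A^2$ has spectrum in $[1,3]$, giving the exact strong-convexity identity $L_T(x) - L_T(x^*) = \tfrac12 \norm{x - x^*}_M^2$. The key structural fact is that the Ansatz tree up to depth $d$ already contains $p(A)\ket b$ for every polynomial $p$ of degree $\le d$: by Assumption~\ref{assumeAUnitary}, $A^j = \sum_{k_1,\dots,k_j} \beta_{k_1}\cdots\beta_{k_j} U_{k_1}\cdots U_{k_j}$, and each $U_{k_1}\cdots U_{k_j}\ket b$ is precisely a depth-$j$ tree node. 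Hence the minimum in the statement is bounded above by $\min_{\deg p \le d} L_T(p(A)\ket b)$.

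Diagonalizing $A$ in its eigenbasis $\{\ket{e_j}\}$ with $\ket b = \sum_j b_j \ket{e_j}$ and $\lambda_j \in [-1,1]$ reduces the problem to a scalar one:
\begin{equation*}
L_T(p(A)\ket b) - L_T(x^*) = \tfrac12 \sum_j |b_j|^2 (1 + 2\lambda_j^2)(p(\lambda_j) - f(\lambda_j))^2 \le \tfrac32 \norm{p - f}_{L^\infty[-1,1]}^2,
\end{equation*}
so it suffices to exhibit a polynomial of degree $\le d$ that approximates $f$ uniformly on $[-1,1]$ with error $\Ord{(2-\sqrt 3)^{d/2}}$. For this I use the explicit Chebyshev expansion of the resolvent-like factor $1/(1+2\lambda^2)$. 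Starting from the Fourier identity $(a+\cos\phi)^{-1} = (a^2-1)^{-1/2}\bigl[1 + 2\sum_{n\ge 1}(-r)^n \cos(n\phi)\bigr]$ with $r = a - \sqrt{a^2 - 1}$, setting $a=2$, $\phi=2\theta$ (so $r = 2-\sqrt 3$) and $\lambda = \cos\theta$ gives
\begin{equation*}
\frac{1}{1+2\lambda^2} = \frac{1}{\sqrt 3} + \frac{2}{\sqrt 3}\sum_{n\ge 1}(-1)^n (2-\sqrt 3)^n\, T_{2n}(\lambda).
\end{equation*}
Multiplying by $2\lambda$ and using $2\lambda\,T_{2n} = T_{2n+1} + T_{2n-1}$ produces the Chebyshev series for $f$; truncating after all Chebyshev polynomials of degree $\le d$ yields an explicit $p_d$ whose sup-norm tail is a geometric series with ratio $2-\sqrt 3$. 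Combined with the strong-convexity bound, this gives $L_T(p_d(A)\ket b) - L_T(x^*) \le c\,(2-\sqrt 3)^d$ for an absolute constant $c$, and solving for depth yields $d \ge C\log(1/(2\epsilon))$ with $C = 1/\log(2+\sqrt 3)$. The count $m = \Ord{K_A^d}$ follows since level $j$ of the tree has at most $K_A^j$ nodes.

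The main technical obstacle is landing the leading constant cleanly, because the crude bound $\tfrac32\norm{p-f}_\infty^2$ together with a careless tail estimate preserves the exponential rate but loses a factor in front. Tightening it requires simplifying the Chebyshev tail using the identities $(2-\sqrt 3)(2+\sqrt 3) = 1$ and $(\sqrt 3 - 1)^2 = 2(2-\sqrt 3)$, which collapse the various prefactors into the compact form stated. An arguably cleaner alternative is to invoke conjugate gradient for $Mx = 2A\ket b$ directly: since $\kappa(M) \le 3$, the standard CG bound gives $\norm{x_k - x^*}_M \le 2\bigl(\tfrac{\sqrt 3 - 1}{\sqrt 3 + 1}\bigr)^k \norm{x^*}_M$ with $(\sqrt 3 - 1)/(\sqrt 3 + 1) = 2-\sqrt 3$, the $k$-th Krylov iterate lies in $\mathrm{span}\{A^{2j+1}\ket b : 0 \le j \le k-1\}$ and hence in the tree at depth $2k-1$, and $\norm{x^*}_M^2 \le 2 L_T(0) = 2$ then reproduces the same depth bound without the Chebyshev bookkeeping.
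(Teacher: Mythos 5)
Your proposal is correct and follows essentially the same route as the paper's proof: identify the minimizer as $2A(1+2A^2)^{-1}\ket b$, observe that depth-$d$ tree nodes span $\{p(A)\ket b : \deg p \le d\}$, bound the excess loss by $\tfrac32\norm{p-f}_{\infty}^2$ via the per-eigenvalue strong-convexity identity, and truncate the Chebyshev series of $2\lambda/(1+2\lambda^2)$ whose coefficients decay at rate $2-\sqrt 3$. Your Poisson-kernel derivation of the coefficients (which carry a factor of $2$ that the paper's stated $c_k$ omits) and the conjugate-gradient alternative are pleasant variants, but they do not change the substance of the argument.
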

We sketch an argument for the $\log(1/\epsilon)$ dependency here based on standard convex optimization results and refer to Appendix~\ref{sec:adapproof} for a careful analysis.
First, using $A$ is Hermitian, the gradient of the loss function is $\nabla L_T(x) = x + 2 A^\dagger A x - 2 A^\dagger \ket b = x + 2 A^2 x - 2 A \ket b$, and the Hessian is given by $\nabla^2 L_T(x) = \mathbbm 1 + 2 A^2$.
Hence we have the bounds $\mathbbm 1 \preccurlyeq \nabla^2 L_T(x) \preccurlyeq 3 \mathbbm 1$ for all $x$.
Performing one gradient descent steps at the $t$-th iteration yields the following relation
$x^{(t+1)} \leftarrow x^{(t)} - \eta (x^{(t)} + 2 A^2 x^{(t)} - 2 A \ket b)$.
The step size $\eta$ can be determined by exact line search \cite{nocedal2006numerical}.
Suppose we start from $x^{(0)} = \ket{b}$, then after $T$ iterations, the solution can be expressed as a polynomial $p(z)$ of degree $T^2$ of the matrix $A$ applied to $\ket b$, i.e., $x^{(T)} = p(A) \ket b$, with well-defined polynomial coefficients. This polynomial directly expresses how to linearly combine the nodes in the Ansatz tree up to depth $T^2$.
From standard convex analysis \cite{boyd2004convex}, for strongly convex functions, if
$$T \geq \frac{\log ((L_T(\ket b) - \min_x L_T(x) )/\epsilon)}{\log 1/3} = \Tht{\log(1/\epsilon)},$$
we have $L_T(x^{(T)}) - \min_x L_T(x) \leq \epsilon$. Hence, we have derived a required depth of $\Ord{\log^2(1/\epsilon)}$.  Using Newton's method this depth can be improved to $\Ord{\log(1/\epsilon)}$.
As we have to include number of nodes exponential in the depth, the prefactor in the $\log 1/\epsilon$ dependency is crucial for near-term quantum computing applications.
Proposition~\ref{prop:Ansatzproof} shows a favorable prefactor using a more intricate proof (see Appendix~\ref{sec:adapproof}).

Another line of idea for reducing the number of nodes is to investigate methods which judiciously include nodes on the tree and prune branches which are not essential to the problem.
We introduce a heuristic procedure for exploring the tree which we call the \textit{gradient expansion heuristics}.
The key idea is to use the gradient information of the current state to expand the state space.
At the start, let the subspace $S$ contain only the root of the Ansatz tree, i.e., $S = \{\ket{b}\}$. At each step, we perform the following steps.
\begin{enumerate}
\item Solve for the optimal ${x^S} = \sum_{\ket{\psi_i} \in S} \alpha^*_{i} \ket{\psi_i}$ by optimizing over the combination parameters $\alpha_1, \ldots, \alpha_m$ as discussed in Section~\ref{sec:optcombparam}.
\item For each quantum state $\ket{\psi}$ in the set of child nodes of the set $S$ on the Ansatz tree,
denoted as $\mathcal{C}(S)$,
compute the gradient overlap $\bra \psi \nabla L_R(x^S) = 2 \sum_{\ket{\psi_i} \in S} \alpha^*_{i} \bra \psi A^2 \ket{\psi_i} - 2 \bra \psi A \ket b$. This can be done by estimating $\braket \psi {\psi_i}, \bra \psi A \ket {\psi_i}, \bra \psi A^2 \ket {\psi_i}$ for all $\ket {\psi_i} \in S$.
The overlaps can be computed efficiently using the Hadamard test via Proposition \ref{propHadamard} and \ref{propmodiHadamard}.
\item Add a new node to the subspace $S$, such that the node has the largest overlap with the gradient. More formally,  select $\ket{\psi^*} = {\rm arg} \max_{\ket \psi \in \mathcal{C}(S)} |\bra \psi \nabla L_R(x^S)|$ and grow the set $S \gets S \cup \{ \ket{\psi^*} \}$.
\end{enumerate}
An illustration of this method can be found in Figure~\ref{fig:Ansatztree}.
This gradient expansion procedure can be justified by the following proposition (See Appendix~\ref{sec:adapproof} for proof).
\begin{prop}
\label{prop:treegradient}
If $\ket{\psi^*}$ has gradient overlap $g = |\bra{\psi^*} \nabla L_R(x^S)| > 0$, then after expanding the subspace $S \leftarrow S \cup \{\ket{\psi^*}\}$ and optimizing the combination parameters, it is guaranteed that
$$L_R\left(x^{S \cup \{\ket{\psi^*}\}}\right) \leq L_R\left(x^S\right) - \frac{g^2}{4}.$$
\end{prop}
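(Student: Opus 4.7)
The plan is to reduce the $(m+1)$-dimensional optimization over $S \cup \{\ket{\psi^*}\}$ to a one-dimensional lower bound, by noting that the new optimum must be at least as good as the one obtained by freezing the old combination parameters at their current values $\alpha^*$ and optimizing only a single new coefficient $t \in \mathbb{C}$ attached to $\ket{\psi^*}$. Because the regression loss $L_R$ is quadratic in $x$, the restriction $t \mapsto L_R(x^S + t\ket{\psi^*})$ is an explicit quadratic in $t$ that can be minimized in closed form.

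Concretely, I would expand
\begin{equation*}
L_R(x^S + t\ket{\psi^*}) = L_R(x^S) + 2\,\realpar{t^{*} c} + |t|^2 \norm{A\ket{\psi^*}}_2^2,
\end{equation*}
where $c := \bra{\psi^*} A^\dagger A \ket{x^S} - \bra{\psi^*} A^\dagger \ket{b} = \tfrac{1}{2}\bra{\psi^*}\nabla L_R(x^S)$, so that $|c| = g/2$. Minimizing over $t \in \mathbb{C}$ with the choice $t = -c/\norm{A\ket{\psi^*}}_2^2$ yields a reduction of exactly $|c|^2 / \norm{A\ket{\psi^*}}_2^2$. Since $\ket{\psi^*}$ is a unit vector and $\rho(A) \le 1$, we have $\norm{A\ket{\psi^*}}_2^2 \le 1$, which lets me bound the reduction from below by $|c|^2 = g^2/4$.

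To finish, I would observe that $x^{S \cup \{\ket{\psi^*}\}}$ is the \emph{global} optimum over the enlarged subspace and the specific point $x^S + t \ket{\psi^*}$ lies in that subspace, so $L_R(x^{S \cup \{\ket{\psi^*}\}}) \le L_R(x^S + t\ket{\psi^*}) \le L_R(x^S) - g^2/4$. The proof has essentially no obstacle beyond being careful about the complex Wirtinger gradient convention already fixed in Section~\ref{sec:ansatztree}; the only place to double-check is the identification $c = \tfrac{1}{2}\bra{\psi^*}\nabla L_R(x^S)$ so that the factor of $4$ (rather than $2$) comes out correctly in the denominator $g^2/4$.
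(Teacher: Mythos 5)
Your proposal is correct and matches the paper's own argument: both expand the quadratic loss along the single new direction $x^S + t\ket{\psi^*}$, minimize the resulting scalar quadratic in closed form (giving a decrease of $|\bra{\psi^*}\nabla L_R(x^S)|^2 / 4\bra{\psi^*}A^\dagger A\ket{\psi^*}$), and then use $\rho(A)\le 1$ together with the fact that the full optimization over $S\cup\{\ket{\psi^*}\}$ can only do better. The factor-of-$4$ bookkeeping via $c = \tfrac{1}{2}\bra{\psi^*}\nabla L_R(x^S)$ is exactly as in the paper's Appendix proof.
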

As a result of this proposition, if we find a new quantum state $\ket{\psi^*}$ with a nonzero gradient overlap, we are guaranteed that the next state vector $x^{S \cup \{\ket{\psi^*}\}}$ will be better than the current vector $x^S$.
Furthermore, a larger gradient overlap guarantees a larger decrease in the loss function.
Hence it is best to find a state vector that has the largest gradient overlap.

We now pinpoint limitations of the proposed Ansatz tree CQS approach.
The potential problems for variational algorithms discussed in Section~\ref{sec:potprob} do not apply to this Ansatz tree CQS approach and the problematic linear systems could actually be solved easily.
For the Tikhonov loss function, the CQS approach is also guaranteed to find a solution with near-optimal loss function in polynomial time.
However, for the regression loss function, this approach is guaranteed to efficiently find the solution \textit{only when} the condition number of $A$ is bounded by a constant.
Note that traditional quantum algorithms for linear system \cite{childs2017quantum, subacsi2019quantum} are efficient for condition numbers that are polynomial in the number of qubits. %system size.
When the condition number of $A$ is too large, this Ansatz tree CQS approach cannot guarantee to find the optimal solution in polynomial time.
The gradient expansion heuristics can ameliorate this shortcoming in practice, but, in the worst case, it may still require an exponential amount of time when $\kappa$ is too large.

\subsection{Numerical experiments}
\label{sec:numerics}
\begin{figure}[t]
    \centering
    \includegraphics[width=1.0\textwidth]{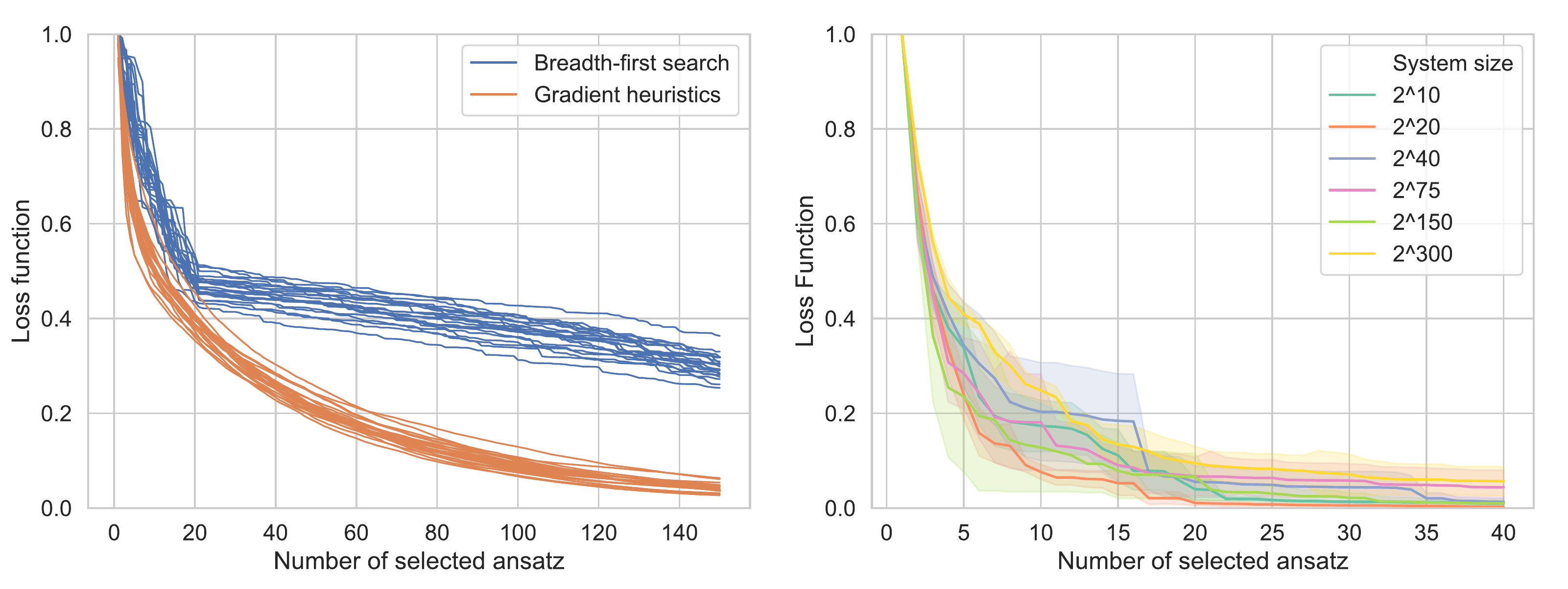}
    \caption{Numerical experiments on classical combination of quantum states algorithm for solving linear systems of equations. We use the standard loss function $L_R(x) = \norm{Ax - b}_2^2$.
    Left: Comparison of breadth-first search and the gradient expansion heuristics for adding nodes on the Ansatz tree to the subspace. We consider solving linear systems with a system size of $256 \times 256$ where $A$ is generated by sampling random weighted sum of Haar-random unitaries. Matrices $A$ generated this way have a large condition number (as large as the system size), so a large number of Ansatz states are needed find the solution. Each line represents an independent run.
    Right: Solving linear systems over a wide range of system sizes (from $2^{10} \times 2^{10}$ to $2^{300} \times 2^{300}$). For efficient classical simulation, the linear systems are generated as random weighted sums of Pauli strings, i.e., tensor product of Pauli operators. The shaded areas represent the standard deviation over five independent runs. }
    \label{fig:adapnumeric}
\end{figure}

We now present numerical experiments for the CQS-based algorithm.
The experiments are shown in Figure~\ref{fig:adapnumeric}.
In Figure~\ref{fig:adapnumeric} (left), we compare the use of gradient expansion heuristics with the use of a breadth-first search that simply includes every node on the Ansatz tree layer-by-layer.
We consider randomly generated linear systems of size $256 \times 256$.
We generate a random linear system by selecting several unitary matrices $U_1, \ldots, U_S$ from the Haar measure, random scalars $\alpha_1, \ldots, \alpha_S$ from uniform distribution $[-2, 2]$ and let $A = \sum_{i = 1}^S \alpha_i (U_i + U_i^\dagger)$. This construction guarantees that $A$ is Hermitian and is a weighted sum of unitary matrices.
The condition number generated this way is very large (in the order of the system size).
In particular, we consider $S = 10$ (hence $A$ is a sum of $20$ unitaries).
A clear improvement can be seen when using the gradient expansion heuristics. This gradient heuristics converges quickly to the optimal point.
On the other hand, a breath-first search, which layer by layer includes every node on the Ansatz tree, results in a very slow convergence after an initial rapid convergence for $20$ rounds (this includes the first layer of the Ansatz tree).

In Figure~\ref{fig:adapnumeric} (right), we consider a special class of (sparse) linear systems that are extremely large.
In particular, we consider system sizes ranging from $2^{10} \times 2^{10}$ to $2^{300} \times 2^{300}$ to investigate whether our new approach suffers from the plateau issue discussed in Section~\ref{sec:potprob}.
To facilitate classical simulation, we consider $A \in \mathbb{C}^{2^n \times 2^n}$ with efficient Pauli decomposition, i.e., $A = \sum_{i=1}^S \alpha_i P^{(i)}_1 \otimes \ldots \otimes P^{(i)}_n$, where $P^{(i)}_j$ is a single-qubit Pauli operator (including the identity).
In addition, we set $\ket{b} = \ket{0^n}$.
To generate a random matrix $A$, we sample each $\alpha_i$ from the uniform distribution over $[-2, 2]$, and a random tensor product of Pauli operators from the uniform distribution over $4^n$ possible choices. Here, we consider $S=8$.
Note that when $S=1$ and we only sample from $I$ or $X$ in the Pauli string, then we recover the toy problem that leads to the plateau issues discussed in Section~\ref{sec:potprob}.
From Figure~\ref{fig:adapnumeric}, we can see that this approach circumvents the plateau issue and has a clear convergence over an remarkably wide range of system sizes.

\subsection{Hamiltonian CQS approach and connection to previous results}

The Ansatz tree approach so far is based on the direct use of the decomposition of the matrix $A$, see Assumption \ref{assumeAUnitary}. As an alternative approach, one can generate a subspace by performing Hamiltonian simulation. Because of the use of Hamiltonian simulation, this approach is less near-term compared to using the Ansatz tree. The benefit of this approach is that it greatly reduces the size of the Ansatz subspace compared to Proposition~\ref{prop:Ansatzfullproof}.
%In addition, it exhibits a smaller single-run circuit depth than the randomized approach in \cite{subacsi2019quantum}. Hence, this Hamiltonian simulation CQS result fits into the general near-term strategy of trading the more expensive circuit depth for the cheaper number of runs of the experiments.

Ref.~\cite{childs2017quantum} shows that the function $1/x$ can be well approximated by a Fourier series with near-optimal number of terms. This Fourier series can be viewed as a series for approximating $1/A$ and obtains a linear combination of unitaries of the form $e^{-iAt_j}$ with pre-specified times $t_j$. Using a truncated Taylor method for quantum simulation of this linear combination \cite{berry2015simulating} obtains a quantum state proportional to the exact solution $A^{-1} \ket{b}$. Based on Ref.~\cite{childs2017quantum} (Lemma 11), we can choose the set of Ansatz states as follows
\be \label{eqAnsatzHS}
\Bigg\{ \ket{u_j} := e^{-iAt_j}\ket{b} \,\,\, \Bigg| \,\,\,  t_j = \frac{\epsilon j}{\kappa \log(\kappa / \epsilon)}, \,\, j = - J, \ldots, J \Bigg\},
\ee
where $J= \Tht{\kappa^2 \log^2(\kappa / \epsilon) / \epsilon }$ and $\kappa$ is an upper-bound on the condition number of $A$.
This set is of size $2J +1 = \Tht{\kappa^2 \log^2(\kappa / \epsilon) / \epsilon}$ and the number of gates to generate each Ansatz is about $t_J = \Ord{\kappa \log(\kappa / \epsilon)}$ (assuming each application of $A$ takes constant number of gates).
%Ref.~\cite{childs2017quantum} further improves the $\kappa$ dependency of the overall algorithm to linear while retaining the $\log 1/\epsilon$ dependency, using a modified version of variable-time amplitude amplification \cite{ambainis2010variable} at the cost of more ancilla qubits.

Using the CQS strategy, we are guaranteed to find an $\epsilon$-close solution for $Ax = \ket{b}$ because of the Fourier approximation results of \cite{childs2017quantum}.
As mentioned, the single run circuit depth is $O(\kappa \log(\kappa / \epsilon))$ and uses only one additional ancilla for performing the Hadamard test.
Because we never create the solution $x$ on a quantum computer, but emulate $x$ quantum-classically, we avoid the need of many ancilla qubits in existing quantum algorithms based on amplitude amplification and function approximations \cite{berry2015simulating, childs2017quantum}.
%As before, the Hadamard test is used to sample the expectation values $\bra {u_i} A^\dagger A \ket {u_j}$ and $\bra {u_i} A^\dagger \ket b$.
%With Lemma \ref{prop:guaranteesubspaceinformal}, the approach requires about $\Ord{\frac{J^3}{\epsilon} } = \tOrd{\frac{\kappa^6 \log^{6}(\kappa / \epsilon)}{\epsilon^7}}$ measurements, not quoting the $K_A$ dependency.
We now compare this approach to the random-sampling approach taken in \cite{subacsi2019quantum}.

The approach in \cite{subacsi2019quantum} eliminates the use of amplitude amplification and function approximation.
In that work, the operators $e^{-i H(s_j) t_j}$ are used with the Hamiltonian from Definition \ref{defHamiltonianA}. The  $s_j$ are fixed via a natural parameterization and the $t_j$ are randomly sampled from an interval of size about $[0,\Ord{\kappa^2}]$ for $j = 1,\dots,\Tht{\log^2(\kappa)/\epsilon}$.
The method is able to again achieve an $\epsilon$-close approximation.
The circuit depth is about $\Ord{\kappa \log(\kappa) /\epsilon}$, not counting details of the Hamiltonian simulation for $H(s)$.
In comparison, the Ansatz defined by the set in Eq.~(\ref{eqAnsatzHS}) shows the single-run circuit depth of $\Ord{\kappa \log(\kappa / \epsilon)}$.
As an example, when $\epsilon = 0.01$, we would achieve a roughly $100$-times reduction in the circuit depth.
This reduction in quantum resources comes at the cost of more classical repetitions. Hence, this Hamiltonian CQS approach fits into the general near-term strategy of trading the more expensive circuit depth for the cheaper number of runs of the experiments.

\section{Discussion}

The work provides algorithms for solving linear systems on near-term quantum computers.
The flavor of the presented algorithms is two-fold. The first set of algorithms are variational in nature and draw their inspiration from other variational quantum algorithms for quantum chemistry \cite{peruzzo2014variational, wecker2015progress, o2016scalable, colless2018computation} and quantum optimization \cite{farhi2014quantum, moll2018quantum, wang2018quantum}. For such algorithms, the quantum computer implements a single wavefunction Ansatz which is dependent on a set of variational parameters, usually in a non-linear fashion. The type of Ansatz in this setting can, in the extreme cases, be linear system-independent (agnostic) or fully linear system-dependent. As the agnostic case is useful for example when limitations of the hardware are dominating the overall implementation, such Ans\"atze have also been called ``hardware efficient" \cite{kandala2017hardware}.
On the other hand, the dependent Ansatz takes into account the linear system at the cost of requiring Hamiltonian simulation which increases the overall complexity of implementing such an Ansatz in a near-term quantum processor.
In this work, we exhibit types of linear systems for which variational approaches with a polynomial number of variational parameters show plateau issues. These plateau issues may however not arise for other types of linear systems and Ans\"atze with more structure and when using different loss functions. An important future work is to better characterize those linear systems where variational methods offer near-term quantum advantages.

The second set of approaches are based on classical combination of variational quantum states (CQS).
The method is inspired from the basic concept of diversification and robustness. Using a single class of methods can provide only limited benefits when compared to combining multiple different methods and using the best parts of each.
This method introduces a new set of combination parameters to add together different variational quantum states. The combination is emulated classically rather than represented directly on the quantum computer. Hence, the method increases the overall expressiveness and power of the Ansatz without the need of additional quantum resources.
We can use the variational states such as the ones presented in the first part, and also others yet to be developed.
Our CQS approach is also reminiscent of techniques used for example in quantum chemistry, where the linear combination of atomic orbitals (LCAO) approach allows to optimally construct molecular orbitals from atomic orbitals.

To avoid the complexity of optimizing the variational parameters, which can involve an ill-shaped optimization landscape with plateaus and local minima, we have proposed an approach that alternates between solving for the optimal solution in a subspace and growing the subspace on an Ansatz tree.
This approach is inspired by the Krylov subspace method in solving linear systems. Krylov subspace is a subspace spanned by $\{b, Ab, A^2b, \ldots, A^{r-1} b\}$, which is similar to the Ansatz tree we defined. The Krylov subspace method solves for the optimal solution in the subspace and increases $r$ if the obtained solution is not good enough. A popular variant of Krylov subspace method for linear systems is the conjugate gradient method \cite{shewchuk1994introduction}.
The Ansatz tree is also reminiscent of the coupled-cluster Ansatz in quantum chemistry \cite{vcivzek1966correlation, monkhorst1977calculation, purvis1982full, bartlett1989coupled}, which systematically takes into account higher and higher orders of the electron correlation at the cost of increasing the complexity of preparing the Ansatz.

We have performed numerical experiments solving exponentially large linear systems with sizes up to $2^{300} \times 2^{300}$.
These experiments are achieved by considering a special class of linear systems that allows efficient simulation of the proposed quantum algorithm on a classical computer.
To achieve actual quantum advantage, we require either $A$ to be a sum of unitaries that cannot be simulated efficiently on a classical computer or $b$ to be a quantum state generated by some quantum circuit.
It should be noted that there will always be a trade-off between how near-term the quantum algorithm is (the required quantum coherence, entanglement, and interference) and
how much quantum advantage we can expect from executing the quantum algorithm.
An important future direction would be a detailed analysis on the performance of the proposed algorithms under the effect of decoherence and imperfections of real-world quantum devices.
We believe the synthesis and future improvement of the proposed ideas can provide real benefits for solving linear systems when quantum computers achieve sizes of $50$-$70$ high-quality qubits.

\section{Acknowledgements}
We would like to thank Fernando Brandao, Yudong Cao, Richard Kueng, John Preskill, Ansis Rosmanis, Miklos Santha, Thomas Vidick, and Zhikuan Zhao for valuable discussions.
H.H.~is supported by the Kortschak Scholars Program and thanks the hospitality of the Centre for Quantum Technologies.
K.B.~acknowledges the CQT Graduate Scholarship.
P.R.~acknowledges support from Singapore's Ministry of Education and National Research Foundation and Baidu.
\bibliographystyle{apsrev}
\bibliography{VQE}

\appendix

\section{Measurements}
\label{secMeasure}

 \begin{lemma} \label{lemmaPauliMeasurement}
 Let $\epsilon>0$ and $P_k$ be a certain Pauli string over $n$ qubits. Let multiple copies of an arbitrary $n$-qubit quantum state $\ket \psi$ be given. The expectation value $\bra \psi P_k \ket \psi$ can be determined to additive accuracy $\epsilon$ with failure probability at most $\delta$ using $\Ord{\frac{ 1 }{\epsilon^2} \log (\frac{1}{\delta} )}$  copies of $\ket \psi$.
 \end{lemma}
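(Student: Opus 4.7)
The plan is to reduce the measurement of $\bra{\psi} P_k \ket{\psi}$ to a collection of independent computational-basis measurements whose outcomes yield an unbiased $\pm 1$-valued estimator, and then apply a standard Hoeffding-type concentration bound.

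First, I would write $P_k = \bigotimes_{j=1}^n \sigma^{(j)}$ where each $\sigma^{(j)} \in \{I, X, Y, Z\}$. For each qubit $j$, there is a single-qubit Clifford $C_j$ (namely $I$ for $I$ or $Z$, $H$ for $X$, and $HS^{\dagger}$ for $Y$) that diagonalizes $\sigma^{(j)}$ into the computational basis. Applying $C := \bigotimes_{j=1}^n C_j$ to $\ket{\psi}$ and measuring in the computational basis produces a bit string $(b_1,\dots,b_n) \in \{0,1\}^n$. Define the single-shot estimator
\[
\hat{p} := \prod_{j:\, \sigma^{(j)} \neq I} (-1)^{b_j}.
\]
Since $C P_k C^{\dagger} = \bigotimes_j Z^{[\sigma^{(j)} \neq I]}$, a direct calculation shows $\mathbb{E}[\hat{p}] = \bra{\psi} P_k \ket{\psi}$, and by construction $\hat{p} \in \{-1,+1\}$.

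Next, I would take $N$ independent copies of $\ket{\psi}$, apply the procedure above to each, and form the empirical mean $\bar{p} := \frac{1}{N}\sum_{i=1}^N \hat{p}^{(i)}$. Since each $\hat{p}^{(i)}$ is bounded in $[-1,1]$ with mean $\bra{\psi} P_k \ket{\psi}$, Hoeffding's inequality gives
\[
\Pr\bigl[\,|\bar{p} - \bra{\psi} P_k \ket{\psi}| \geq \epsilon\,\bigr] \leq 2 \exp\!\left(-\frac{N \epsilon^2}{2}\right).
\]
Choosing $N = \lceil \frac{2}{\epsilon^2}\log(2/\delta)\rceil = O\!\left(\frac{1}{\epsilon^2}\log(1/\delta)\right)$ makes the right-hand side at most $\delta$, yielding the claimed sample complexity.

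There is no serious obstacle here: the whole argument is textbook, and the only things to be mildly careful about are (i) verifying the $C_j$ indeed diagonalize the corresponding Pauli so that $\hat{p}$ is genuinely unbiased, and (ii) ensuring the estimator is bounded by $1$ in absolute value so that Hoeffding applies with the stated prefactor. Both are immediate. If one wished to sharpen constants, one could replace Hoeffding by Chernoff bounds for $\pm 1$ random variables, but this does not change the stated scaling.
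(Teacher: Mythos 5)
Your proposal is correct and follows essentially the same route as the paper: both reduce the estimation to repeated single-shot $\pm 1$-valued measurements of the Pauli observable and invoke Hoeffding's inequality to get the $\Ord{\frac{1}{\epsilon^2}\log(\frac{1}{\delta})}$ copy count. The only difference is that you spell out the basis-change-and-parity implementation of the Pauli measurement explicitly, which the paper leaves implicit.
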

 \begin{proof}
 A single measurement obtains the outcome $m_\pm = \pm 1$. We have $\bra \psi P_k \ket \psi = p m_+  + (1-p) m_- = 2 p -1$, where $p$ is the probability of measuring $+1$. To estimate this probability, perform independent trials of the Bernoulli test. Each trial has expectation value $p$. We use the statistic $M_+/M$, where $M_+$ is the number of positive outcomes over $M$ trials.
For the error estimate, we require
$P\left [ \vert 2 M_+/M -1 - \bra \psi P_k \ket \psi  \vert \geq \epsilon  \right] \leq \delta$
from which the number of measurements is
$\Ord{\frac{ 1 }{\epsilon^2} \log (\frac{1}{\delta} )}$ via Hoeffding's inequality. %a Chernoff bound.
 \end{proof}
\begin{prop}[Swap test]
Given multiple copies of $n$-qubit quantum states $\ket{u}$ and $\ket{v}$.
There is a quantum algorithm that determines the overlap $ \vert \braket{v}{ u} \vert ^2$ to additive accuracy $\epsilon$ with failure probability at most $\delta$ using $\Ord{\frac{ 1 }{\epsilon^2} \log (\frac{1}{\delta} )}$ copies and $\tOrd{\frac{ 1 }{\epsilon^2} \log (\frac{1}{\delta} )}$ operations.
\end{prop}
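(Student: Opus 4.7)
My plan is to follow the standard swap test construction and then reduce the estimation of $|\braket{v}{u}|^2$ to a Bernoulli estimation problem of the type already handled in Lemma \ref{lemmaPauliMeasurement}. Concretely, I would first describe the circuit: prepare an ancilla qubit in $\ket{0}$, apply a Hadamard to obtain $\tfrac{1}{\sqrt{2}}(\ket{0}+\ket{1})$, then apply a controlled-SWAP (Fredkin) gate between the two $n$-qubit registers holding $\ket{u}$ and $\ket{v}$, controlled on the ancilla, then apply another Hadamard on the ancilla, and finally measure the ancilla in the computational basis. The controlled-SWAP requires $\Ord{n}$ two- and three-qubit gates in the standard decomposition, which accounts for the $\tilde{\Ord{\cdot}}$ gate count versus sample count.

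Next I would compute the probability $p_0$ of obtaining outcome $0$ on the ancilla. A direct calculation of the post-Hadamard ancilla state gives the joint state $\tfrac{1}{2}\ket{0}(\ket{u}\ket{v}+\ket{v}\ket{u})+\tfrac{1}{2}\ket{1}(\ket{u}\ket{v}-\ket{v}\ket{u})$, and taking the squared norm of the $\ket{0}$-branch yields
\begin{equation*}
p_0 \;=\; \tfrac{1}{2}+\tfrac{1}{2}|\braket{v}{u}|^2.
\end{equation*}
Hence $|\braket{v}{u}|^2 = 2p_0-1$, so estimating $p_0$ to additive accuracy $\epsilon/2$ yields an estimate of the overlap to accuracy $\epsilon$.

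Finally, I would invoke Hoeffding's inequality on $M$ independent repetitions of the swap test (each consuming one fresh copy of $\ket{u}$ and one of $\ket{v}$). The empirical estimator $\hat{p}_0 = M_0/M$ satisfies $\Pr[|\hat{p}_0-p_0|\geq \epsilon/2]\leq 2e^{-M\epsilon^2/2}$, so choosing $M=\Ord{\epsilon^{-2}\log(1/\delta)}$ makes the failure probability at most $\delta$. This gives the claimed $\Ord{\epsilon^{-2}\log(1/\delta)}$ copies, and because each swap test uses $\Ord{n}$ gates the total operation count is $\tilde{\Ord{\epsilon^{-2}\log(1/\delta)}}$ in the sense that the $n$ dependence is absorbed into the tilde (or equivalently counted as $\mathrm{polylog}$ in $N=2^n$). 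There is no real obstacle here beyond being careful about the factor of $2$ between the additive error on $p_0$ and on $|\braket{v}{u}|^2$, and about specifying the model in which $\tilde{\Ord{\cdot}}$ suppresses the gate-count overhead of the controlled-SWAP.
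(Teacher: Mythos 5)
Your proposal is correct and follows essentially the same route as the paper: the standard swap-test circuit with one ancilla and a controlled-SWAP, followed by a Hoeffding/Bernoulli estimation argument; the only cosmetic difference is that you phrase the statistic via the outcome probability $p_0=\tfrac12+\tfrac12|\braket{v}{u}|^2$ while the paper phrases it via the ancilla $Z$ expectation value, which are related by the same affine rescaling. No gaps.
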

\begin{proof}
Use an ancilla and perform a controlled swap
$\frac{1}{\sqrt{2}} \left ( \ket 0 + \ket 1 \right ) \ket{u}\ket{v} \to \frac{1}{\sqrt{2}} \left ( \ket 0 \ket{u}\ket{v}  +  \ket 1 \ket{v}  \ket{u}\right )$.
Performing a Hadamard on the ancilla obtains $\frac{1}{2} \left ( \ket 0 (\ket{u}\ket{v}  + \ket{v}\ket{u} ) +   \ket 1 ( \ket{u}\ket{v}  - \ket{v}\ket{u}  )\right ) =: \ket{\xi}$.
Now measure the ancilla in $Z$. The expectation value is
$\bra{\xi} Z\ket{\xi} = \frac{1}{4} \left ( \bra{u}\bra{v}  + \bra{v}\bra{u} \right ) \left (\ket{u}\ket{v}  + \ket{v}\ket{u} \right ) -  \frac{1}{4} \left ( \bra{u}\bra{v}  - \bra{v}\bra{u} \right ) \left (\ket{u}\ket{v}  - \ket{v}\ket{u} \right )=   \vert \braket{v} {u} \vert^2$.
\end{proof}
We can also measure the real and imaginary part separately under a different input model.
\begin{prop}[Hadamard test] \label{propHadamard}
Assume the controlled state preparation
$U_{\rm prep} = \ket 0 \bra 0  \otimes U_{v_0} +  \ket 1  \bra 1 \otimes U_{v_1}$,
with
$U_{v_j} \ket{ 0^n} = \ket{v_j}$.
There is a quantum algorithm that determines $\realpar{ \braket{v_0}{v_1}}$ and $\impar{ \braket{v_0}{v_1}}$ to additive accuracy $\epsilon$ with failure probability at most $\delta$ using $\Ord{\frac{ 1 }{\epsilon^2} \log (\frac{1}{\delta} )}$ applications of $U_{\rm prep}$ and $\tOrd{\frac{ 1 }{\epsilon^2} \log (\frac{1}{\delta} )}$ operations.
\end{prop}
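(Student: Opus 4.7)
The plan is to adapt the standard Hadamard test construction. First I would prepare the ancilla in $\ket{+} = (\ket 0 + \ket 1)/\sqrt 2$ and the $n$-qubit register in $\ket{0^n}$. Applying $U_{\rm prep}$ to this joint state yields $(\ket 0 \ket{v_0} + \ket 1 \ket{v_1})/\sqrt 2$. To extract $\realpar{\braket{v_0}{v_1}}$, I would then apply a Hadamard on the ancilla, giving
$$
\tfrac 12 \bigl(\ket 0 (\ket{v_0} + \ket{v_1}) + \ket 1 (\ket{v_0} - \ket{v_1})\bigr),
$$
and measure the ancilla in the $Z$ basis. A direct computation shows that the expectation $\bra{\xi} (Z \otimes \mathbbm 1) \ket{\xi}$ equals $\realpar{\braket{v_0}{v_1}}$, so each single-shot outcome is a $\pm 1$-valued random variable with mean exactly this quantity.

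For the imaginary part, I would insert a phase gate $S^\dagger = \mathrm{diag}(1, -i)$ on the ancilla between $U_{\rm prep}$ and the final Hadamard. This rotates the intermediate state to $(\ket 0 \ket{v_0} - i \ket 1 \ket{v_1})/\sqrt 2$, and the resulting $Z$-expectation on the ancilla becomes $\impar{\braket{v_0}{v_1}}$ by an analogous calculation. In both variants, each experiment uses exactly one call to $U_{\rm prep}$, together with $\Ord{1}$ single-qubit gates and one measurement.

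The sample-complexity bound then follows from Hoeffding's inequality applied to the averaged $\pm 1$-valued outcomes, precisely as in the proof of Lemma~\ref{lemmaPauliMeasurement}: taking the empirical mean over $M = \Ord{\epsilon^{-2}\log(1/\delta)}$ independent trials is within additive error $\epsilon$ of the true expectation with probability at least $1-\delta$. This gives the claimed $\Ord{\epsilon^{-2}\log(1/\delta)}$ applications of $U_{\rm prep}$, and the total gate count is the same up to $\tOrd{1}$ overhead per trial for the Hadamard, phase, and classical bookkeeping.

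There is essentially no serious obstacle in this proof — the only subtlety is choosing the correct phase convention (insertion of $S^\dagger$ rather than $S$) so that the sign of $\impar{\braket{v_0}{v_1}}$ comes out correctly; the rest is a straightforward concentration argument identical in structure to the Pauli-measurement lemma above.
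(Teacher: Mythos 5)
Your proposal is correct and follows essentially the same route as the paper: the ancilla-controlled preparation followed by a Hadamard and a $Z$ measurement, with an ancilla phase (your $S^\dagger$ after $U_{\rm prep}$ is equivalent to the paper's choice of preparing the ancilla in $(\ket 0 + \alpha\ket 1)/\sqrt 2$ with $\alpha$ imaginary, since the diagonal phase commutes with the controlled preparation) to switch between the real and imaginary parts, and Hoeffding's inequality for the $\Ord{\epsilon^{-2}\log(1/\delta)}$ sample count. No substantive difference from the paper's argument.
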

\begin{proof}
Use an ancilla prepared in $( \ket 0 + \alpha \ket 1 )/\sqrt{2}$ with $\alpha=1$ or $\alpha=i$. Apply $U_{\rm prep}$ to obtain
$\frac{1}{\sqrt{2}} \left ( \ket 0 \ket{v_0} + \alpha \ket 1  \ket {v_1} \right )$. Another Hadamard on the ancilla obtains
$\frac{1}{2} \left ( \ket 0 ( \ket {v_0} + \alpha  \ket{ v_1}) +   \ket 1 ( \ket{v_0}-\alpha \ket{v_1} )\right )=:\ket{\xi}$.
Now measure the ancilla in $Z$. The expectation value is
$\bra \xi  Z \ket \xi =\frac{1}{2}  \left (  \alpha  \braket{ v_0}{ v_1} + \alpha^\ast  \braket{ v_1}{ v_0} \right)$. If $\alpha =1$, then  $\bra \xi  Z  \ket \xi= \realpar{\braket{ v_0}{ v_1}}$. If $\alpha=i$, then
$\bra \xi  Z \ket \xi  = \impar{\braket{ v_0}{ v_1}}$.
\end{proof}

\begin{prop}[Measuring the Hamiltonian Loss Function]
Given Assumptions \ref{assumeAUnitary} on the unitary decomposition of $A$ and multiple copies of the quantum state $\ket x$. The loss function $\bra x  A^2 \ket x - \bra x  A \ket{b}\bra{b} A \ket x$  can be estimated efficiently on a quantum computer. More precisely, an estimate up to additive error $\epsilon$ can be obtained with probability $1-\delta$ using $\Ord{\frac{(\sum_{k} |\beta_{k}|)^4 }{ \epsilon^2} \log \frac{K_A}{\delta}}$ quantum measurements.
\end{prop}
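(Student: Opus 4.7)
The plan is to expand the loss function into a sum of overlaps that can each be estimated by the Hadamard test of Proposition~\ref{propHadamard}, and then control the total measurement cost by a sampling-based Monte Carlo estimator rather than by a naive term-by-term union bound (which would introduce an unwanted $K_A^2$ factor). Write $B := \sum_k |\beta_k|$ and decompose
\begin{equation*}
\bra{x} A^2 \ket{x} \;=\; \sum_{k,k'=1}^{K_A} \beta_k \beta_{k'}\, \bra{x} U_k U_{k'} \ket{x}, \qquad
\bra{x} A \ket{b} \;=\; \sum_{k=1}^{K_A} \beta_k\, \bra{x} U_k \ket{b}.
\end{equation*}
Each overlap $\bra{x} U_k U_{k'} \ket{x}$ is estimated by the Hadamard test with controlled preparation of $\ket{x}$ on both branches and application of the known unitary $U_k U_{k'}$ on one branch; each overlap $\bra{x} U_k \ket{b}$ is estimated similarly, using $U_b \ket{0^n}=\ket{b}$ from Assumption~\ref{assumeB} to prepare one branch.

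To get the claimed $B^4/\epsilon^2$ scaling (rather than $K_A^2 B^4/\epsilon^2$), I would use importance sampling. For the first piece, at each trial draw an index pair $(k,k')$ with probability $|\beta_k||\beta_{k'}|/B^2$, run one shot of the Hadamard test for $\bra{x} U_k U_{k'} \ket{x}$, and record the signed outcome $\mathrm{sign}(\beta_k \beta_{k'})\cdot m$, where $m\in\{\pm 1\}$. This produces a bounded (in $[-1,1]$) unbiased estimator of $\bra{x}A^2\ket{x}/B^2$, so Hoeffding gives accuracy $\epsilon/(2B^2)$ in the normalised quantity, i.e.\ accuracy $\epsilon/2$ in $\bra{x}A^2\ket{x}$, using $\mathcal{O}(B^4/\epsilon^2 \log(1/\delta))$ shots. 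For the second piece, estimate $\mathrm{Re}(\bra{x}A\ket{b})$ and $\mathrm{Im}(\bra{x}A\ket{b})$ by the same importance-sampling trick (each bounded by $B$), obtaining each real scalar to additive error $\epsilon'$; then $|\bra{x}A\ket{b}|^2$ is obtained to error $\lesssim 2B\epsilon' + \epsilon'^2$, so choosing $\epsilon' = \Theta(\epsilon/B)$ gives the desired accuracy using again $\mathcal{O}(B^4/\epsilon^2\log(1/\delta))$ shots per component.

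Finally, combine the three separate estimators (one for $\bra{x}A^2\ket{x}$ and two for the real and imaginary parts of $\bra{x}A\ket{b}$) by the triangle inequality, absorb the constant failure budget via a union bound so that each estimator fails with probability at most $\delta/3$, and verify that the total budget $\mathcal{O}\!\bigl(B^4/\epsilon^2 \cdot \log(K_A/\delta)\bigr)$ is met; the $\log K_A$ appears only to subsume the (constant-order) union bound and the classical pre-processing cost of sampling from the discrete distribution over $k$.

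The main obstacle is the analysis of the $|\bra{x}A\ket{b}|^2$ term: it is not itself the expectation of a single Hadamard-test outcome but a product of two such expectations, so the error propagates non-linearly and one must be careful that the magnitude $|\bra{x}A\ket{b}| \le B$ does not blow up the required precision. Once the two real-valued components are estimated to precision $\epsilon/(4B)$ and the squaring step is analysed via the identity $|S|^2-|\hat S|^2=(|S|+|\hat S|)(|S|-|\hat S|)$, the remaining steps are routine applications of Proposition~\ref{propHadamard} and Hoeffding's inequality.
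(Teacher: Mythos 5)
Your proposal is correct, but it reaches the stated $\Ord{(\sum_k|\beta_k|)^4/\epsilon^2}$ scaling by a genuinely different route than the paper. The paper also works term by term over the decomposition $A^2=\sum_{k,l}\beta_k\beta_l U_kU_l$, but instead of importance sampling it allocates to each term a number of shots proportional to $|\beta_k\beta_l|$ (accuracy $\epsilon/\sqrt{\eta|\beta_k\beta_l|}$ with $\eta=\sum_{k,l}|\beta_k\beta_l|$), adds up the coefficient-weighted variances, and takes a union bound over all $K_A^2$ terms --- which is where the $\log(K_A/\delta)$ in the statement comes from; for the cross term it estimates $\bra{b}A\ket{x}$ and $\bra{x}A\ket{b}$ \emph{independently} and multiplies them, bounding the variance of the product. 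So your parenthetical that a term-by-term treatment ``would introduce an unwanted $K_A^2$ factor'' slightly mischaracterizes the alternative: the weighted per-term allocation also avoids that factor, paying only the logarithmic union-bound overhead. What your Monte Carlo estimator buys is a cleaner analysis (a single bounded unbiased estimator per piece, plain Hoeffding, union bound over only three estimators), and in fact a marginally better failure-probability dependence, $\log(1/\delta)$ in place of $\log(K_A/\delta)$, so the claimed bound is met with room to spare. Your handling of the squared overlap by estimating $\realpar{\bra{x}A\ket{b}}$ and $\impar{\bra{x}A\ket{b}}$ once and propagating the error through the squaring is also fine, provided you use the a priori bounds $|\bra{x}A\ket{b}|\le\sum_k|\beta_k|$ and $|\hat S|\le\sqrt{2}\sum_k|\beta_k|$ so that the $(|S|+|\hat S|)$ factor stays $\Ord{\sum_k|\beta_k|}$ and the $\epsilon'^2$ term is harmless; this replaces the paper's independence trick and works equally well. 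One point of parity rather than a gap: like the paper's own proof, your Hadamard tests implicitly require a controlled preparation circuit for $\ket{x}$ (and for $U_b$), not merely copies of $\ket{x}$, which is consistent with Proposition~\ref{propHadamard}.
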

\begin{proof}
%For the first term, expand $A^2 = \sum_{k} \sum_{l} \beta_{k} \beta_l U_k U_l$.
%Then measure each term $\bra x U_k U_l \ket x$ individually using the Hadamard test in Proposition~\ref{propHadamard} with $\ket{v_0} = \ket{x}$ and $\ket{v_1} = U_k U_l \ket{x}$.
%For each term, perform $\Ord{\vert \beta_k \beta_l \vert \sum_{k', l'} \vert \beta_{k'} \beta_{l'}\vert / \epsilon^2}$ quantum measurements to estimate $\bra x U_k U_l \ket x$ up to variance $\epsilon^2 / (\vert \beta_k \beta_l \vert \sum_{k', l'} \vert \beta_{k'} \beta_{l'}\vert )$.
%Thus the variance in the estimation for $\bra x  A^2 \ket x$ is $\epsilon^2$ using a total of $\Ord{(\sum_{k', l'} \vert \beta_{k'} \beta_{l'}\vert )^2 / \epsilon^2} = \Ord{(\sum_{k} |\beta_{k}|)^4 / \epsilon^2}$ quantum measurements.

From Definition \ref{assumeAUnitary}, define the size parameter $\eta = \sum_{k, l} { \vert\beta_{k} \beta_{l}\vert}$.
The first term of the loss function is $A^2 = \sum_{k} \sum_{l} \beta_{k} \beta_l U_k U_l$.
Measure each term $\bra x U_k U_l \ket x$ individually using the Hadamard test in Proposition~\ref{propHadamard} with $\ket{v_0} = \ket{x}$ and $\ket{v_1} = U_k U_l \ket{x}$.
For each term, perform $\Ord{\frac{  \vert\beta_{k} \beta_{l}\vert \eta }{  \epsilon^2} \log \frac{K_A}{\delta} }$ quantum measurements to produce an estimate $\widetilde{\bra x U_k U_l \ket x}$ of $\bra x U_k U_l \ket x$ up to additive accuracy $  \epsilon / \sqrt{\eta \vert\beta_{k} \beta_{l}\vert  } $ with success probability $1-\frac{\delta}{2 K_A^2}$.
Thus the additive accuracy for the estimate $\widetilde{\bra x  A^2 \ket x}$ of $\bra x  A^2 \ket x$ is $\left \vert \widetilde{\bra x  A^2 \ket x} - \bra x  A^2 \ket x\right \vert
\leq\epsilon$, which is obtained by adding up the variances of each independent estimate scaled by the coefficients $\vert\beta_{k} \beta_{l}\vert^2$ and taking the square root. The total success probability is $\left (1-\frac{\delta}{2 K_A^2}\right)^{K_A^2} \geq e^{-\delta}\geq 1-\delta$.
 The total number of measurements is $\Ord{\frac{\sum_{kl} \vert \beta_k \beta_l \vert \eta}{ \epsilon^2} \log \frac{K_A}{\delta} } = \Ord{\frac{(\sum_{k} \vert \beta_{k}\vert)^4}{ \epsilon^2}  \log \frac{K_A}{\delta}}$.

For the second term, with $A = \sum_k \beta_k U_k$ estimate $\bra{b}A\ket{x} =  \sum_k \beta_k \bra{b} U_k \ket{x}$. Use the Hadamard test in Proposition~\ref{propHadamard} with $\ket{v_0} = \ket{b}$ and $\ket{v_1} = U_k \ket{x}$ to estimate each term $\bra{b} U_k \ket{x}$.
For each term, perform $\Ord{ \frac{\vert \beta_k \vert \left(\sum_{k'} \vert \beta_{k'} \vert\right)^3 }{ \epsilon^2} \log \frac{K_A}{\delta}}$ quantum measurements to estimate $\bra b U_k \ket x$ up to variance $\epsilon^2 /\left (\vert \beta_k \vert \left(\sum_{k'} \vert \beta_{k'}\vert \right)^3 \right)$ with success probability $1-\frac{\delta}{8K_A}$.
Thus the variance in the estimation of $\bra{b}A\ket{x}$ is $\epsilon^2 / (\sum_{k'} \vert \beta_{k'}\vert )^2$ with success probability $1-\frac{\delta}{4}$ using a total of $\Ord{\frac{\left(\sum_{k} \vert \beta_{k}\vert \right)^4}{ \epsilon^2} \log \frac{K_A}{\delta}}$ quantum measurements.
The term $\bra x  A \ket{b}\bra{b} A \ket x$ can be estimated by performing two independent estimations of $\bra x  A \ket{b}$ and $\bra{b} A \ket x$ and multiplying them together.
The variance of $\bra x  A \ket{b} \bra{b} A \ket x$ is $2 \vert \bra x  A \ket{b}\vert^2 \Var[\bra{b} A \ket x]+
\Var[\bra{b} A \ket x]^2$ which is bounded by $4 \vert \bra x  A \ket{b}\vert^2 \Var[\bra{b} A \ket x] \leq 4 \left(\sum_{k} \vert \beta_{k}\vert \right)^2 \cdot \epsilon^2 / \left(\sum_{k} \vert \beta_{k}\vert \right)^2 = 4 \epsilon^2$ with success probability $1-\delta$ and uses a total of $\Ord{\frac{\left(\sum_{k} \vert \beta_{k}\vert \right)^4 }{\epsilon^2} \log \frac{K_A}{\delta}}$ quantum measurements.

Summing over the first and the second term obtains an estimation for the loss function with variance $5 \epsilon^2$.
By considering $\epsilon \leftarrow \epsilon / \sqrt{5}$ and $\delta \leftarrow \delta/4$, an estimation for the loss function with variance $\epsilon^2$ and success probability $1-\delta$  is obtained using $\Ord{\frac{\left(\sum_{k} \vert \beta_{k}\vert \right)^4}{ \epsilon^2} \log\frac{K_A}{\delta}}$ quantum measurements.
%Chebyshev's inequality then yields the desired result.
\end{proof}

\begin{prop}[Modified Hadamard test] \label{propmodiHadamard}
Assume the controlled state preparation
$U_{\rm prep} = \ket 0 \bra 0 \otimes U_{v_0} +  \ket 1 \bra 1 \otimes U_{v_1}$,
with
$U_{v_j} \ket{ 0^n} = \ket{v_j}$ is an $n$-qubit state.
Given an observable $O = U^\dagger D U \in \mathbb{C}^{2^n \times 2^n}$, where $U$ is a unitary matrix that can be implemented efficiently as a quantum circuit, $D$ is a real diagonal matrix, and $D_{ii}$ can be computed efficiently as a classical function $f: 2^n \rightarrow [-1,1]$.
Both $\realpar{ \bra{v_0}O\ket{v_1}}$ and $\impar{ \bra{v_0}O\ket{v_1}}$ can be estimated efficiently on a quantum computer.
%Without loss of generality, let us assume all the eigenvalue of $O$ is between $-1$ and $1$.
More precisely, we can estimate $\realpar{ \bra{v_0}O\ket{v_1}}$ and $\impar{ \bra{v_0}O\ket{v_1}}$ to additive accuracy $\epsilon$ with failure probability at most $\delta$ using $\Ord{\frac{1}{\epsilon^2} \log(\frac{1}{\delta})}$ quantum measurements.
\end{prop}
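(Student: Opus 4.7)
The plan is to adapt the standard Hadamard test of Proposition~\ref{propHadamard} by diagonalizing the observable before measurement and using the classical description of $f$ to build an unbiased estimator whose values lie in $[-1,1]$. Concretely, I would prepare an ancilla in $(\ket 0 + \alpha \ket 1)/\sqrt 2$ with $\alpha = 1$ (real part) or $\alpha = i$ (imaginary part), apply $U_{\rm prep}$ to obtain $\tfrac{1}{\sqrt 2}(\ket 0 \ket{v_0} + \alpha \ket 1 \ket{v_1})$, then apply the controlled-free operation $\mathbbm 1 \otimes U$ to reach $\tfrac{1}{\sqrt 2}(\ket 0 U\ket{v_0} + \alpha \ket 1 U\ket{v_1})$, and finally a Hadamard on the ancilla. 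At that point measure the ancilla in the computational basis to obtain $a \in \{0,1\}$ and simultaneously measure the $n$-qubit register to obtain a bit string $i$, then record the classical value $X := (-1)^a f(i) \in [-1,1]$.

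The second step is to verify unbiasedness. Writing $\ket{w_j} := U\ket{v_j}$, the post-Hadamard state is $\tfrac{1}{2}\bigl(\ket 0 (\ket{w_0}+\alpha\ket{w_1}) + \ket 1 (\ket{w_0}-\alpha\ket{w_1})\bigr)$, so the probability of outcome $(a,i)$ is $\tfrac{1}{4}\bigl|\langle i|w_0\rangle + (-1)^a \alpha \langle i|w_1\rangle\bigr|^2$. Using the identity $|z+w|^2 - |z-w|^2 = 4\,\realpar{z^* w}$, the expectation value becomes
\begin{equation*}
\mathbb E[X] \;=\; \sum_i f(i)\,\realpar{\alpha\,\langle w_0|i\rangle\langle i|w_1\rangle} \;=\; \realpar{\alpha\,\bra{w_0} D \ket{w_1}} \;=\; \realpar{\alpha\,\bra{v_0} O \ket{v_1}},
\end{equation*}
which equals $\realpar{\bra{v_0}O\ket{v_1}}$ for $\alpha=1$ and $-\impar{\bra{v_0}O\ket{v_1}}$ for $\alpha=i$; the latter sign is absorbed into the classical post-processing.

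Finally, because $X \in [-1,1]$ almost surely, averaging $M$ i.i.d.\ copies and applying Hoeffding's inequality gives $\Pr[\,|\bar X - \mathbb E[X]| \geq \epsilon\,] \leq 2\exp(-M\epsilon^2/2)$, so $M = \Ord{\epsilon^{-2}\log(1/\delta)}$ repetitions suffice, matching the claimed bound. The same analysis applied with $\alpha=i$ yields the imaginary part with identical sample complexity, and the two estimates may be produced in parallel or sequentially.

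I do not expect a major obstacle: the only subtlety is making sure $f$ is applied \emph{classically} to the measurement outcome rather than coherently, so that no quantum oracle for $f$ is required and the whole circuit uses only the existing primitives $U_{\rm prep}$ and $U$ plus a single ancilla. The fact that $|f|\leq 1$ is what lets Hoeffding replace a potentially looser Chebyshev/variance bound, which is what allows the sample complexity to match the unmodified Hadamard test.
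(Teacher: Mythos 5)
Your proposal is correct and follows essentially the same route as the paper's proof: apply $U_{\rm prep}$, then $U$ on the register, a Hadamard on the ancilla, measure ancilla and register, classically post-process with $(-1)^a f(i)$, and conclude via Hoeffding since the samples lie in $[-1,1]$. The only cosmetic difference is that you verify unbiasedness by summing outcome probabilities directly (and correctly note the sign for $\alpha=i$), whereas the paper phrases the same computation as the expectation $\bra{\xi} Z\otimes D\ket{\xi}$.
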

\begin{proof}
Use an ancilla prepared in $\left( \ket 0 + \alpha \ket 1 \right)/\sqrt{2}$ with $\alpha=1$ or $\alpha=i$. Apply $U U_{\rm prep}$ to obtain
$\frac{1}{\sqrt{2}} \left ( \ket 0 \otimes U \ket{v_0} + \alpha \ket 1 \otimes U \right \ket{v_1} )$. Another Hadamard on the ancilla gives
$\ket{\xi} = \frac{1}{2} (I \otimes U) \left ( \ket 0 (\ket{v_0} + \alpha  \ket{v_1}) +   \ket 1 ( \ket{v_0} -\alpha \ket{v_1} )\right )$.
Now measure the ancilla in $Z$ to obtain $z_a \in \{\pm 1\}$ and the rest of the state in the computational basis to obtain $b \in \{0, 1\}^n$.
Then compute $z_a f(b)$.
The expectation value is $\bra{\xi} Z\otimes D \ket{\xi} = \frac{1}{2}  \left (  \alpha  \bra{ v_0} O \ket{ v_1} + \alpha^\ast  \bra{ v_1} O \ket{ v_0} \right)$. If $\alpha = 1$, then $\bra{\xi} Z\otimes D \ket{\xi} = \realpar{ \bra{v_0}O\ket{v_1}}$. If $\alpha = i$, then $\bra{\xi} Z\otimes D \ket{\xi} = {\rm Im} \bra{v_0}O\ket{v_1}$.
Because each sample $z_a f(b)$ is between $-1$ and $1$, Hoeffding's inequality shows that taking the average of $\Ord{\frac{1}{\epsilon^2} \log(\frac{1}{\delta})}$ samples results in an estimate within error at most $\epsilon$ with probability at least $1 - \frac{1}{\delta}$.
\end{proof}

\section{Potential problems for local loss function in the presence of entangling gates}
\label{app:moreprob}

Consider the case where $A \in \mathbb{C}^{2^n \times 2^n}$ is Hermitian and unitary (equivalently, all eigenvalue of $A$ are $1$ or $-1$) and $\ket{b} = U_b \ket{0^n}$.
And let $\ket{x(\theta)}$ be some pre-specified Ansatz with variational parameter $\theta$.
We now analyze the local loss function
$$L_L(\theta) = \bra{x(\theta)} A U_b \left(\mathbbm 1 - \frac{1}{n} \sum_{i=1}^n \ket{0_i}\bra{0_i}\right) U_b^\dagger A \ket{x(\theta)} = 1 - \frac{1}{n} \sum_{i=1}^n \bra{x(\theta)} A U_b \ket{0_i}\bra{0_i} U_b^\dagger A \ket{x(\theta)}.$$
When $U_b$ is a random quantum circuit consisting of $\Ord{n^2}$ 1D nearest neighbor two-qubit gates, \cite{brandao2016local} shows that for any quantum state $\ket{x(\theta)}$, the two-design property of the random unitary $U_b^\dagger$ gives
\be \label{eq:tdesign}
\norm{\mathbb{E}_{U_b, A} \left[(U_b^\dagger A \ket{x(\theta)} \bra{x(\theta)} A U_b)^{\otimes 2}\right] - \frac{\mathbbm 1+S}{(2^n + 1)2^n}}_1 \leq \frac{1}{2^{n^2}},
\ee
where $S$ is the swap operator and $\norm{\cdot}_1$ is the trace norm.
Let $R_i = \bra{x(\theta)} A U_b \ket{0_i}\bra{0_i} U_b^\dagger A \ket{x(\theta)}$ be a random variable.
Then using $\frac{1}{2} = \frac{1}{2} \bra{x(\theta)} A U_b (\ket{0_i}\bra{0_i} + \ket{1_i}\bra{1_i}) U_b^\dagger A \ket{x(\theta)}$, we have
$$R_i - \frac{1}{2} = \frac{1}{2} \bra{x(\theta)} A U_b (\ket{0_i}\bra{0_i} - \ket{1_i}\bra{1_i}) U_b^\dagger A \ket{x(\theta)} = \frac{1}{2} \bra{x(\theta)} A U_b Z_i U_b^\dagger A \ket{x(\theta)},$$
where $Z_i$ is the Pauli Z operator on the $i$-th qubit.
Hence by Markov's inequality,
\be \label{eq:markov}
\mathbb{P}\left[ \left|R_i - \frac{1}{2}\right| > \frac{1}{2^{n/4}} \right] \leq 2^{n/2} \mathbb{E}\left[\left|R_i - \frac{1}{2}\right|^2\right] = \frac{1}{4} 2^{n/2} \Tr\left( Z_i^{\otimes 2} \,\, \mathbb{E}_{U_b, A} \left[(U_b^\dagger A \ket{x(\theta)} \bra{x(\theta)} A U_b)^{\otimes 2}\right] \right).
\ee
For any linear operator $P, Q$ and $Q'$, we have
\be \label{eq:basicla}
|\Tr(PQ) - \Tr(PQ')| \leq \norm{P}_\infty \norm{Q - Q'}_1 \implies \Tr(PQ) \leq \Tr(PQ') + \norm{P}_\infty \norm{Q - Q'}_1.
\ee
Combining Inequality~\eqref{eq:tdesign} and \eqref{eq:basicla}, we have
$$\Tr\left( Z_i^{\otimes 2} \,\, \mathbb{E}_{U_b, A} \left[(U_b^\dagger A \ket{x(\theta)} \bra{x(\theta)} A U_b)^{\otimes 2}\right] \right) \leq \Tr\left( Z_i^{\otimes 2} \,\, \frac{\mathbbm 1+S}{(2^n + 1)2^n} \right) + \frac{1}{2^{n^2}} = \frac{1}{2^n} + \frac{1}{2^{n^2}}.$$
Now putting this result back to the Markov's inequality in \eqref{eq:markov}, we have
$$\mathbb{P}\left[ \left|R_i - \frac{1}{2}\right| > \frac{1}{2^{n/4}} \right] \leq \frac{1}{4}\left( \frac{1}{2^{n/2}} + \frac{1}{2^{n^2 - n/2}} \right) \leq \frac{1}{2^{n/2}}.$$
Using union bound and the definition of the local loss function, we have
$$\left|L_L(\theta) - \frac{1}{2}\right| \leq \frac{1}{2^{n/4}},$$
with probability at least $1 - n / 2^{n/2}$.
Hence, for any $\theta_1, \ldots, \theta_M$ (there can be an exponential number of them, i.e., $M \leq \frac{2^{n/2}}{50 n}$),
$$\left| L_L(\theta_i) - \frac{1}{2}\right| \leq \frac{1}{2^{n/4}}, \quad \forall i = 1, \ldots, M,$$
with probability at least $0.98$.
This means that even if we randomly select a very large number of variational parameters, the local loss function will still be exponentially close to the value $1/2$ for all of them.

Now, consider a local expansion of $L_L(\theta)$ around $\theta^0$,
$$1 - \frac{1}{n} \sum_{i=1}^n \left( \bra{x(\theta^0)} A U_b \ket{0_i} \bra{0_i} U_b^\dagger A \ket{x(\theta^0)} -  \sum_{k=1}^m 2 \realpar{\bra{x(\theta^0)} A U_b \ket{0_i} \bra{0_i} U_b^\dagger A \frac{\partial}{\partial \theta_k} \ket{x(\theta^0)}} \delta \theta_k \right) + \Ord{m^2 \delta \theta_i^2}.$$
We analyze the partial derivative of the loss function in $\theta_k$.
We consider the random variable $g_k = 2 \realpar{\bra{x(\theta^0)} A U_b \ket{0_i} \bra{0_i} U_b^\dagger A \frac{\partial}{\partial \theta_k} \ket{x(\theta^0)}}$, and define $u = \ket{x(\theta^0)}$ and $v = \frac{\partial}{\partial \theta_k} \ket{x(\theta^0)}$.
We assume that $\norm{v} = \norm{\frac{\partial}{\partial \theta_k} \ket{x(\theta^0)}} \leq C$, for some constant $C$, which is true when the variational parameters are single-qubit rotation angles.
The random variable $g_k$ could be written as
\be \label{eq:gradllf}
g_k = \Tr\left( \left(v u^\dagger + u v^\dagger\right) \,\, AU_b \ket{0_i} \bra{0_i} U_b^\dagger A\right).
\ee
We first note that the normalization condition of quantum states $\braket{x(\theta)}{x(\theta)} = 1$ implies that $\frac{\partial}{\partial \theta_k} \bra{x(\theta)} \ket{x(\theta)} = u^\dagger v + v^\dagger u = 0$.
Using the results in \cite{brandao2016local}, the two-design property of the random unitary $U_b$ gives
\be \label{eq:grad2design}
\norm{\mathbb{E}_{U_b, A} \left[(A U_b \ket{0_i} \bra{0_i} U_b^\dagger A)^{\otimes 2}\right] - \frac{(2^{2n - 2} - 2^{-1})\mathbbm 1 + (2^{n-1} - 2^{n-2})S}{2^{2n} - 1}}_1 \leq \frac{1}{2^{n^2}}.
\ee
Now we bound the second moment of the random variable $g_k$,
\begin{align*}
\mathbb{E}_{U_b, A} \left[ g_k^2 \right] & = \Tr\left( \left(v u^\dagger + u v^\dagger\right)^{\otimes 2} \,\, \mathbb{E}_{U_b, A} \left[ (AU_b \ket{0_i} \bra{0_i} U_b^\dagger A)^{\otimes 2} \right] \right) \\
& \leq \Tr\left( \left(v u^\dagger + u v^\dagger\right)^{\otimes 2} \,\, \frac{(2^{2n - 2} - 2^{-1})\mathbbm 1 + (2^{n-1} - 2^{n-2})S}{2^{2n} - 1} \right) + \frac{1}{2^{n^2}} \norm{\left(v u^\dagger + u v^\dagger\right)^{\otimes 2}}_\infty,\\
& = \frac{2^{n-1} - 2^{n-2}}{2^{2n} - 1} \Tr\left(\left(v u^\dagger + u v^\dagger\right)^{\otimes 2} S\right) + \frac{1}{2^{n^2}} \norm{\left(v u^\dagger + u v^\dagger\right)^{\otimes 2}}_\infty,\\
& \leq \frac{2^{n-1} - 2^{n-2}}{2^{2n} - 1} 4C^2 + \frac{1}{2^{n^2}} 4C^2 \leq \frac{8 C^2}{2^n}.
\end{align*}
The first line uses Equation~\eqref{eq:gradllf}, and the second line uses Inequality~\eqref{eq:basicla} and~\eqref{eq:grad2design}.
The third line uses the fact that $u^\dagger v + v^\dagger u = 0$, while the last line uses the fact that $\norm{v} \leq C$.
Again by Markov's inequality, we have
$$\mathbb{P} \left[|g_k| > \frac{2 C}{2^{n/4}} \right] \leq \frac{2^{n/2} \mathbb{E}_{U_b, A}\left[g_k^2\right]}{4 C^2} \leq \frac{2}{2^{n/2}}.$$
Thus given $m \ll 2^{n/2}$, union bound on the above result gives
$$|g_k| \leq \frac{2 C}{2^{n/4}}, \,\, \forall k = 1, \ldots, m,$$
with high probability (more precisely, with probability $\geq 1 - 2m / 2^{n/2}$).
This means that the gradient would be exponentially small and the optimization landscape is locally flat around $\theta^0$.
From the analysis, we can see that the local loss function $L_L(\theta)$ will plateau at the value $1/2$ with an exponentially small deviation at each point and an exponentially small gradient.

\section{Detailed analysis on optimizing combination coefficients}
\label{app:optcomb}

The necessary quantum measurements for obtaining the inputs $Q, r$ needed to solve the combination coefficients is discussed in the following lemma.
Note that here we are interested in a single sample of each $Q_{ij}$ and $r_i$, each of which is made up of a sum of terms $K_A^2$ and $K_A$, respectively.
Hence, each term in the sum is sampled once and the terms are added up to provide the single sample.
\begin{lemma} \label{lem:measurequad}
Given the unitary decomposition of matrix $A$ from Assumption \ref{assumeAUnitary}, the circuit for $\ket b$ via Assumption \ref{assumeB}, and let the circuit for creating $\ket {u_i}$ be $W_i$. Assume that controllable unitaries of all involved unitaries can be constructed. Then, we can obtain a single sample of
$(V^\dagger V)_{ij} = \bra {u_i} A^\dagger A \ket {u_j}$ and $q_i = \bra {u_i} A^\dagger \ket b$
from $O(K_A^2)$ and $\Ord{K_A}$ quantum measurements, where the magnitude of each sample is bounded by $\Ord{\left (\sum_{i=1}^{K_A} \vert \alpha_k \vert \right )^2}$ and $\Ord{\sum_{i=1}^{K_A} \vert \alpha_k \vert}$, respectively.
\end{lemma}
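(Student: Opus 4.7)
The plan is to reduce the estimation of each inner product to a sum of Hadamard tests applied to unitaries built from the decomposition in Assumption \ref{assumeAUnitary}. Writing $A^\dagger A = \sum_{k,l=1}^{K_A} \beta_k^{\ast} \beta_l \, U_k^\dagger U_l$ and $A^\dagger = \sum_{k=1}^{K_A} \beta_k^\ast U_k^\dagger$, we get
$$
(V^\dagger V)_{ij} = \sum_{k,l=1}^{K_A} \beta_k^{\ast} \beta_l \, \bra{u_i} U_k^\dagger U_l \ket{u_j}, \qquad q_i = \sum_{k=1}^{K_A} \beta_k^{\ast} \, \bra{u_i} U_k^\dagger \ket{b}.
$$
Each individual matrix element $\bra{u_i} U_k^\dagger U_l \ket{u_j}$ and $\bra{u_i} U_k^\dagger \ket{b}$ is an inner product between two state vectors that admit controlled preparation: use $W_i$ for $\ket{u_i}$, $W_j$ followed by $U_k^\dagger U_l$ for $U_k^\dagger U_l \ket{u_j}$, and $U_b$ for $\ket b$. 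This is exactly the setting of Proposition \ref{propHadamard}.

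First, I would invoke Proposition \ref{propHadamard} once (for real) and once (for imaginary) on each of the $K_A^2$ terms appearing in $(V^\dagger V)_{ij}$ to produce a single sample $s_{kl} \in \{\pm 1\} + i\{\pm 1\}$ whose expectation equals $\bra{u_i} U_k^\dagger U_l \ket{u_j}$. A sample of $(V^\dagger V)_{ij}$ is then formed on the classical side as $\hat{Q}_{ij} = \sum_{k,l} \beta_k^{\ast} \beta_l \, s_{kl}$. This uses $O(K_A^2)$ quantum measurements, as claimed. The same procedure with $\ket b$ in place of $U_l \ket{u_j}$ and a single sum over $k$ gives a sample of $q_i$ using $O(K_A)$ measurements.

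Next, I would verify the stated magnitude bound. Because the outcome of each Hadamard test is bounded in absolute value by one, we have $|s_{kl}| \le \sqrt{2}$ for the combined real/imaginary sample, and therefore
$$
|\hat{Q}_{ij}| \le \sqrt{2}\sum_{k,l=1}^{K_A} |\beta_k \beta_l| = \sqrt{2}\Bigl(\sum_{k=1}^{K_A} |\beta_k|\Bigr)^{\!2},
$$
with the analogous bound $|\hat{q}_i| \le \sqrt{2} \sum_k |\beta_k|$ for the linear term. Absorbing the $\sqrt{2}$ into the $O(\cdot)$ gives the stated bounds.

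The substance of the proof is essentially bookkeeping; the only genuine subtlety is the controlled-preparation requirement of Proposition \ref{propHadamard}. I would note explicitly that, given controllable versions of $U_b$, $W_i$, $W_j$, and each $U_k$, one can build the controlled preparations $U_{v_0}$ and $U_{v_1}$ for both $\ket{u_i}$ and $U_k^\dagger U_l \ket{u_j}$ (respectively $\ket b$) with only a constant-factor overhead, so no additional resources beyond those already assumed enter the statement. No convergence or concentration analysis is needed here because the lemma only concerns a single unbiased sample with bounded magnitude; the reduction to the desired accuracy $\epsilon$ in Proposition \ref{prop:guaranteesubspaceinformal} is then a standard matrix-concentration argument carried out separately.
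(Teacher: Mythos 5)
Your proposal is correct and follows essentially the same route as the paper: expand $A^\dagger A$ and $A^\dagger$ in the unitary decomposition, estimate each of the $K_A^2$ (resp.\ $K_A$) terms with a single run of the Hadamard test of Proposition~\ref{propHadamard} for the real and imaginary parts, and recombine classically, giving samples bounded by $\Ord{(\sum_k |\beta_k|)^2}$ and $\Ord{\sum_k |\beta_k|}$. The only cosmetic difference is that the paper writes each term as $\bra{0^n} W_i^\dagger U_k^\dagger U_{k'} W_j \ket{0^n}$ and controls the single composite unitary $W_i^\dagger U_k^\dagger U_{k'} W_j$, whereas you control the two state preparations separately; both realizations are covered by the lemma's assumption that controlled versions of all involved unitaries are available.
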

\begin{proof}
Note that $\bra {u_i} A^\dagger A \ket {u_j} = \sum_{k,k'=1}^{K_A} \alpha_k \alpha_{k'} \bra { 0^n} W_i^\dagger U_k^\dagger U_{k'} W_j \ket { 0^n}$. Construct the unitaries
$U_{prep,k,k'} = \ket 0 \bra 0 \otimes \mathbbm 1 +  \ket 1  \bra 1 \otimes W_i^\dagger U_k^\dagger U_{k'} W_j$.  Use the Hadamard test via Proposition \ref{propHadamard} to obtain a single estimate for $\bra { 0^n} W_i^\dagger U_k^\dagger U_{k'} W_j \ket { 0^n}$. The absolute value of the estimate is bounded by $\Ord{1}$ because the estimate for the real and imaginary part are both bounded by $1$.
We can estimate $\bra { 0^n} W_i^\dagger U_k^\dagger U_{k'} W_j \ket { 0^n}$ for all $k, k'$ using $\Ord{K_A^2}$ quantum measurements.
Then we obtained an estimate for $\realpar{\bra {u_i} A^\dagger A \ket {u_j}}$, where the absolute value is bounded by $\Ord{\left (\sum_{i=1}^{K_A} |\alpha_k|\right)^2}$.
Similar steps and constructing the unitaries
$U_{prep,k,b} = \ket 0 \bra 0 \otimes \mathbbm 1 +  \ket 1  \bra 1 \otimes W_i^\dagger U_k^\dagger U_b$ allow us to obtain an estimate for $\bra {w_i} A^\dagger \ket b$ with absolute value bounded by $\Ord{\sum_{i=1}^{K_A} \vert \alpha_k\vert}$ using $\Ord{K_A}$ quantum measurements.
\end{proof}

Because of the quantum measurements, we can only obtain an estimate of $Q$ and $r$.
The following proposition summarizes the required number of measurements to achieve a good solution using the approximations $\hat Q$ and $\hat r$ obtained from quantum measurement.
This proposition focuses on real optimization where the variable $z$ is real.
\begin{prop}
\label{prop:guaranteesubspace}
Consider a quadratic function $L(z) = z^T Q z - 2 r^T z $, where $Q \in R^{m\times m}$ is positive definite and $z \in \mathbb{R}^m$.
Let $z^\ast = {\rm arg} \min_{z \in \mathbbm R^m} L(z)$.
Let  $\hat{Q}$ and $\hat{r}$ be estimates of $Q$ and $r$, where each entry is independently obtained as in Lemma~\ref{lem:measurequad}. Let $B>0$ such that $\hat{Q}_{ij} \leq B$ and $\hat{r}_j \leq B$.
The solution $\hat z$ of the quadratic optimation $\hat{L}(z) = z^T \hat{Q} z - 2 \hat{r}^T z$, satisfies
\be
 L(\hat z) - L(z^\ast) \leq \epsilon.\nonumber
\ee
if a total number of measurements of $\Ord{B^2 K_A^2 m^3 \norm{Q} \norm{Q^{-1}}^2 (1+\norm{z^\ast})^2 / \epsilon}$ is used.
\end{prop}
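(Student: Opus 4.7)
The plan is to reduce the optimization-error bound to a perturbation bound on the minimizer $\hat z = \hat Q^{-1}\hat r$ versus $z^\ast = Q^{-1}r$, and then to turn the required entry-wise accuracy into a sample-complexity statement via Hoeffding together with a matrix-concentration argument.

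First I would exploit the fact that $L$ is a strongly convex quadratic with Hessian $2Q$ and $\nabla L(z^\ast)=0$, so
\begin{equation*}
L(\hat z)-L(z^\ast) \;=\; (\hat z - z^\ast)^T Q (\hat z - z^\ast)
\;\le\; \|Q\|\,\|\hat z - z^\ast\|_2^2 .
\end{equation*}
So it suffices to control $\|\hat z - z^\ast\|_2$. Writing
\begin{equation*}
\hat z - z^\ast \;=\; \hat Q^{-1}(\hat r - r) + (\hat Q^{-1}-Q^{-1})r
\;=\; \hat Q^{-1}(\hat r - r) + \hat Q^{-1}(Q-\hat Q)\,z^\ast ,
\end{equation*}
and assuming the perturbation is small enough that $\|Q-\hat Q\|\le \tfrac{1}{2}\|Q^{-1}\|^{-1}$ (so that $\|\hat Q^{-1}\|\le 2\|Q^{-1}\|$ by a Neumann-series argument), I get
\begin{equation*}
\|\hat z - z^\ast\|_2 \;\le\; 2\|Q^{-1}\|\bigl(\|\hat r-r\|_2 + \|\hat Q - Q\|\,\|z^\ast\|_2\bigr).
\end{equation*}
Combining, the target is
\begin{equation*}
\|Q\|\,\|Q^{-1}\|^2\,(1+\|z^\ast\|_2)^2\bigl(\|\hat r-r\|_2 + \|\hat Q-Q\|\bigr)^2 \;\lesssim\; \epsilon .
\end{equation*}

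Next I would translate the spectral-norm errors into per-entry sampling errors. By Lemma~\ref{lem:measurequad}, each single sample of an entry of $Q$ or $r$ is bounded in magnitude by $B$, costing $O(K_A^2)$ or $O(K_A)$ quantum measurements respectively. Averaging $T$ such samples per entry, Hoeffding gives per-entry additive error $\delta = O(B/\sqrt T)$ with high probability. The key step — and the main technical obstacle — is to avoid a naive union bound that would yield $\|\hat Q-Q\|\le m\,\delta$ and produce an $m^4$ scaling. Instead I would apply a matrix-concentration result (matrix Bernstein, or equivalently the standard spectral-norm bound for a symmetric random matrix with mean-zero, independent, bounded entries) to obtain
\begin{equation*}
\|\hat Q - Q\| \;=\; O\!\bigl(\sqrt{m}\,\delta\bigr),\qquad \|\hat r - r\|_2 \;=\; O\!\bigl(\sqrt{m}\,\delta\bigr),
\end{equation*}
with success probability $1-\mathrm{poly}(1/m)$. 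Plugging into the error identity yields
\begin{equation*}
L(\hat z)-L(z^\ast) \;\lesssim\; \|Q\|\,\|Q^{-1}\|^2\,m\,\delta^2\,(1+\|z^\ast\|_2)^2 \;\le\; \epsilon ,
\end{equation*}
which forces $\delta^2 \lesssim \epsilon / \bigl(\|Q\|\|Q^{-1}\|^2 m (1+\|z^\ast\|_2)^2\bigr)$, i.e.\ $T = O\!\bigl(B^2\,\|Q\|\|Q^{-1}\|^2 m (1+\|z^\ast\|_2)^2/\epsilon\bigr)$ samples per entry.

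Finally I would sum the measurement cost: $m^2$ entries of $Q$ at cost $K_A^2 T$ each dominates the $m$ entries of $r$ at cost $K_A T$ each, giving a total of
\begin{equation*}
O\!\bigl(K_A^2\,m^2\,T\bigr) \;=\; O\!\bigl(B^2 K_A^2\,m^3\,\|Q\|\,\|Q^{-1}\|^2\,(1+\|z^\ast\|_2)^2/\epsilon\bigr),
\end{equation*}
matching the proposition. I would also check that the ``small perturbation'' assumption $\|\hat Q-Q\|\le \tfrac{1}{2}\|Q^{-1}\|^{-1}$ is already implied by the choice of $\delta$ above (since $\sqrt m\,\delta\ll \|Q^{-1}\|^{-1}$ once $\epsilon$ is sufficiently small compared with $\|Q\|$), and absorb any failure probabilities from the Hoeffding/matrix-concentration steps into a union bound. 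The one technical point that needs care is the matrix-concentration step: the $\sqrt m$ (rather than $m$) spectral-norm bound is exactly what delivers the advertised $m^3$ scaling, so the argument must explicitly invoke a Bernstein-type inequality for the symmetric matrix $\hat Q - Q$ with independent (above-diagonal) bounded entries rather than a crude entry-wise union bound.
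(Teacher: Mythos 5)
Your proposal is correct and follows essentially the same route as the paper: per-entry estimation with $B$-bounded samples, a spectral-norm bound $\norm{\hat Q - Q} = \Ord{\sqrt{m}\,\delta}$ from random-matrix concentration (the paper cites Bandeira--van Handel where you invoke matrix Bernstein), a perturbation bound on the minimizer, and the exact identity $L(\hat z)-L(z^\ast)=(\hat z - z^\ast)^T Q(\hat z - z^\ast)\le \norm{Q}\norm{\hat z - z^\ast}^2$. The only cosmetic difference is that you control $\norm{\hat Q^{-1}}$ via a Neumann series under $\norm{\hat Q - Q}\le \tfrac12\norm{Q^{-1}}^{-1}$, whereas the paper rearranges $\hat z - z^\ast = Q^{-1}\bigl(\hat r - r - (\hat Q - Q)\hat z\bigr)$ and divides by $1-\norm{Q^{-1}}\norm{\hat Q - Q}\ge \tfrac12$; these yield the same bound under the same condition.
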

\begin{proof}

We estimate each entry in $Q, r$ independently.
To obtain one sample on an entry of $Q, r$, we need $\Ord{K_A^2}$ quantum measurements.
Because each sample of an entry results in a value bounded by $B$,
the average of $\Ord{B^2 T}$ samples (equivalently $\Ord{B^2 K_A^2 T}$ quantum measurements) on a single entry gives a
%sub-Gaussian % we may not need to say sub gaussian, since the variance bound seems enough
random variable with variance $\Ord{1 / T}$.
By performing $\Ord{B^2 K_A^2 m^2 T}$ quantum measurements, we obtain an independent estimate for all the entries in $Q, r$.
Standard results in random matrix theory \cite{bandeira2016sharp} give $\norm{\hat{Q} - Q} \leq \Ord{\sqrt{m / T}}$ and $\norm{\hat{r} - r} \leq \Ord{\sqrt{m / T}}$ with high probability.

Let $z^\ast = Q^{-1} r$ be the solution to the original problem. The solution $\hat z = \hat{Q}^{-1} \hat{r}$ minimizes  $\hat{L}(z) = z^T \hat{Q} z - 2 \hat{r}^T z$.
Thus we have
$\hat{Q} \hat z - Q \hat z + Q \hat z - Q z^\ast = \hat{r} - r.$
This gives $(\hat z - z^\ast) = Q^{-1} (\hat{r} - r - (\hat{Q} - Q) \hat z)$. Hence $\norm{\hat z - z^\ast} \leq \norm{Q^{-1}}(\norm{\hat{r} - r} + \norm{\hat{Q} - Q} \norm{\hat z}) \leq \norm{Q^{-1}} \norm{\hat{r} - r} + \norm{Q^{-1}} \norm{\hat{Q} - Q} \norm{z^\ast} + \norm{Q^{-1}} \norm{\hat{Q} - Q} \norm{\hat z - z^\ast}$.
This gives
$$\norm{\hat z - z^\ast} \leq \frac{\norm{Q^{-1}}(\norm{\hat{r} - r} + \norm{\hat{Q} - Q} \norm{z^\ast})}{1 - \norm{Q^{-1}} \norm{\hat{Q}-Q}} \leq \sqrt{\epsilon / \norm{Q}}.$$
The last inequality requires setting $T > C m \norm{Q} \norm{Q^{-1}}^2 (1+\norm{z^\ast})^2 / \epsilon$, with a constant $C$, such that $1 - \norm{Q^{-1}} \norm{\hat{Q} - Q} \geq 1/2$ and $\norm{Q^{-1}}(\norm{\hat{r} - r} + \norm{\hat{Q} - Q} \norm{z^\ast}) \leq \frac{1}{2} \sqrt{\epsilon / \norm{Q}}$.
Because $L(z)$ has zero gradient at $z^\ast$, we have $$L(\hat z) - L(z^\ast) \leq \norm{Q} \norm{\hat z - z^\ast}^2 \leq \epsilon.$$
The total number of quantum measurements is
$$\Ord{B^2 K_A^2 m^2 T} = \Ord{B^2 K_A^2 m^3 \norm{Q} \norm{Q^{-1}}^2 (1+\norm{z^\ast})^2 / \epsilon}.$$
\end{proof}

The following is a detailed statement on the BQP-completeness for optimizing combination coefficients.

\begin{prop}
\label{prop:bqpcompsubspace}
Given a Hermitian matrix $A$ that have an efficient unitary decomposition and each entry can be efficiently computed classically, a quantum circuit that generates $\ket{b}$, and two quantum circuits for $\ket{u_1}, \ket{u_2}$.
It is BQP-complete to output $\hat{\alpha}_1, \hat{\alpha}_2 \in \mathbb{C}$, such that
\begin{equation}
  \label{eq:losscondition}
  L_R\left(\sum_{i=1}^2 \hat{\alpha}_i\ket{u_i} \right) \leq \min_{\alpha_1, \alpha_2 \in \mathbb{C}} L_R \left ( \sum_{i=1}^2 \alpha_i\ket{u_i} \right)  + \epsilon.
\end{equation}
\end{prop}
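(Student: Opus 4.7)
My plan is to establish both directions of BQP-completeness: membership in BQP and BQP-hardness. The membership direction will follow almost directly from the apparatus already developed (Lemma~\ref{lem:measurequad} and Proposition~\ref{prop:guaranteesubspace}), while the hardness direction will require a careful reduction from a canonical BQP-complete problem.

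For BQP membership, I would invoke Proposition~\ref{prop:guaranteesubspace} in the special case $m=2$. The quadratic form governing the optimization is $L_R(\alpha) = \alpha^\dagger (V^\dagger V) \alpha - 2\realpar{q^\dagger \alpha} + 1$, whose coefficients $V^\dagger V \in \mathbb{C}^{2\times 2}$ and $q \in \mathbb{C}^2$ we can sample via the controlled Hadamard test of Proposition~\ref{propHadamard} combined with the unitary decomposition of $A$. After casting the complex problem to a real one as in Section~\ref{sec:optcombparam}, we run Proposition~\ref{prop:guaranteesubspace} with $m=2$, obtaining an $\epsilon$-optimal $\hat\alpha$ using $\Ord{\mathrm{poly}(K_A, 1/\epsilon)}$ quantum measurements and polynomial classical post-processing. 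This places the problem in BQP.

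For BQP-hardness, I would reduce from the canonical BQP-complete problem of estimating $\realpar{\bra{0^n} W \ket{0^n}}$ to inverse polynomial precision for an arbitrary efficient quantum circuit $W$. The idea is to choose the instance $A = \mathbbm{1}$, $\ket{b} = W\ket{0^n}$, $\ket{u_1} = \ket{0^n}$, and $\ket{u_2} = X_1 \ket{0^n}$ (the latter orthogonal to $\ket{u_1}$). Under this choice, $A$ is Hermitian with the trivial unitary decomposition $A = \mathbbm{1}$, its entries are trivially classically computable, and all required circuits are efficient. Direct computation yields $V^\dagger V = \mathbbm{1}$ and $q = (\bra{0^n}W\ket{0^n}, \bra{0^n}X_1 W\ket{0^n})^T$, so that the unique minimizer is $\alpha^\ast = q$. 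Since $L_R(\alpha) - L_R(\alpha^\ast) = \norm{\alpha - \alpha^\ast}_2^2$, the $\epsilon$-optimality condition~\eqref{eq:losscondition} forces $\vert \hat\alpha_1 - \bra{0^n}W\ket{0^n}\vert \leq \sqrt{\epsilon}$. Choosing $\epsilon = 1/\mathrm{poly}(n)$, the real part $\realpar{\hat\alpha_1}$ therefore estimates $\realpar{\bra{0^n} W \ket{0^n}}$ to inverse-polynomial accuracy, completing the reduction.

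The main obstacle will be the hardness side, specifically verifying that the reduction instance falls strictly within the premises of the proposition: one must check that all four input circuits (for $\ket{b}$, $\ket{u_1}$, $\ket{u_2}$, and the unitary decomposition of $A$) are efficient, that $A = \mathbbm{1}$ is an admissible Hermitian matrix with efficient entry access, and that the promise gap inherited from the BQP-complete amplitude estimation problem is preserved through the square-root loss in precision ($\epsilon \mapsto \sqrt{\epsilon}$). A secondary subtlety is that the proposition only promises loss-optimality and not coordinate-wise optimality of $\hat\alpha$, but because $V^\dagger V = \mathbbm{1}$ the loss gap equals the squared Euclidean distance to $\alpha^\ast$, so this step is painless here. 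Everything else amounts to bookkeeping with the measurement costs already established in Appendix~\ref{secMeasure}.
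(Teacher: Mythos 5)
Your proof is correct, but your hardness reduction is genuinely different from the paper's. The paper keeps a nontrivial $A$: it appends one ancilla qubit, takes $A$ to be a CNOT controlled by the circuit's output qubit, sets $\ket{b}=\ket{u_1}=\ket{0}\otimes U_T\cdots U_1\ket{0^n}$ and $\ket{u_2}=\ket{1}\otimes U_T\cdots U_1\ket{0^n}$, and shows that the optimal coefficients are exactly the output probabilities $(P_0,P_1)$ of the circuit, so an $\epsilon$-optimal $\hat\alpha$ reveals them. You instead set $A=\mathbbm 1$, $\ket{b}=W\ket{0^n}$, $\ket{u_1}=\ket{0^n}$, $\ket{u_2}=X_1\ket{0^n}$, so the optimal coefficients are the amplitudes $\bigl(\bra{0^n}W\ket{0^n},\,\bra{0^n}X_1W\ket{0^n}\bigr)$, and you reduce from amplitude estimation rather than output-probability estimation. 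Both reductions exploit the same structural fact --- $A\ket{u_1},A\ket{u_2}$ orthonormal, hence $V^\dagger V=\mathbbm 1$ and the loss gap equals $\norm{\hat\alpha-\alpha^\ast}_2^2$ --- and both hardness sources are standard BQP-complete problems (the amplitude version is in fact used in a commented-out alternative proof in the source). Your version is slightly more minimal (no ancilla, trivial $A$), while the paper's keeps a nontrivial linear-system operator and lands directly on measurement probabilities; either is acceptable for the proposition as stated. You also explicitly argue BQP membership via Proposition~\ref{prop:guaranteesubspace} with $m=2$, which the paper leaves implicit; note only that the measurement bound there carries factors $\norm{Q^{-1}}$ and $\norm{z^\ast}$, so membership as you state it tacitly assumes these are polynomially bounded (e.g.\ by regularizing or handling near-singular Gram matrices), a caveat shared with the paper's implicit containment argument rather than a flaw specific to your write-up.
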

\begin{proof}
Consider a quantum circuit consisting of $T$ single- and two-qubit gates $U_1, \ldots, U_T$ acting on $n$ qubits.
Determining the probability of measuring $0$ in the first qubit after applying $U_1, \ldots, U_T$ on $\ket{0^n}$, $P_0 \equiv \bra{0^n} U_1^\dagger \ldots U_T^\dagger  (\ket{0}\bra{0} \otimes \mathbbm 1 \otimes \ldots \otimes \mathbbm 1) U_T \ldots U_1 \ket{0^n}$, up to small error is BQP-complete.
The same is true for the probability of measuring $1$, $P_1 = \bra{0^n} U_1^\dagger \ldots U_T^\dagger  (\ket{1}\bra{1} \otimes \mathbbm 1 \otimes \ldots \otimes \mathbbm 1) U_T \ldots U_1 \ket{0^n}$.
We now construct a linear system of dimension $2^{n+1}$, involving $1+ n$ qubits where in the notation the first qubit is now the added one.
Let the matrix $A$ be a simple controlled-NOT gate, controlled by the second qubit and acting on the first qubit.
Note that $A$ is both Hermitian and unitary, and each entry can be efficiently computed classically.
The three quantum states are given by
$$\ket{b} = \ket{0} \otimes U_T \ldots U_1 \ket{0^{n}}, \,\,\,\,
\ket{u_1} = \ket{0} \otimes U_T \ldots U_1 \ket{0^{n}}, \,\,\,\,
\ket{u_2} = \ket{1} \otimes U_T \ldots U_1 \ket{0^{n}}.$$
We can easily see that $\bra{b} A \ket{u_1} = P_0$. Similarly, $\bra{b} A \ket{u_2} = P_1$.
Suppose there is an algorithm that can efficiently find $\hat{\alpha}_1, \hat{\alpha}_2 \in \mathbb{C}$ that satisfies Equation~\eqref{eq:losscondition}.
We now show that $\hat{\alpha}_1, \hat{\alpha}_2$ can be used to infer $P_0, P_1$.
By expansion, we have
$$\Big\Vert A\Big(\sum_{i=1}^2 \alpha_i \ket{u_i}\Big) - \ket{b} \Big\Vert^2_2 = | \alpha_1 |^2 + | \alpha_2 |^2 - 2\realpar{\alpha_1 \bra{b} A \ket{u_1} + \alpha_2 \bra{b} A \ket{u_2}} + 1.$$
Using $\bra{b} A \ket{u_1} = P_0$ and $\bra{b} A \ket{u_2} = P_1$, we have
$\norm{ A\sum_{i=1}^2 \alpha_i\ket{u_i} - \ket{b} }^2_2 = |\alpha_1 - P_0|^2 + |\alpha_2 - P_1|^2 + (1 - P_0^2 - P_1^2)$.
The optimal combination parameters are $\alpha_1 = P_0$ and $\alpha_2 = P_1$.
Hence Equation~\eqref{eq:losscondition} can be rewritten as
$$|\hat{\alpha}_1 - P_0|^2 + |\hat{\alpha}_2 - P_1|^2 + (1 - P_0^2 - P_1^2) \leq (1 - P_0^2 - P_1^2) + \epsilon \implies |\hat{\alpha}_1 - P_0|^2 + |\hat{\alpha}_2 - P_1|^2 \leq \epsilon.$$
This means the algorithm can use $\hat{\alpha}_1, \hat{\alpha}_2$ to determine $P_0$ and $P_1$, which is BQP-complete.

\end{proof}

\section{Provable guarantee for the Ansatz tree approach}
\label{sec:adapproof}

We provide proof for the following propositions.
This is a simple extension and variation of known results on using polynomial approximation of $1/x$ to solve linear systems of equations \cite{childs2017quantum}.

\begin{prop}[Same as Proposition~\ref{prop:Ansatzfullproof}]
For a fixed $\epsilon \in(0,1)$, $A = \sum_{k=1}^{K_A} \beta_k U_k$ with $\rho(A) \leq 1, \rho(A^{-1}) \leq \kappa$, and $b$ with $b^\dagger b = 1$. By selecting all nodes $\{\ket{u_1}, \ldots, \ket{u_m}\}$ on the Ansatz tree with depth at most $O(\kappa \log(\kappa / \epsilon))$, we have
$$\min_{\alpha_1, \ldots, \alpha_m \in \mathbb{R}} L_R \left(\sum_i \alpha_i \ket{u_i} \right)  \leq \min_{x \in \mathbb{C}^{2^n}} L_R(x)+ \epsilon.$$
\end{prop}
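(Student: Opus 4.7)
\smallskip

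\noindent\textbf{Proof proposal.} The plan is to exploit the algebraic structure of the Ansatz tree to embed the problem into a polynomial approximation question, and then to invoke the standard polynomial approximation of $1/x$ on a spectral interval bounded away from zero.

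First I would observe that the span of the nodes of the Ansatz tree up to depth $d$ contains every state of the form $A^{j}\ket{b}$ for $0 \le j \le d$. Indeed, by Assumption~\ref{assumeAUnitary},
\begin{equation*}
A^{j}\ket{b} \;=\; \sum_{k_{1},\ldots,k_{j}=1}^{K_{A}} \beta_{k_{1}}\cdots\beta_{k_{j}}\, U_{k_{1}}\cdots U_{k_{j}}\ket{b},
\end{equation*}
which is a linear combination of depth-$j$ tree nodes. Consequently, for any univariate polynomial $p$ of degree at most $d$, the vector $p(A)\ket{b}$ lies in $\mathrm{span}\{\ket{u_{1}},\ldots,\ket{u_{m}}\}$, so it is a feasible candidate for the linear combination. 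Hence it suffices to exhibit a polynomial $p$ of degree $d = O(\kappa\log(\kappa/\epsilon))$ with $\norm{A\,p(A)\ket{b} - \ket{b}}_{2}^{2} \le \epsilon$, since this quantity equals $L_{R}(p(A)\ket{b})$ and the minimum of $L_{R}$ is $0$ at $x^{\ast}=A^{-1}\ket{b}$.

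Next I would diagonalize the Hermitian $A = \sum_{j}\lambda_{j}\ket{v_{j}}\bra{v_{j}}$ and expand $\ket{b} = \sum_{j}c_{j}\ket{v_{j}}$ with $\sum_{j}|c_{j}|^{2}=1$. Then
\begin{equation*}
\norm{A\,p(A)\ket{b} - \ket{b}}_{2}^{2} \;=\; \sum_{j}|c_{j}|^{2}\bigl(\lambda_{j}\,p(\lambda_{j}) - 1\bigr)^{2} \;\le\; \max_{\lambda \in S}\bigl(\lambda\,p(\lambda) - 1\bigr)^{2},
\end{equation*}
where $S \subseteq [-1,-1/\kappa]\cup[1/\kappa,1]$ contains the eigenvalues of $A$ (using $\rho(A)\le 1$ and $\rho(A^{-1})\le \kappa$). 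Since $|\lambda|\ge 1/\kappa$ on $S$, the right-hand side is at most $\max_{\lambda\in S}|p(\lambda) - 1/\lambda|^{2}$, scaled by at most $1$. It therefore suffices to construct a polynomial $p$ of degree $O(\kappa\log(\kappa/\epsilon))$ satisfying $|p(\lambda) - 1/\lambda| \le \sqrt{\epsilon}$ uniformly on $[-1,-1/\kappa]\cup[1/\kappa,1]$; equivalently, an additive approximation of the function $1/x$ on that interval.

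For the final step I would invoke the polynomial approximation of $1/x$ from Childs, Kothari, and Somma \cite{childs2017quantum}, which provides an explicit polynomial of degree $d = O(\kappa\log(\kappa/\epsilon))$ that approximates $1/x$ to additive error $\epsilon$ on $[-1,-1/\kappa]\cup[1/\kappa,1]$. Feeding this polynomial into the reduction above (with a rescaled error tolerance of $\sqrt{\epsilon}/\kappa$, which only enlarges the degree by a constant factor inside the logarithm) yields the claimed bound $L_{R}(p(A)\ket{b}) \le \epsilon$, and this candidate lies in the span of the depth-$d$ tree nodes. The main technical obstacle is bookkeeping the error budget in the reduction, in particular tracking how the factor of $\kappa$ arising from $|\lambda|\ge 1/\kappa$ propagates into the required approximation accuracy of $1/x$ without inflating the degree beyond $O(\kappa\log(\kappa/\epsilon))$; this is standard but must be done carefully so that the rescaling absorbs into the $\log$ rather than into the linear factor $\kappa$.
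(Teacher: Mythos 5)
Your proposal is correct and follows essentially the same route as the paper's proof: note that the depth-$d$ tree span contains $p(A)\ket{b}$ for any degree-$d$ polynomial, then invoke the Childs--Kothari--Somma polynomial approximation of $1/x$ on $[-1,-1/\kappa]\cup[1/\kappa,1]$ (the paper uses the operator-norm bound $\norm{p(A)-A^{-1}}\le\epsilon$ rather than your eigenbasis expansion, which is the same estimate). Your closing worry about a $\kappa$ factor is moot, since $\rho(A)\le 1$ means the factor $\lambda^2$ is at most $1$, exactly as you noted earlier; requesting accuracy $\sqrt{\epsilon}$ (or even $\sqrt{\epsilon}/\kappa$) only changes the degree inside the logarithm.
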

\begin{proof}
By including all nodes $\{\ket{u_1}, \ldots, \ket{u_m} \}$ on the Ansatz tree with depth at most $O(\kappa \log(\kappa / \epsilon))$, the subspace contains
$p(A) b$ for any polynomial $p(\cdot)$ with degree at most $O(\kappa \log(\kappa / \epsilon))$.
In Lemma~14 in \cite{childs2017quantum}, it was shown that there exists a set of constants $p_j, \forall j = 0, \ldots, j_0$ such that $p(z) = \sum_{j = 0}^{j_0} p_j z^j$ is $\epsilon$-close to $z^{-1}$ in the domain $D_{\kappa} = [-1, -1/\kappa] \cup [1/\kappa, 1]$, where $j_0 = 2 \sqrt{\kappa^2 \log(2 \kappa / \epsilon) \log(8 \kappa^2 \log(2 \kappa / \epsilon) / \epsilon)} + 1 = O(\kappa \log(\kappa / \epsilon))$.
Using the condition that $\rho(A) \leq 1$ and $\rho(A^{-1}) \leq \kappa$, we know that all the eigenvalues of $A$ lie in the domain $D_{\kappa}$.
Because $p(z)$ is $\epsilon$-close to $z^{-1}$ in the domain $D_{\kappa}$, we thus have $\norm{p(A) - A^{-1}} \leq \epsilon$.
This implies that $\norm{p(A)b - A^{-1}b} \leq \norm{p(A) - A^{-1}} \leq \epsilon$.
Hence there exists a set of combination parameters $\hat{\alpha}_1, \ldots, \hat{\alpha}_m \in \mathbb{R}$ set according to the coefficients $p_j$ in the polynomial $p(x)$, such that $\hat{x} = \sum_i \hat{\alpha}_i \ket{u_i}$ satisfies $\norm{\hat{x} - A^{-1}b}_2 \leq \epsilon$.
So
$$\min_{\alpha_1, \ldots, \alpha_m \in \mathbb{R}} \Big\Vert A \Big(\sum_i \alpha_i \ket{u_i}\Big) - b \Big\Vert_2^2 \leq \norm{A\hat{x} - b}_2^2 \leq \rho(A)^2 \norm{\hat{x} - A^{-1}b}_2^2 \leq \epsilon^2 = \min_{x \in \mathbb{C}^{2^n}} \norm{Ax - b}_2^2 + \epsilon^2.$$
The last equality uses the fact that $x = A^{-1} b$ satisfy $\norm{Ax - b}_2 = 0$.
Note that we actually achieve $\epsilon^2$ error, which is better than $\epsilon$ since $\epsilon < 1$.
\end{proof}

\begin{prop}[Same as Proposition~\ref{prop:Ansatzproof}]
For a fixed $\epsilon \in (0,1)$, and $A = \sum_{k=1}^{K_A} \beta_k U_k$ with $\rho(A) \leq 1$. By selecting all nodes $\{\ket{u_1}, \ldots, \ket{u_m} \}$ on the Ansatz tree with depth at most $\ceil{\log(1 / 2 \epsilon) / \log(1 / (2 - \sqrt{3}))}$, we have
$$\min_{\alpha_1, \ldots, \alpha_m \in \mathbb{R}} L_T\left(\sum_i \alpha_i \ket{ u_i}\right) \leq \min_{x \in \mathbb{C}^{2^n}}L_T(x) + \epsilon.$$
\end{prop}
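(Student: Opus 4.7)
The plan is to reduce the claim to a scalar uniform-approximation problem on $[-1,1]$ and then invoke a Chebyshev/Bernstein-ellipse bound. Since $A$ is Hermitian, the gradient $\nabla L_T(x) = x + 2A^2 x - 2A\ket{b}$ vanishes uniquely at $x^{\ast} = (\mathbbm{1}+2A^2)^{-1}(2A\ket{b}) = q(A)\ket{b}$ with $q(\lambda) := 2\lambda/(1+2\lambda^2)$. Writing $M := \mathbbm{1} + 2A^2$, the bounds $\mathbbm{1} \preccurlyeq M \preccurlyeq 3\,\mathbbm{1}$ together with the exact quadratic expansion around $x^{\ast}$ give $L_T(x) - L_T(x^{\ast}) = \frac{1}{2}\norm{x-x^{\ast}}_M^2$. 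By Assumption~\ref{assumeAUnitary} and Definition~\ref{def:ansatztree}, the span of the Ansatz-tree nodes up to depth $d$ equals $\{\,p(A)\ket{b}\,:\,\deg p\leq d\,\}$, since each product $U_{i_1}\cdots U_{i_k}\ket{b}$ lies in $\mathrm{span}\{\ket{b},A\ket{b},\dots,A^d\ket{b}\}$ and conversely $A^k$ is a linear combination of such products. Thus it suffices to produce a polynomial $p$ of degree at most $d=\lceil C\log(1/2\epsilon)\rceil$ with $L_T(p(A)\ket{b}) - L_T(x^{\ast}) \leq \epsilon$.

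Next, I would reduce to a scalar approximation. Because $\sigma(A)\subseteq[-1,1]$ and $p(A)-q(A)$ commutes with $M$,
\[
L_T(p(A)\ket{b}) - L_T(x^{\ast}) \;=\; \tfrac{1}{2}\bra{b}(p(A)-q(A))^2\,M\ket{b} \;\leq\; \tfrac{3}{2}\,\norm{p-q}_{\infty,[-1,1]}^2,
\]
so it is enough to uniformly approximate $q(\lambda)=2\lambda/(1+2\lambda^2)$ on $[-1,1]$ to within $\sqrt{2\epsilon/3}$ by a polynomial of degree $d$. Now $q$ is analytic on $\mathbb{C}$ except for simple poles at $\pm i/\sqrt{2}$. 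The largest Bernstein ellipse $E_\rho$ with foci $\pm 1$ on which $q$ is analytic has semi-minor axis $(\rho-\rho^{-1})/2 = 1/\sqrt{2}$, which solves to $\rho^2 = 2+\sqrt{3}$, equivalently $\rho^{-2}=2-\sqrt{3}$. Truncating the Chebyshev expansion of $q$ at degree $d$ (Bernstein's theorem) yields $\norm{p_d-q}_\infty \leq C_0\,\rho^{-d} = C_0\,(2-\sqrt{3})^{d/2}$, so the excess loss is at most $\frac{3C_0^2}{2}(2-\sqrt{3})^d$. Setting this $\leq \epsilon$ gives the required depth $d \geq \log(3C_0^2/(2\epsilon))/\log(1/(2-\sqrt{3}))$, of the advertised form $C\log(1/2\epsilon)$ up to the constant inside the logarithm.

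The main obstacle is sharpening the constant inside the logarithm to match the stated $\log(1/2\epsilon)$ rather than $\log(O(1)/\epsilon)$, which matters because the number of nodes scales as $K_A^d$ and is very sensitive to any slack in the prefactor. A generic Bernstein argument delivers the correct exponential base $2-\sqrt{3}$ but loses in the prefactor $C_0$. I would tighten it by computing the Chebyshev coefficients of $q$ in closed form via a residue calculation at the poles $\pm i/\sqrt{2}$, which makes the decay $|c_n|\lesssim (2-\sqrt{3})^{n/2}$ explicit and lets the truncation tail be summed as a geometric series of ratio $(2-\sqrt{3})^{1/2}$, recovering the precise constant. An alternative route is Chebyshev iteration directly on $M$ (spectrum in $[1,3]$, condition number $3$), producing the textbook residual $2(2-\sqrt{3})^k$ after $k$ steps; the bookkeeping subtlety is that a degree-$(k-1)$ polynomial in $M$ applied to $2A\ket{b}$ is a degree-$(2k-1)$ polynomial in $A$, so one must verify that the squared rate $(2-\sqrt{3})^{2k}$ exactly absorbs this factor-of-two doubling to give the depth bound of order $\log(1/\epsilon)/\log(1/(2-\sqrt{3}))$ without loss.
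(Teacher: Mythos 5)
Your proposal follows essentially the same route as the paper's proof in Appendix~\ref{sec:adapproof}: reduce the excess Tikhonov loss to a scalar uniform-approximation problem for $q(\lambda)=2\lambda/(1+2\lambda^2)$ on $[-1,1]$ via the exact quadratic identity $L_T(p(A)\ket b)-\min_x L_T(x)=\frac12\bra b\,(p(A)-q(A))^2(\mathbbm 1+2A^2)\ket b\le\frac32\norm{p-q}_\infty^2$, and then obtain the decay base $2-\sqrt3$ from the Chebyshev expansion of $q$, whose closed-form coefficients $c_k=(2-\sqrt3)^k\bigl(1-\tfrac1{\sqrt3}\bigr)$ (the computation you defer to a residue argument) sum geometrically to $\eta=(2-\sqrt3)^{K_0/2}/\sqrt3$ and yield exactly the stated depth $\lceil\log(1/2\epsilon)/\log(1/(2-\sqrt3))\rceil$. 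The only blemish is your claim that the node span \emph{equals} the Krylov space $\{p(A)\ket b:\deg p\le d\}$ — products $U_{i_1}\cdots U_{i_k}\ket b$ need not lie in that Krylov space — but only the containment $\mathrm{span}\{\ket b,A\ket b,\dots,A^d\ket b\}\subseteq\mathrm{span}\{\ket{u_1},\dots,\ket{u_m}\}$ is used, so the argument is unaffected.
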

\begin{proof}
We first diagonalize $A$ to be $V D V^\dagger$, where $D$ is diagonal and $V$ is a unitary matrix.
We also set $N = 2^n$ to be the system size.
The eigenvalues of $A$ are denoted as $\lambda_i, \forall i = 1, \ldots, N$.
We set $\tilde{b} = V^\dagger \ket b$ and note that $\sum_{i=1}^{N} |\tilde{b}_i|^2 = 1$, as $\ket b$ is normalized.
We consider a rotated $x$, $\tilde{x} = V^\dagger x \in \mathbb{C}^{N}$.
Using $\tilde{x}$, the loss function $\frac{1}{2} \norm{ x}_2^2 + \norm{ Ax - \ket b }_2^2$ can be written as
$$\sum_{i=1}^N \Bigg(\frac{1}{2} |\tilde{x}_i|^2 + |\lambda_i \tilde{x}_i - \tilde{b}_i|^2 \Bigg).$$
We can minimize this expression analytically as
$$\tilde{x}_i = \frac{2\lambda_i \tilde{b}_i}{2 \lambda_i^2 + 1} \in \mathbb{C}, \forall i = 1, \ldots, N.$$
Plugging this optimal solution into the loss function yields
$$\sum_{i=1}^N |\tilde{b}_i|^2 \Bigg(\frac{1}{2} y_i^2  + (\lambda_i y_i - 1)^2 \Bigg),$$
where $y_i = 2\lambda_i / (2 \lambda_i^2 + 1) \in \mathbb{R}$.
We now consider the space of all linear combinations of $A^k \ket b, \forall k = 0, \ldots, K_0$, which is a subspace of $\mbox{span}(u_1, \ldots, u_m)$.
This space is written as $\Big\{\sum_{k=0}^{K_0} p_k A^k \ket b\Big\}$.
In this subspace, the loss function can be written as
$$\frac{1}{2} \norm{\sum_k p_k D^k \tilde{b}}_2^2 + \norm{\sum_k p_k D^{k+1} \tilde{b} - \tilde{b}}_2^2
= \sum_{i = 1}^N |\tilde{b}_i|^2 \Bigg(\frac{1}{2} \Big(\sum_k p_k \lambda_i^k\Big)^2 + \Big(\lambda_i \sum_k p_k \lambda_i^k - 1\Big)^2 \Bigg).$$
We now analyze how accurate a polynomial $\sum_k p_k x^k$ can approximate $2x / (2 x^2 + 1)$ within $[-1, 1]$.
Due to the condition that $\rho(A) \leq 1$, we only care about the range $[-1, 1]$.
The approximation can be done by performing Chebyshev decomposition of the function $x / (x^2 + 1/2)$,
$$\frac{x}{x^2 + 1 / 2} = \sum_{k=0, 1, 2, \ldots} c_k T_{2k + 1}(x),$$
where $T_{2k+1}(x)$ is $(2k+1)$-th Chebyshev polynomial of the first kind (which is of degree $2k+1$). And we have the following recursive formula for $c_k, \forall k = 0, 1, 2, \ldots$,
$$c_k = (-2 + \sqrt{3})^{k} \Big(1 - \frac{1}{\sqrt{3}}\Big).$$
If we truncate the Chebyshev expansion at $\lfloor (K_0 - 1) / 2 \rfloor$ (the degree is at most $K_0$), then
$$\sup_{x \in [-1, 1]} \Bigg|\frac{x}{x^2 + 1/2} - \sum_{k=0}^{\lfloor (K_0 - 1) / 2 \rfloor} c_k T_{2k+1}(x) \Bigg| \leq \sum_{\lfloor (K_0 - 1) / 2 \rfloor + 1}^{\infty} |c_k| \leq \frac{(2 - \sqrt{3})^{K_0 / 2} \Big(1 - \frac{1}{\sqrt{3}}\Big)}{\sqrt{3}-1} \equiv \eta.$$
By choosing $p_k$ according to $c_k$ and the Chebyshev polynomial coefficients, we have
$$\Big| y_i - \sum_{k=0}^{K_0} p_k \lambda_i^k \Big| \leq \eta, \forall i = 1, \ldots, N.$$
Then using $\frac{1}{2} z^2 + (\lambda_i z - 1)^2 = \frac{1}{2} y_i^2 + (\lambda_i y_i - 1)^2 + (\frac{1}{2} + \lambda_i^2) (z - y_i)^2, \forall z \in \mathbb{R}$,
we have
$$\frac{1}{2} \Big(\sum_k p_k \lambda_i^k\Big)^2 + \Big(\lambda_i \sum_k p_k \lambda_i^k - 1\Big)^2 \leq \frac{1}{2} y_i^2  + (\lambda_i y_i - 1)^2 + \Big(\frac{1}{2} + \lambda_i^2\Big) \eta^2, \forall i = 1, \ldots, N.$$
Using the fact that $|\lambda_i| \leq 1$ and $\sum_{i} |\tilde{b}_i|^2 = 1$, we have
$$\sum_{i = 1}^N |\tilde{b}_i|^2 \Bigg(\frac{1}{2} \Big(\sum_k p_k \lambda_i^k\Big)^2 + \Big(\lambda_i \sum_k p_k \lambda_i^k - 1\Big)^2 \Bigg) \leq \min_{x \in \mathbb{C}^{2^n}} \Bigg( \frac{1}{2} \norm{ x}_2^2 + \norm{ Ax - b }_2^2 \Bigg) + \frac{3}{2} \eta^2.$$
Now we want $\frac{3}{2}\eta^2 \leq \epsilon$ by choosing a large enough $K_0$.
Using $\eta = \frac{(2 - \sqrt{3})^{K_0 / 2}}{\sqrt{3}}$, we need $(2 - \sqrt{3})^{K_0} \leq 2 \epsilon$.
By choosing $K_0\geq \log(1 / 2\epsilon) / \log(1 / (2 - \sqrt{3}))$, we are guaranteed to have
$$\min_{\alpha_1, \ldots, \alpha_m \in \mathbb{R}} \Big( \frac{1}{2} \Big\Vert \sum_i \alpha_i u_i \Big\Vert_2^2 + \Big\Vert A \Big(\sum_i \alpha_i u_i \Big) - b \Big\Vert_2^2 \Big) \leq \min_{x \in \mathbb{C}^{2^n}} \Big( \frac{1}{2} \norm{x}_2^2 + \norm{ Ax - b }_2^2 \Big) + \epsilon.$$
\end{proof}

\begin{prop}[Same as Proposition~\ref{prop:treegradient}]
\label{prop:treegradientappendix}
Consider a subspace $S$ with $m$ states $\ket{\psi_1}, \ldots, \ket{\psi_m}$, and the optimal $x^S = \sum_i \alpha^*_i \ket{\psi_i}$.
If the gradient overlap of $\ket{\psi^*}$ is
$$g = \left|\bra{\psi^*} \nabla L_R(x^S)\right| = \left|2 \sum_{\ket{\psi_i} \in S} \alpha^*_{i} \bra{\psi^*} A^2 \ket{\psi_i} - 2 \bra{\psi^*} A \ket b \right|,$$
then the loss function will have a guaranteed decrease in the next round given by
$$\min_{\alpha_1, \ldots, \alpha_{m+1} \in \mathbb{C}}  L_R \left(\sum_{\ket{\psi_i} \in S \cup \{\ket{\psi^*}\}} \alpha_i \ket{\psi_i}\right) \leq \min_{\alpha_1, \ldots, \alpha_m  \in \mathbb{C}}  L_R\left(\sum_{\ket{\psi_i} \in S} \alpha_i \ket{\psi_i}\right) - \frac{g^2}{4}.$$
\end{prop}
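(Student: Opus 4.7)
The plan is to exhibit a specific point in the enlarged subspace whose loss already beats $L_R(x^S)$ by $g^2/4$; since $x^{S\cup\{\ket{\psi^*}\}}$ is the minimizer over that subspace, it can only do better. The natural candidate is $x(t) := x^S + t\ket{\psi^*}$, parametrized by a single complex scalar $t$. Since $x^S$ is already optimal in $\mathrm{span}(S)$ and $\ket{\psi^*}$ is the only new direction, this one-parameter family captures all the improvement we can hope to exploit from the proposition's hypothesis.

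First, I would expand $L_R(x^S + t\ket{\psi^*})$ and use Hermiticity of $A$ (so $A^\dagger A = A^2$) to get
\begin{equation*}
L_R(x^S + t\ket{\psi^*}) = L_R(x^S) + 2\,\realpar{t^* c} + |t|^2\, M,
\end{equation*}
where $c := \bra{\psi^*}A^2 x^S - \bra{\psi^*}A\ket{b}$ and $M := \bra{\psi^*}A^2\ket{\psi^*} \geq 0$. The key identification is that $c = \tfrac{1}{2}\bra{\psi^*}\nabla L_R(x^S)$, so $|c| = g/2$ by hypothesis.

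Next, I would minimize over $t\in\mathbb{C}$. Choosing the phase of $t$ so that $t^* c$ is real and negative reduces this to the real one-variable problem $\min_{r\geq 0}\bigl(-2r|c| + r^2 M\bigr)$, whose minimum is $-|c|^2/M$ attained at $r = |c|/M$. (If $M=0$ then the linear term forces $L_R\to -\infty$ along $t$, which is impossible for a nonnegative loss, so in that degenerate case $c$ must also vanish and $g=0$, contradicting the hypothesis.) Thus
\begin{equation*}
L_R(x^{S\cup\{\ket{\psi^*}\}}) \;\leq\; \min_{t\in\mathbb{C}} L_R(x^S + t\ket{\psi^*}) \;=\; L_R(x^S) - \frac{g^2/4}{\bra{\psi^*}A^2\ket{\psi^*}}.
\end{equation*}

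Finally, I would close the bound using $\rho(A)\leq 1$: since $A$ is Hermitian this gives $A^2 \preccurlyeq \mathbbm 1$, and because $\ket{\psi^*}$ is a normalized quantum state, $\bra{\psi^*}A^2\ket{\psi^*} \leq 1$. Substituting yields $L_R(x^{S\cup\{\ket{\psi^*}\}}) \leq L_R(x^S) - g^2/4$, as claimed. There is no real obstacle here: the proof is a one-parameter convex quadratic minimization; the only point requiring care is tracking the phase of $t$ correctly so that the linear term is realized with its full magnitude $|c|$, and remembering to invoke $\rho(A)\leq 1$ to discharge the denominator $\bra{\psi^*}A^2\ket{\psi^*}$.
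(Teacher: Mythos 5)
Your proof is correct and follows essentially the same route as the paper's: both expand $L_R(x^S + t\ket{\psi^*})$ as a quadratic in the single complex coefficient $t$, minimize it (the paper picks $\alpha = -\overline{\bra{\psi^*}\nabla L_R(x^S)}/2\bra{\psi^*}A^\dagger A\ket{\psi^*}$, which is exactly your optimal $t$), and then use $\rho(A)\leq 1$ to bound $\bra{\psi^*}A^2\ket{\psi^*}\leq 1$. Your explicit treatment of the degenerate case $\bra{\psi^*}A^2\ket{\psi^*}=0$ is a small extra care the paper omits, but otherwise the arguments coincide.
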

\begin{proof}
Consider the loss function on $x^S + \alpha \ket{\psi^*}$, where $\alpha \in \mathbb{C}$.
Because $x^S$ is a linear combination of $\ket{\psi_i} \in S$ and $L_R(x) = \norm{Ax - b}_2^2$, we know that the optimal combination of $S \cup \{\ket{\psi^*}\}$ satisfies
$$\min_{\alpha_1, \ldots, \alpha_{m+1} \in \mathbb{C}}  L_R \left(\sum_{\ket{\psi_i} \in S \cup \{\ket{\psi^*}\}} \alpha_i \ket{\psi_i}\right) \leq L_R(x^S + \alpha \ket{\psi^*}),$$
for any $\alpha \in \mathbb{C}$.
By expanding $L_R(x^S + \alpha \ket{\psi^*})$, we have ($\overline \alpha$ denotes the complex conjugated $\alpha$)
$$L_R(x^S + \alpha \ket{\psi^*}) = L_R(x^S) + |\alpha|^2 \bra{\psi^*} A^\dagger A \ket{\psi^*} + \realpar{\overline \alpha \bra{\psi^*} \nabla L_R(x^S)}.$$
By selecting $\alpha = - \overline{\bra{\psi^*} \nabla L_R(x^S)} / 2 \bra{\psi^*} A^\dagger A \ket{\psi^*}$, we have
$$L_R(x^S + \alpha \ket{\psi^*}) = L_R(x^S) - \frac{|\bra{\psi^*} \nabla L_R(x^S)|^2}{4 \bra{\psi^*} A^\dagger A \ket{\psi^*}} \leq \min_{\alpha_1, \ldots, \alpha_m  \in \mathbb{C}}  L_R\left(\sum_{\ket{\psi_i} \in S} \alpha_i \ket{\psi_i}\right)  - \frac{g^2}{4}.$$
The last inequality uses the assumption that the spectral radius of $A$ is no greater than $1$.
\end{proof}

\section{Experiments on solving linear systems using Agnostic Ans\"atze}
\label{appAgnostic}

\begin{figure}[t]
\centering
\includegraphics[width=1.0\textwidth]{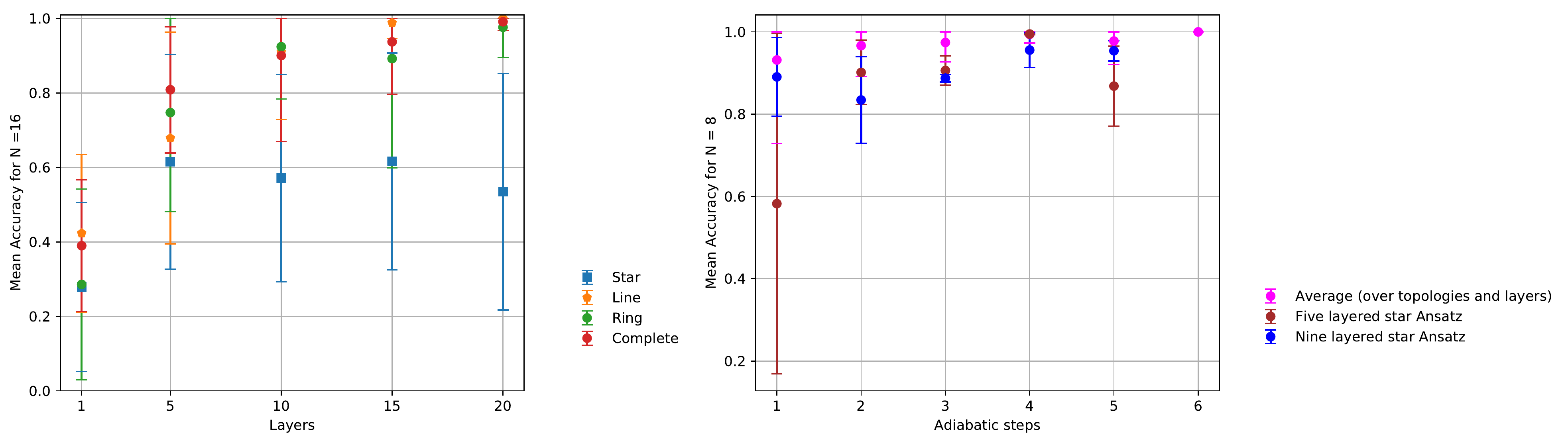}
\caption{Numerical experiments on solving linear systems using Agnostic Ans\"atze.
Left: The mean accuracy versus circuit layer depth for various Ans\"atze employed to solve linear systems for $N=16$ is shown. The mean accuracy goes to unity for line graph, ring graph and complete graph Ans\"atze with a $20$-layer circuit. The star-graph Ansatz performs performs worst and eventually levels at accuracy $0.6$.
Right: We implement the adiabatic-assisted VQE (AAVQE) approach for $N=8$ and see improvement over standard VQE (adiabatic steps $= 1$) over an average of all Ans\"atze and number of layers. Employing more adiabatic steps also improves the mean accuracy. For example, the mean accuracy for the five and nine layered  star Ans\"atze improves to 1.0 as we increase the number of adiabatic steps from $1$ to $6$.
The dots represent the mean accuracy and the bars represent the spread corresponding to a standard deviation with the upper and lower cutoffs as $1$ and $0$ respectively.}
\label{fig:comp}
\end{figure}

We discuss our experiments on solving linear systems using the Agnostic Ansatz in detail.
We focus on solving the real-valued version of Eq.~\eqref{linear}. By construction, our Agnostic Ansatz is constrained to explore the solution vector in the real subspace of the appropriate Hilbert space.
The two real gates we use are the single-qubit rotation around the $y$-axis for every qubit with tunable angle (the variational parameter) and the controlled NOT (CNOT) gate. Thus, a single layer of our $n$-qubit variational circuit consists of $n$ variational parameters and a certain pattern of CNOT gates.
We implement various types of Agnostic Ans\"atze depending on how the CNOT gates are applied. The topology of a quantum computer favours a particular arrangement of CNOT gates over another and our exploration for different Ans\"atze is motivated by the same. Here, we enumerate the different variational Ans\"atze. We label our qubits from $1$ to $n$ and denote the same by $[1, \cdots,  n]$. The CNOT gate between qubit $i$ (control) and $j$ (target) will be denoted by $C(i,j)$.
\begin{enumerate}
\item Star Ansatz: The qubit numbered $1$ is always the control, while target ranges over all $i \in [2, \cdots, n]$. In other words, we apply $C(1,i)$ for all $i \in [2,\cdots,n]$.
\item Line Ansatz: The Ansatz contains $C(i,i+1) $ for every $ i \in [1,\cdots,n-1]$.
\item Ring Ansatz:  It is similar to the line Ansatz with the difference that there is an extra CNOT gate at the boundary, i.e., $C(n,1)$.
\item Complete graph Ansatz: We implement $C(i,j)$ for every $ i,j  \in [1,\cdots,n]$ such that $i \neq j$.
\end{enumerate}

We have conducted numerical experiments on Rigetti quantum virtual machine where the linear systems are generated randomly over different system sizes ($N = 2, 4, 8, 16$).
Some of the numerical results can be seen in Figure~\ref{fig:comp}.
The figure of merit is mean accuracy, which is the average fidelity of the output vector and the solution over $100$ independent runs.
In Figure~\ref{fig:comp} (Left), we present the use of standard VQE for solving linear systems with $N=16$.
We can see a rise in overall performance as we increase the number of layers. The mean accuracy goes to unity for most CNOT gate patterns except for the star graph.
The performance for the star graph starts improving but soon levels at mean accuracy of $0.6$.

In Figure~\ref{fig:comp} (Right), we show the adiabatic-assisted VQE approach for $N=8$.
The purple/magenta data points are an average over all the topologies and different layers.
We can see an improvement by roughly $10\%$ using AAVQE (adiabatic steps $ = 6$) over standard VQE (adiabatic steps $ = 1$).
We can also see that as we increase the number of adiabatic steps, the performance of AAVQE becomes better.
The mean accuracy for all settings considered here becomes very close to unity as we increase the number of adiabatic steps to $6$. Furthermore, the standard deviation around the mean accuracy goes below $10^{-2}$.
As such, most of the settings were able to achieve the accuracy of close to $1.0$ in AAVQE, which is not achieved with standard VQE (adiabatic steps $ = 1$).
\end{document}